\definecolor{Red}{rgb}{1,0.,0.}
\newtheorem{theorem}{Theorem}[section]
\newtheorem{lemma}{Lemma}[section]
\newtheorem{proposition}{Proposition}[section]
\newtheorem{remark}{Remark}[section]
\newtheorem{hypothesis}[theorem]{Hypothesis}
\newcommand{\gF}{{\mathfrak F}}  % Fock space
\newcommand{\gD}{{\mathfrak D}}
\newcommand{\cD}{{\mathcal{D}}} %%% domain of
\newcommand{{\adat}}{\mathrm{ad}_{A_t}}
\def\C{{\mathbb{C}}} % complex numbers
\def\N{{\mathbb{N}}} % integers
\def\R{{\mathbb{R}}} % real numbers
\def\vp{\varphi}
\newcommand{\Aa}{{a_\epsilon}} %annihilation operator for bosons
\newcommand{\Ba}{b_{\ell, \epsilon}} %annihilation op. for massive particles
\newcommand{\Ca}{c_{\ell, \epsilon}} % annihilation op. for neutrinos
\newcommand{\Ac}{{a^*_\epsilon}} %creation operator for bosons
\newcommand{\Bc}{b^*_{\ell, \epsilon}} %creation op. for massive particles
\newcommand{\Cc}{c^*_{\ell, \epsilon}} % creation op. for neutrinos
\def\1{{\mathbf{1}}}
\renewcommand\d{\mathrm{d}}
\renewcommand\Re{\mathrm{Re}}  %  partie reelle
\renewcommand\Im{\mathrm{Im}}  %  partie imaginaire
\newcommand{\kt}{\tilde{K}} % K tilde dans la notation initiale
\newcommand{\ktt}{\tilde{\tilde{K}}} % K double tilde dans la notation initiale
\newcommand{\tdd}{\tilde{D}_\delta}
\newcommand{\cb}{C_{\beta\eta}}
\newcommand{\bb}{B_{\beta\eta}}
\newcommand{\ct}{\tilde{C}_{\beta\eta}}
\newcommand{\bt}{\tilde{B}_{\beta\eta}}
\newcommand{\ant}{A_n^{(\tau)}}
\newcommand{\cnt}{\chi_n^{(\tau)}}
\newcommand{\lbr}{\langle b \rangle} % neutrino position operator
\newcommand{\spec}{\mathrm{Spec}} % spectrum
\newcommand{\sig}{{\ \sigma}} %% symbole asigma avec un espace pour les exposants d'ensembles
\begin{document}

\title[Mathematical model of the weak interaction]{Spectral theory for a mathematical model of the
weak interaction: The decay of the intermediate vector bosons
$W^\pm$, II.}

\author[W.H.~Aschbacher]{Walter H. Aschbacher}
\author[J.-M.~Barbaroux]{Jean-Marie Barbaroux}
\author[J.~Faupin]{J\'er\'emy Faupin}
\author[J.-C.~Guillot]{Jean-Claude Guillot}

\address[W.H.~Aschbacher] {Centre de Math\'ematiques Appliqu\'ees, UMR 7641, \'Ecole
Polytechnique-CNRS, 91128 Palaiseau Cedex, France}
\email{walter.aschbacher@polytechnique.edu}

\address[J.-M.~Barbaroux]{Centre de Physique Th\'eorique, Luminy Case 907, 13288
 Marseille Cedex~9, France and D\'epartement de Math\'ematiques,
 Universit\'e du Sud Toulon-Var, 83957 La
 Garde Cedex, France}
\email{barbarou@univ-tln.fr}

\address[J.~Faupin] {Institut de Math\'ematiques de Bordeaux, UMR-CNRS 5251,
Universit\'e de Bordeaux I, 351 cours de la Lib\'eration, 33405
Talence Cedex, France}
\email{jeremy.faupin@math.u-bordeaux1.fr}

\address[J.-C.~Guillot]{Centre de Math\'ematiques Appliqu\'ees, UMR 7641, \'Ecole
Polytechnique-CNRS, 91128 Palaiseau Cedex, France}
\email{jean-claude.guillot@polytechnique.edu}

\maketitle
%%%%%%%%%%%%%%%%%%%%%%%%%%%%%%%%%%%%%%%%%%%%%%%%%%%%%%%%%%%%%
%%%%%%%               END OF TITLE     %%%%%%%%%%%%%%%%%%%%%%
%%%%%%%%%%%%%%%%%%%%%%%%%%%%%%%%%%%%%%%%%%%%%%%%%%%%%%%%%%%%%

\begin{center}
\textit{In memory of Pierre Duclos.}
\end{center}

%%%%%%%%%%%%%%%%%%%%%%%%%%%%%%%%%%%%%%%%%%%%%%%%%%%%%%%%%
%%%%%%%%%%     ABSTRACT    %%%%%%%%%%%%%%%%%%%%%%%%%%%%%%
%%%%%%%%%%%%%%%%%%%%%%%%%%%%%%%%%%%%%%%%%%%%%%%%%%%%%%%%%

\begin{abstract}
We do the spectral analysis of the Hamiltonian for the weak
leptonic decay of the gauge bosons $W^\pm$. Using Mourre theory,
it is shown that the spectrum between the unique ground state and
the first threshold is purely absolutely continuous. Neither sharp
neutrino high energy cutoff nor infrared regularization are
assumed.
\end{abstract}

%%%%%%%%%%%%%%%%%%%%%%%%%%%%%%%%%%%%%%%%%%%%%%%%%%%%%%%%%%%%%%%
%%%%%%%%%%%%%%%%%%%%%%%%%%%%%%%%%%%%%%%%%%%%%%%%%%%%%%%%%%%%%%%
%%%%%%%%%%     INTRODUCTION                %%%%%%%%%%%%%%%%%%%%
%%%%%%%%%%%%%%%%%%%%%%%%%%%%%%%%%%%%%%%%%%%%%%%%%%%%%%%%%%%%%%%
%%%%%%%%%%%%%%%%%%%%%%%%%%%%%%%%%%%%%%%%%%%%%%%%%%%%%%%%%%%%%%%

\section{Introduction}
We study a mathematical model for the weak decay of the
intermediate vector bosons $W^\pm$ into the full family of
leptons. The full family of leptons involves the electron $e^-$
and the positron $e^+$, together with the associated neutrino
$\nu_e$ and antineutrino $\bar\nu_e$, the muons $\mu^-$ and
$\mu^+$ together with the associated neutrino $\nu_\mu$ and
antineutrino $\bar\nu_\mu$ and the tau leptons $\tau^-$ and
$\tau^+$ together with the associated neutrino $\nu_\tau$ and
antineutrino $\bar\nu_\tau$.
The model is patterned according to the Standard Model in Quantum
Field Theory (see \cite{GreinerMuller1989, WeinbergII2005}).

A representative and well-known example of this general process is
the decay of the gauge boson $W^-$ into an electron and an
antineutrino of the electron that occurs in $\beta$-decay,
\begin{equation}\label{eq:ivb}
   W^-\rightarrow e^- +{\bar \nu}_e.
\end{equation}
If we include the corresponding
antiparticles in the process \eqref{eq:ivb},  the interaction described in the Schr\"odinger
representation is formally given by (see
\cite[(4.139)]{GreinerMuller1989} and
\cite[(21.3.20)]{WeinbergII2005})
\begin{equation}\nonumber
  I = \int \d^3 \!x\, \overline{\Psi_e}(x) \gamma^\alpha
  (1-\gamma_5)\Psi_{\nu_e}(x) W_\alpha(x) + \int \d^3\!x\,
  \overline{\Psi_{\nu_e}}(x) \gamma^\alpha (1-\gamma_5)
  \Psi_e(x) W_\alpha(x)^* ,
\end{equation}
where $\gamma^\alpha$, $\alpha=0,1,2,3$, and $\gamma_5$ are the
Dirac matrices, $\Psi_.(x)$ and $\overline{\Psi_.}(x)$ are the
Dirac fields for $e_\pm$, $\nu_e$, and $\bar\nu_e$, and $W_\alpha$
are the boson fields (see \cite[\S 5.3]{WeinbergI2005}) given by
\begin{equation}\nonumber
\begin{split}
 %%%%%%%%%%%%%%%%%%%%%%%%%%%%%%%%%%%%%%%%%%%%%%%%%%%%%%%%%%%%%%%
 \Psi_e(x) = & {(2\pi)}^{-\frac32}\!\!\! \sum_{s=\pm\frac12}\!
 \int \d^3\!p
 \Big(
 b_{e,+}(p,s) \frac{u(p,s)}{(2(|p|^2\! +\! m_e^2)^\frac12)^\frac12} \mathrm{e}^{ip.x}\\
 &
 \quad\quad\quad\quad\quad+ b_{e,-}^* (p,s)
 \frac{v(p,s)}{(2(|p|^2\! +\! m_e^2)^\frac12)^\frac12}
 \mathrm{e}^{- i p .x}
 \Big), \\
 %%%%%%%%%%%%%%%%%%%%%%%%%%%%%%%%%%%%%%%%%%%%%%%%%%%%%%%%%%%%%%%
 \Psi_{\nu_e}(x) = &{(2\pi)}^{-\frac32}\!\!\! \sum_{s=\pm\frac12}\!
 \int \d^3\!p\, \Big(c_{e,+}(p,s) \frac{u(p,s)}{(2|p|)^\frac12}
 \mathrm{e}^{ip.x}
 + c_{e,-}^* (p,s) \frac{v(p,s)}{(2|p|)^\frac12} \mathrm{e}^{- i p .
 x}\Big)\, , \\
 %%%%%%%%%%%%%%%%%%%%%%%%%%%%%%%%%%%%%%%%%%%%%%%%%%%%%%%%%%%%%%%
 \overline{\Psi_e}(x) = &\Psi_e(x)^\dagger \gamma^0\, ,\quad
 \overline{\Psi_{\nu_e}}(x) = \Psi_{\nu_e}(x)^\dagger \gamma^0\, ,
\end{split}
\end{equation}
and
\begin{equation}\nonumber
\begin{split}
 W_\alpha(x) = {(2\pi)}^{-\frac32} \sum_{\lambda=-1,0,1}
 \int \frac{\d^3\!k}{(2 (|k|^2 \!+\! m_W^2)^\frac12)^\frac12}
 & \Big(\epsilon_\alpha(k,\lambda)
 a_+(k,\lambda)
 \mathrm{e}^{i k.x} \\
 & + \epsilon_\alpha^*(k,\lambda) a_-^*(k,\lambda)
 \mathrm{e}^{- i k.x}\Big)\, .
 \end{split}
\end{equation}

Here $m_e>0$ is the mass of the electron and $u(p,s)/(2(|p|^2\!
+\! m_e^2)^{1/2})^{1/2}$ and $v(p,s)/(2(|p|^2\! +\!
m_e^2)^{1/2})^{1/2}$ are the normalized solutions to the Dirac
equation (see \cite[Appendix]{GreinerMuller1989}), $m_W>0$ is the
mass of the bosons $W^\pm$ and the vectors
$\epsilon_\alpha(k,\lambda)$ are the polarizations of the massive
spin~1 bosons (see \cite[Section~5.2]{WeinbergI2005}), and as
follows from the Standard Model, neutrinos and antineutrinos are
considered to be massless particles.

The operators $b_{e,+}(p,s)$ and $b_{e,+}^*(p,s)$ (respectively
$c_{\nu_e,+}(p,s)$ and $c_{\nu_e,+}^*(p,s)$), are the annihilation
and creation operators for the electrons (respectively for the
neutrinos associated with the electrons), satisfying the
anticommutation relations. The index $-$ in $b_{e,-}(p,s)$,
$b_{e,-}^*(p,s)$, $c_{\nu_e,-}(p,s)$ and $c_{\nu_e,-}^*(p,s)$ are
used to denote the annihilation and creation operators of the
corresponding antiparticles.
The operators $a_+(k,\lambda)$ and $a_+^*(k,\lambda)$
(respectively $a_-(k,\lambda)$ and $a_-^*(k,\lambda)$) are the
annihilation and creation operators for the bosons $W^-$
(respectively $W^+$) satisfying the canonical commutation
relations.

If one considers the full interaction describing the decay of the
gauge bosons $W^\pm$ into leptons
(see \cite[(4.139)]{GreinerMuller1989}) and if one formally expands
this interaction with respect to products of creation and
annihilation operators, we are left with a finite sum of terms
associated with kernels of the form
\begin{equation}\nonumber
  \delta(p_1 + p_2 - k) g(p_1,\ p_2,\, k)\ .
\end{equation}
The $\delta$-distributions that occur here shall be approximated
by square integrable functions. Therefore, in this article, the
interaction for the weak decay of $W^\pm$ into the full family of
leptons will be described in terms of annihilation and creation
operators together with kernels which are square integrable with
respect to momenta (see \eqref{eq:hypothese-1} and
\eqref{eq:HI-quadraticform-1}-\eqref{eq:HI-quadraticform-3}).

Under this assumption, the total Hamiltonian, which is the sum of
the free energy of the particles (see \eqref{eq:def-free}) and of
the interaction, is a well-defined self-adjoint operator in the
Fock space for the leptons and the vector bosons
(Theorem~\ref{thm:self-adjointness}). This allows us to study its
spectral properties.

Among the four fundamental interactions known up to now, the weak
interaction does not generate bound states, which is not the case
for the strong, the electromagnetic and the gravitational interactions.
Thus we are expecting that the spectrum of the Hamiltonian
associated with every model of weak decays is purely absolutely
continuous above the ground state energy.

With additional assumptions on the kernels that are fulfilled by
the model described in theoretical physics, we can prove
(Theorem~\ref{thm:main-1}; see also
\cite[Theorem~3.3]{BarbarouxGuillot2009}) that the total
Hamiltonian has a unique ground state in  Fock space for a
sufficiently small coupling constant, corresponding to the dressed
vacuum. The strategy for proving existence of a unique ground
state dates back to the early works of Bach, Fr\"ohlich, and Sigal
\cite{Bachetal1999S} and Griesemer, Lieb, and Loss \cite{GLL}, for
the Pauli-Fierz model of non-relativistic QED. Our proofs follow
these techniques as adapted to a model of quantum
electrodynamics \cite{Barbarouxetal2004a, Barbarouxetal2004,
DimassiGuillot} and a model of the Fermi weak interactions
\cite{Amouretal2007}.

Moreover, under natural regularity assumptions on the kernels, we
establish a Mourre estimate (Theorem~\ref{thm:Mourre}) and a
limiting absorption principle (Theorem~\ref{thm:LAP}) for any
spectral interval above the energy of the ground state and below
the mass of the electron, for small enough coupling constants. As
a consequence, the spectrum between the unique ground state and
the first threshold is shown to be purely absolutely continuous
(Theorem~\ref{thm:main-2}).

To achieve the spectral analysis above the ground state energy,
our methods are taken largely from \cite{Bachetal2006},
\cite{Frohlichetal2008}, and \cite{Chenetal2009}. More precisely, we begin with approximating the total Hamiltonian $H$ by a cutoff Hamiltonian $H_\sigma$ which has the property that the interaction between the massive particles and the neutrinos or antineutrinos of energies $\le \sigma$ has been suppressed. The restriction of $H_\sigma$ to the Fock space for the massive particles together with the neutrinos and antineutrinos of energies $\ge \sigma$ is 
 denoted by $H^\sigma$ in this paper. Adapting the method of \cite{Bachetal2006}, we prove that, for some suitable sequence $\sigma_n \to 0$, the Hamiltonian $H^{\sigma_n}$ has a gap of size $c\sigma_n$ in its spectrum above its ground state energy for all $n \in \mathbb{N}$. In contrast to \cite{BarbarouxGuillot2009}, we do not require a sharp neutrino high energy cutoff here.

Next, as in \cite{Frohlichetal2008}, \cite{BarbarouxGuillot2009}
and \cite{Chenetal2009}, we use the gap  property in combination
with the conjugate operator method developed in
\cite{Amreinetal1996} and \cite{Sahbani1997} in order to establish
a limiting absorption principle near the ground state energy of
$H$. In \cite{BarbarouxGuillot2009}, the chosen conjugate operator
is the generator of dilatations in the Fock space for neutrinos
and antineutrinos. As a consequence, an infrared regularization is
assumed in \cite{BarbarouxGuillot2009} in order to be able to
implement the strategy of \cite{Frohlichetal2008}. Let us mention
that no infrared regularization is required in
\cite{Frohlichetal2008} since, for the model of non-relativistic
QED with a fixed nucleus which is studied in that paper, a unitary
Pauli-Fierz transformation can be applied with the effect of
regularizing the infrared behavior of the interaction.

In the present paper, we choose a conjugate operator which is the
generator of dilatations in the Fock space for neutrinos and
antineutrinos \emph{with a cutoff in the momentum variable}. Hence
our conjugate operator only affects the massless particles of low
energies.  A similar choice is made in \cite{Chenetal2009}, where
the Pauli-Fierz model of non-relativistic QED for a free electron
at a fixed total momentum is studied. Due to the complicated
structure of the interaction operator in this context, the authors
in \cite{Chenetal2009} make use of some Feshbach-Schur map before
proving a Mourre estimate for an effective Hamiltonian. Here we do
not need to apply such a map, and we prove a Mourre estimate
directly for $H$. Compared with \cite{Frohlichetal2008}, our
method involves further estimates, which allows us to avoid any
infrared regularization.

As mentioned before, some of the basic results of this article have
been previously stated and proved, under stronger assumptions, in
\cite{BarbarouxGuillot2009a, BarbarouxGuillot2009}.
The main achievement of this paper in comparison with \cite{BarbarouxGuillot2009} is that no sharp neutrino high
energy cutoff and no infrared regularization are assumed here.

The nature of the spectrum above the first threshold and the
scattering theory of this model remain to be studied elsewhere.\\

The paper is organized as follows. In the next section, we give a
precise definition of the Hamiltonian.
Section~\ref{S:spectral-properties} is devoted to the statements
of the main spectral properties. In
sections~\ref{S-gap}-\ref{S-LAP}, we establish the results
necessary to apply Mourre theory, namely, we derive a gap
condition, a Mourre estimate, local $C^2$-regularity of the
Hamiltonian, and a limiting absorption principle.
Section~\ref{S-proof} details the proof of
Theorem~\ref{thm:main-2} on absolute continuity of the spectrum.
Eventually, in Appendix~\ref{appendix}, we state and prove several
technical lemmata.
For the sake of clarity, all proofs in sections~\ref{S-gap} to
\ref{S-proof} and in Appendix~\ref{appendix} are given for the
particular process depicted in \eqref{eq:ivb}. The general
situation can be recovered by a straightforward generalization.

\setcounter{equation}{0}

%%%%%%%%%%%%%%%%%%%%%%%%%%%%%%%%%%%%%%%%%%%%%%%%%%%%%%%%%%%%%%%
%%%%%%%%%%%%%%%%%%%%%%%%%%%%%%%%%%%%%%%%%%%%%%%%%%%%%%%%%%%%%%%
%%%%%%%%%%     THE MODEL                   %%%%%%%%%%%%%%%%%%%%
%%%%%%%%%%%%%%%%%%%%%%%%%%%%%%%%%%%%%%%%%%%%%%%%%%%%%%%%%%%%%%%
%%%%%%%%%%%%%%%%%%%%%%%%%%%%%%%%%%%%%%%%%%%%%%%%%%%%%%%%%%%%%%%
\section{Definition of the model}\label{S2}

\setcounter{equation}{0}

According to the Standard Model, the weak decay of the intermediate bosons $W^+$ and $W^-$ involves the full family of leptons together with the bosons themselves (see \cite[Formula
(4.139)]{GreinerMuller1989} and \cite{WeinbergII2005}).
As mentioned in the Introduction, the full family of leptons consists of the electron $e^-$, the
muon $\mu^-$, the tau lepton $\tau^-$, their associated neutrinos
$\nu_e$, $\nu_\mu$, $\nu_\tau$ and all their antiparticles $e^+$,
$\mu^+$, $\tau^+$, $\bar\nu_e$, $\bar\nu_\mu$, and $\bar\nu_\tau$.
In the Standard Model, neutrinos and antineutrinos are massless
particles with helicity $-1/2$ and $+1/2$, respectively. Here we
shall assume that both neutrinos and antineutrinos have helicity
$\pm 1/2$.

The mathematical model for the weak decay of the vector bosons
$W^\pm$ is defined as follows.
The index $\ell\in\{1,2,3\}$ denotes each species of leptons:
$\ell=1$ denotes the electron $e^-$, the positron $e^+$ and the
associated neutrinos $\nu_e$, $\bar\nu_e$; $\ell=2$ denotes the
muons $\mu^-$, $\mu^+$ and the associated neutrinos $\nu_\mu$ and
$\bar\nu_\mu$; and $\ell=3$ denotes the tau-leptons and the
associated neutrinos $\nu_\tau$ and $\bar\nu_\tau$.

Let $\xi_1=(p_1,\ s_1)$ be the quantum variables of a massive
lepton, where $p_1\in\R^3$ and $s_1\in\{-1/2,\ 1/2\}$ is the spin
polarization of particles and antiparticles. Let $\xi_2=(p_2,\
s_2)$ be the quantum variables of a massless lepton, where
$p_2\in\R^3$ and $s_2\in\{-1/2,\ 1/2\}$ is the helicity of
particles and antiparticles, and, finally, let $\xi_3=(k,\
\lambda)$ be the quantum variables of the spin $1$ bosons $W^+$
and $W^-$, where $k\in\R^3$ and $\lambda\in\{-1,\ 0,\ 1\}$ accounts
for the polarization of the vector bosons (see
\cite[section~5.2]{WeinbergI2005}).

Moreover, we set $\Sigma_1 = \R^3\times\{-1/2,\ 1/2\}$ for the configuration
space of the leptons and $\Sigma_2 = \R^3\times\{-1,\ 0,\ 1\}$ for
the bosons. Thus $L^2(\Sigma_1)$ is the Hilbert space of each
lepton and $L^2(\Sigma_2)$ is the Hilbert space of each boson. In
the sequel, we shall use the notations $\int_{\Sigma_1} \d \xi =
\sum_{s=+\frac12, -\frac12} \int \d^3  p$ and $\int_{\Sigma_2} \d
\xi = \sum_{\lambda=0,1,-1} \int \d^3 k$.

The Hilbert space for the weak decay of the vector bosons $W^\pm$
is the Fock space for leptons and bosons describing the set of
states with indefinite number of particles or antiparticles which
we define below.

For every $\ell$, $\gF_\ell$ is the fermionic Fock space for the
corresponding species of leptons including the massive particle
and antiparticle together with the associated neutrino and
antineutrino, i.e., for $\ell=1,2,3$, 
\begin{equation}\nonumber
 \gF_\ell
 = \bigotimes^4 \gF_a(L^2(\Sigma_1))
 = \bigotimes^4\left(\oplus_{n=0}^\infty \otimes_a^n
 L^2(\Sigma_1)\right)\, ,
\end{equation}
where $\otimes_a^n$ denotes the antisymmetric $n$-th tensor
product and $\otimes_a^0 L^2(\Sigma_1) = \C$.
The fermionic Fock space $\gF_L$ for the leptons is
\begin{equation}\nonumber
 \gF_L = \otimes_{\ell=1}^3 \, \gF_\ell\ ,
\end{equation}
and the bosonic Fock space $\gF_W$ for the vector bosons $W^+$ and
$W^-$ reads
\begin{equation}\nonumber
 \gF_W
 = \bigotimes^2\gF_s(L^2(\Sigma_2))
 = \bigotimes^2 \left(\oplus_{n=0}^\infty \otimes_s^n
 L^2(\Sigma_2)\right)
 \ ,
\end{equation}
where $\otimes_s^n$ denotes the symmetric $n$-th tensor product
and $\otimes_s^0 L^2(\Sigma_2) = \C$.
Finally, the Fock space for the weak decay of the vector bosons $W^+$ and
$W^-$ is thus
\begin{equation}\nonumber
 \gF = \gF_L \otimes\gF_W\ .
\end{equation}

Furthermore, for each $\ell=1,2,3$, $\Ba(\xi_1)$ (resp. $\Bc(\xi_1)$) is the
annihilation (resp. creation) operator for the corresponding
species of massive particle if $\epsilon=+$ and for the
corresponding species of massive antiparticle if $\epsilon=-$.
Similarly, for each $\ell=1,2,3$, $\Ca(\xi_2)$ (resp.
$\Cc(\xi_2)$) is the annihilation (resp. creation) operator for
the corresponding species of neutrino if $\epsilon=+$ and for
the corresponding species of antineutrino if $\epsilon=-$.
Finally, the operator $\Aa(\xi_3)$ (resp. $\Ac(\xi_3)$) is the
annihilation (resp. creation) operator for the boson $W^-$ if
$\epsilon=+$, and for the boson $W^+$ if $\epsilon=-$. The
operators $b_{\ell,\epsilon}(\xi_1)$,
$b_{\ell,\epsilon}^*(\xi_1)$, $c_{\ell,\epsilon}(\xi_2)$, and
$c_{\ell,\epsilon}^*(\xi_2)$ fulfil the usual canonical
anticommutation relations (CAR), whereas $\Aa(\xi_3)$ and
$\Ac(\xi_3)$ fulfil the canonical commutation relation (CCR), see
e.g. \cite{WeinbergI2005}. Moreover, the $a$'s commute with the
$b$'s and the $c$'s.
In addition, following the convention described in
\cite[section~4.1]{WeinbergI2005} and
\cite[section~4.2]{WeinbergI2005}, we will assume that fermionic
creation and annihilation operators of different species of
leptons anticommute (see e.g.
\cite{BarbarouxGuillot2009} arXiv for explicit definitions).
Therefore, the following canonical anticommutation and commutation
relations hold,
\begin{equation}\nonumber
\begin{split}
 &\{ b_{\ell, \epsilon}(\xi_1), b^*_{\ell', \epsilon'}(\xi_1')\} =
 \delta_{\ell \ell'}\delta_{\epsilon \epsilon'} \delta(\xi_1 - \xi_1') \ ,\\
 &\{ c_{\ell, \epsilon}(\xi_2), c^*_{\ell', \epsilon'}(\xi_2')\} =
 \delta_{\ell \ell'}\delta_{\epsilon \epsilon'} \delta(\xi_2 - \xi_2')\ , \\
 &[ a_{\epsilon}(\xi_3), a^*_{\epsilon'}(\xi_3')] =
 \delta_{\epsilon \epsilon'} \delta(\xi_3 - \xi_3') \ ,\\
 &\{ b_{\ell, \epsilon}(\xi_1), b_{\ell', \epsilon'}(\xi_1')\}
 = \{ c_{\ell, \epsilon}(\xi_2), c_{\ell', \epsilon'}
 (\xi_2')\} =0\ ,\\
 & [ a_{\epsilon}(\xi_3), a_{\epsilon'}(\xi_3') ] = 0\ ,\\
 &\{ b_{\ell, \epsilon}(\xi_1), c_{\ell', \epsilon'}(\xi_2)\}
 = \{ b_{\ell, \epsilon}(\xi_1), c^*_{\ell',
 \epsilon'}(\xi_2)\}=0 \ , \\
 &[ b_{\ell, \epsilon}(\xi_1), a_{\epsilon'}(\xi_3)]
 = [ b_{\ell, \epsilon}(\xi_1), a^*_{\epsilon'}(\xi_3)]
 = [ c_{\ell, \epsilon}(\xi_2), a_{\epsilon'}(\xi_3)]
 = [ c_{\ell, \epsilon}(\xi_2), a^*_{\epsilon'}(\xi_3)] = 0
 \ ,
\end{split}
\end{equation}
where $\{b, b'\} = bb' + b'b$ and $[a,a'] = aa' - a'a$. Moreover,
we recall that,  for $\vp\in L^2(\Sigma_1)$, the operators
\begin{equation}\nonumber
\begin{split}
  & b_{\ell, \epsilon}(\vp) = \int_{\Sigma_1} b_{\ell,
  \epsilon}(\xi) \overline{\vp(\xi)} \d \xi,\quad
  c_{\ell, \epsilon}(\vp) = \int_{\Sigma_1} c_{\ell,
  \epsilon}(\xi)\overline{\vp(\xi)} \d \xi\ , \\
  & b^*_{\ell, \epsilon}(\vp) = \int_{\Sigma_1} b^*_{\ell,
  \epsilon}(\xi) {\vp(\xi)} \d \xi,\quad
  c^*_{\ell, \epsilon}(\vp) = \int_{\Sigma_1} c^*_{\ell,
  \epsilon}(\xi){\vp(\xi)} \d \xi\ ,
\end{split}
\end{equation}
are bounded operators on $\gF$ satisfying
\begin{equation}\nonumber\\%\label{eq:2.32}
  \| b^\sharp_{\ell,\epsilon}(\vp)\| = \|
  c^\sharp_{\ell,\epsilon}(\vp)\| = \|\vp\|_{L^2}\ ,
\end{equation}
where $b^\sharp$ (resp. $c^\sharp$) is b (resp. $c$) or $b^*$
(resp. $c^*$).

The free Hamiltonian $H_0$ is given by
\begin{equation}\label{eq:def-free}
\begin{split}
 H_0
 & = \sum_{\ell=1}^3 \sum_{\epsilon=\pm} \int
 w_\ell^{(1)}(\xi_1) b^*_{\ell,\epsilon}(\xi_1)
 b_{\ell,\epsilon}(\xi_1) \d \xi_1
 + \sum_{\ell=1}^3 \sum_{\epsilon=\pm}\int
 w_\ell^{(2)}(\xi_2) c^*_{\ell,\epsilon}(\xi_2)
 c_{\ell,\epsilon}(\xi_2) \d \xi_2 \\
 & + \sum_{\epsilon=\pm} \int w^{(3)}(\xi_3) a^*_\epsilon(\xi_3)
 a_\epsilon(\xi_3) \d \xi_3\ ,
\end{split}
\end{equation}
where the free relativistic energy of the massive leptons, the
neutrinos, and the bosons are respectively given by
\begin{equation}\nonumber
\begin{split}
 w_\ell^{(1)}(\xi_1) = (|p_1|^2 + m_\ell^2)^\frac12,\
 w_\ell^{(2)}(\xi_2) = |p_2|\mbox{, and }
 w^{(3)}(\xi_3) = (|k|^2 + m^2_W)^\frac12\ .
\end{split}
\end{equation}
Here $m_\ell$ is the mass of the lepton $\ell$ and $m_W$ is the
mass of the bosons, with $m_1 < m_2 < m_3 < m_W$.

The interaction $H_I$ is described in terms of annihilation and
creation operators together with kernels $G^{(\alpha)}_{\ell,
\epsilon, \epsilon'} (.,.,.)$ ($\alpha=1,2$).

As emphasized previously, each kernel $G_{\ell, \epsilon,
\epsilon'}^{(\alpha)}(\xi_1, \xi_2, \xi_3)$, computed in
theoretical physics, contains a $\delta$-distribution because of
the conservation of the momentum (see \cite{GreinerMuller1989},
\cite[section~4.4]{WeinbergI2005}). Here, we approximate the
singular kernels by square integrable functions. Therefore, we
assume the following

\begin{hypothesis}\label{hypothesis:2.1}
For $\alpha=1,2$, $\ell=1,2,3$, $\epsilon, \epsilon' = \pm$, we
assume
\begin{equation}\label{eq:hypothese-1}
 G^{(\alpha)}_{\ell, \epsilon, \epsilon'} (\xi_1, \xi_2, \xi_3)\in
 L^2(\Sigma_1\times \Sigma_1\times\Sigma_2)\ .
\end{equation}
\end{hypothesis}

Based on \cite[p159, (4.139)]{GreinerMuller1989} and \cite[p308,
(21.3.20)]{WeinbergII2005}, we define the interaction  as
\begin{equation}\label{eq:HI-quadraticform-1}
  H_I = H_I^{(1)} + H_I^{(2)}\ ,
\end{equation}
where
\begin{equation}\label{eq:HI-quadraticform-2}
\begin{split}
  H_I^{(1)} = & \sum_{\ell=1}^3 \sum_{\epsilon\neq\epsilon'}
  \int G^{(1)}_{\ell,\epsilon,\epsilon'} (\xi_1, \xi_2,\xi_3)
  b^*_{\ell,\epsilon}(\xi_1) c^*_{\ell, \epsilon'}(\xi_2)
  a_\epsilon(\xi_3) \d \xi_1 \d \xi_2 \d \xi_3 \\
  & + \sum_{\ell=1}^3 \sum_{\epsilon\neq \epsilon'}
  \int\overline{G^{(1)}_{\ell, \epsilon,\epsilon'} (\xi_1,
  \xi_2,\xi_3)} a^*_\epsilon(\xi_3)
  c_{\ell,\epsilon'} (\xi_2) b_{\ell,\epsilon}(\xi_1) \d \xi_1 \d
  \xi_2 \d \xi_3 \ ,
\end{split}
\end{equation}
\begin{equation}\label{eq:HI-quadraticform-3}
\begin{split}
  H_I^{(2)} = & \sum_{\ell=1}^3 \sum_{\epsilon\neq\epsilon'}
  \int G^{(2)}_{\ell,\epsilon,\epsilon'} (\xi_1, \xi_2,\xi_3)
  b^*_{\ell,\epsilon}(\xi_1) c^*_{\ell, \epsilon'}(\xi_2)
  a^*_\epsilon(\xi_3) \d \xi_1 \d \xi_2 \d \xi_3 \\
  & + \sum_{\ell=1}^3 \sum_{\epsilon\neq \epsilon'}
  \int\overline{G^{(2)}_{\ell, \epsilon,\epsilon'} (\xi_1,
  \xi_2,\xi_3)}
  a_\epsilon(\xi_3)
  c_{\ell,\epsilon'}(\xi_2)
  b_{\ell,\epsilon}(\xi_1)
    \d \xi_1 \d
  \xi_2 \d \xi_3 \ .
\end{split}
\end{equation}
The operator $H_I^{(1)}$ describes the decay of the bosons $W^+$
and $W^-$ into leptons, and $H_I^{(2)}$ is responsible for the
fact that the bare vacuum will not be an eigenvector of the total
Hamiltonian, as expected from physics.

For $\ell=1,2,3$, all terms in $H_I^{(1)}$ and $H_I^{(2)}$ are
well defined as quadratic forms on the set of finite particle
states consisting of smooth wave functions. According to
\cite[Theorem X.24]{ReedSimon1975} (see details in
\cite{BarbarouxGuillot2009}), one can construct a closed operator
associated with the quadratic form defined by
\eqref{eq:HI-quadraticform-1}-\eqref{eq:HI-quadraticform-3}.

The total Hamiltonian is thus ($g$ is a coupling constant),
\begin{equation}\nonumber%\label{eq:total-hamiltonian}
  H = H_0 + g H_I,\quad g>0\ .
\end{equation}

%%%%%%%%%  THEOREM ON SELF-ADJOINTNESS  %%%%%%%%%
\begin{theorem}\label{thm:self-adjointness}
Let $g_1>0$ be such that
\begin{equation}\nonumber
 \frac{6 g_1^2}{m_W} \left(\frac{1}{m_1^2} +1\right) \sum_{\alpha=1,2}
 \sum_{\ell=1}^3 \sum_{\epsilon\neq\epsilon'} \|
 G^{(\alpha)}_{\ell,\epsilon,\epsilon'}
 \|^2_{L^2(\Sigma_1\times\Sigma_1\times\Sigma_2)} <1\ .
\end{equation}
Then, for every $g$ satisfying $g\leq g_1$, $H$ is a self-adjoint
operator in $\gF$ with domain $\cD(H)=\cD(H_0)$.
\end{theorem}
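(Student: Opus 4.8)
\emph{Strategy.} The plan is to deduce self-adjointness from the Kato--Rellich theorem, with $H_0$ as the free operator and $gH_I$ as the perturbation. $H_0$ is the second quantization of the non-negative self-adjoint multiplication operators $w_\ell^{(1)}$, $w_\ell^{(2)}$, $w^{(3)}$, hence self-adjoint and non-negative on its natural domain $\cD(H_0)$. Since the two terms of $H_I^{(1)}$, and those of $H_I^{(2)}$, are mutually adjoint, $H_I$ is symmetric, and so is the closed operator it defines (obtained via \cite[Theorem~X.24]{ReedSimon1975} as recalled above). It therefore suffices to prove the relative bound
\begin{equation}\nonumber
  \|H_I\psi\|\ \le\ \Big(\frac{6}{m_W}\Big(\frac{1}{m_1^2}+1\Big)\sum_{\alpha,\ell,\epsilon\neq\epsilon'}\|G^{(\alpha)}_{\ell,\epsilon,\epsilon'}\|^2_{L^2(\Sigma_1\times\Sigma_1\times\Sigma_2)}\Big)^{1/2}\big(\|H_0\psi\|+\|\psi\|\big),
\end{equation}
first on the dense set of finite-particle vectors with smooth wavefunctions --- which is a core for $H_0$ --- and then, by closedness of $H_I$, on all of $\cD(H_0)$. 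Granting this, the hypothesis says precisely that the coefficient of $\|H_0\psi\|+\|\psi\|$, multiplied by any $g\le g_1$, is $<1$, so $gH_I$ is $H_0$-bounded with relative bound strictly less than $1$; Kato--Rellich (see \cite{ReedSimon1975}) then gives that $H=H_0+gH_I$ is self-adjoint with $\cD(H)=\cD(H_0)$.

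\emph{The relative bound.} $H_I$ is a finite sum --- over $\alpha\in\{1,2\}$, $\ell\in\{1,2,3\}$, $\epsilon\neq\epsilon'$, and the two mutually adjoint terms of each $H_I^{(\alpha)}$ --- of monomials of the type $\int(\text{kernel})\,b^\sharp_{\ell,\epsilon}(\xi_1)\,c^\sharp_{\ell,\epsilon'}(\xi_2)\,a^\sharp_\epsilon(\xi_3)\,\d\xi_1\d\xi_2\d\xi_3$, each of which I would estimate by the standard $N_\tau$-type bounds (in the spirit of \cite{Amouretal2007, BarbarouxGuillot2009}). The fermionic operators are bounded --- a fermionic creation/annihilation operator smeared against an $L^2$ form factor has operator norm equal to that $L^2$-norm --- so, factoring them out one variable at a time, the fermionic part costs only $\|G^{(\alpha)}_{\ell,\epsilon,\epsilon'}(\cdot,\cdot,\xi_3)\|_{L^2(\Sigma_1\times\Sigma_1)}$, or $\|G^{(\alpha)}_{\ell,\epsilon,\epsilon'}(\xi_1,\cdot,\cdot)\|_{L^2(\Sigma_1\times\Sigma_2)}$, leaving one variable free for a Cauchy--Schwarz step. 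The remaining bosonic operator is controlled through the boson energy: for an annihilation operator one writes
\begin{equation}\nonumber
  \int\|G^{(\alpha)}(\cdot,\cdot,\xi_3)\|_{L^2}\,\|a_\epsilon(\xi_3)\psi\|\,\d\xi_3\ \le\ \Big(\int\frac{\|G^{(\alpha)}(\cdot,\cdot,\xi_3)\|_{L^2}^2}{w^{(3)}(\xi_3)}\,\d\xi_3\Big)^{1/2}\Big(\int w^{(3)}(\xi_3)\,\|a_\epsilon(\xi_3)\psi\|^2\,\d\xi_3\Big)^{1/2},
\end{equation}
and uses $w^{(3)}(\xi_3)=(|k|^2+m_W^2)^{1/2}\ge m_W$ in the first factor together with $\int w^{(3)}(\xi_3)\|a_\epsilon(\xi_3)\psi\|^2\,\d\xi_3\le\la\psi,H_0\psi\ra$ in the second; for a bosonic creation operator one commutes it past the fermionic operators and past the boson number operator $N$, bounds it by $\|G^{(\alpha)}\|_{L^2}\|(N+1)^{1/2}\psi\|$, and uses $N\le m_W^{-1}H_0$. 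Whenever a massive-lepton annihilation operator $b_{\ell,\epsilon}(\xi_1)$ has to be produced from the integral, the corresponding integral is dominated via $w_\ell^{(1)}(\xi_1)\ge m_\ell\ge m_1$, which is where the factor $m_1^{-1}$ enters. Summing the $6$ terms indexed by $(\ell,\epsilon,\epsilon')$ by Cauchy--Schwarz supplies the combinatorial factor $6$, and gathering the metric factors $m_W^{-1}$ and $m_1^{-1}$ yields the constant in the displayed bound.

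\emph{Main obstacle.} The delicate monomials are those with a creation operator in left-most position, above all the term of $H_I^{(2)}$ carrying $b^*_{\ell,\epsilon}(\xi_1)\,c^*_{\ell,\epsilon'}(\xi_2)\,a^*_\epsilon(\xi_3)$: a pointwise creation operator is not a densely defined operator (formally it produces a factor $\delta(0)$), so it cannot be dominated by the naive Cauchy--Schwarz pairing with $H_0^{1/2}$ that works for annihilation operators. One must keep each bosonic creation operator adjacent to an $(N+1)^{1/2}$-bound --- using that $a^\sharp_\epsilon$ commutes with all fermionic operators and with $N$ --- and, crucially, control every number operator by $H_0$ only through the \emph{massive} sectors, i.e. via $N\le m_W^{-1}H_0$ for the bosons and the analogue with $m_\ell^{-1}$ for the massive leptons, never through the neutrino energy $w_\ell^{(2)}(\xi_2)=|p_2|$, for which $G^{(\alpha)}_{\ell,\epsilon,\epsilon'}/(w_\ell^{(2)})^{1/2}$ need not be square integrable under Hypothesis~\ref{hypothesis:2.1}. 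This is possible precisely because the neutrino operators, being fermionic, are bounded by the $L^2$-norm of their form factor and never require a number-operator factor at all; it is also why only the masses $m_W$ and $m_1$, and no neutrino quantity, enter the smallness condition.
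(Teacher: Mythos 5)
Your proposal is correct and follows essentially the same route as the paper, which simply invokes \cite[Theorem~2.6]{BarbarouxGuillot2009}: there, as in your sketch, one establishes an $N_\tau$-type relative bound of $H_I$ with respect to $H_0$ (the bound recalled in Lemma~\ref{lem:5.5}, exploiting that the fermionic operators are bounded by the $L^2$-norms of their form factors, that the boson number is controlled by $H_0^{(3)}/m_W$, and that only the masses $m_W$ and $m_1$ enter) and then applies Kato--Rellich. Note only that for Kato--Rellich it is the coefficient of $\|H_0\psi\|$ alone that must be $<1$, so your stronger claim that the same constant multiplies $\|\psi\|$ is unnecessary (and is not quite what the $\beta,\eta$-dependent bound of Lemma~\ref{lem:5.5} delivers), but this does not affect the argument.
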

%%%%%%%%%%%%%%%%%%%%%%%%%%%%%%%%%%%%%%%%%%%%%%%%%
%
%
%
This result has been proven in
~\cite[Theorem~2.6]{BarbarouxGuillot2009} (with a prefactor $2$
missing).

%%%%%%%%%%%%%%%%%%%%%%%%%%%%%%%%%%%%%%%%%%%%%%%%%%%%%%%%%%%%%%%
%%%%%%%%%%%%%%%%%%%%%%%%%%%%%%%%%%%%%%%%%%%%%%%%%%%%%%%%%%%%%%%
%%%%%%%%%%     BASIC SPECTRAL PROPERTIES    %%%%%%%%%%%%%%%%%%%
%%%%%%%%%%%%%%%%%%%%%%%%%%%%%%%%%%%%%%%%%%%%%%%%%%%%%%%%%%%%%%%
%%%%%%%%%%%%%%%%%%%%%%%%%%%%%%%%%%%%%%%%%%%%%%%%%%%%%%%%%%%%%%%
\section{Location of the spectrum, existence of a ground state,
absolutely continuous spectrum, and dynamical
properties}\label{S:spectral-properties}

\setcounter{equation}{0}

In the sequel, we shall make some of the following additional
assumptions on the kernels
$G^{(\alpha)}_{\ell,\epsilon,\epsilon'}$.

%%%%%%%%%%%%%%    Hypothesis on the kernel  %%%%%%%%%%
\begin{hypothesis}\label{hypothesis:3.1} There exist $\kt(G)$
and $\ktt(G)$ such that for $\alpha=1,2,\ \ell=1,2,3,\
\epsilon,\epsilon'=\pm $, $i,j=1,2,3$, and $\sigma\geq 0$,
\begin{equation*}
\begin{split}
 \mbox{(i)}\quad &
 \int_{\Sigma_1\times\Sigma_1\times\Sigma_2}
 \frac{|
 G^{(\alpha)}_{\ell,\epsilon,\epsilon'}(\xi_1,\xi_2,\xi_3)|^2}{|p_2|^2}
 \d\xi_1 \d\xi_2 \d\xi_3 <\infty\ ,
\end{split}
\end{equation*}
\begin{equation*}
\begin{split}
 \mbox{(ii)}\quad &
 \left(\int_{\Sigma_1\times\{|p_2|\leq \sigma\}\times\Sigma_2}
 |G^{(\alpha)}_{\ell,\epsilon,\epsilon'}(\xi_1,\xi_2,\xi_3)|^2
 \d\xi_1 \d\xi_2 \d\xi_3\right)^\frac12 \leq \kt(G)\, \sigma \ ,
\end{split}
\end{equation*}
\begin{equation*}
\begin{split}
 \mbox{(iii-a)}\quad &
 (p_2\cdot\nabla_{p_2})
 G^{(\alpha)}_{\ell,\epsilon,\epsilon'}(.,.,.)\in
 L^2(\Sigma_1\times\Sigma_1\times\Sigma_2)\, \mbox{and}\ \\
 & \int_{\Sigma_1\times\{|p_2|\leq \sigma\}\times\Sigma_2}
 \left| [ (p_2\cdot\nabla_{p_2})
 G^{(\alpha)}_{\ell,\epsilon,\epsilon'}](\xi_1,\xi_2,\xi_3)\right|^2
 \d\xi_1 \d\xi_2 \d\xi_3 < \ktt(G)\, \sigma ,\\
 \mbox{(iii-b)}\quad & \int_{\Sigma_1\times\Sigma_1\times\Sigma_2}
 p_{2,i}^2\, p_{2,j}^2
 \left|
 \frac{\partial^2 G^{(\alpha)}_{\ell,\epsilon,\epsilon'}}
 {\partial p_{2,i} \partial p_{2,j}}(\xi_1,\xi_2,\xi_3)\right|^2
 \d\xi_1 \d\xi_2 \d\xi_3 < \infty\ .
\end{split}
\end{equation*}
\end{hypothesis}
%%%%%%%%%%%%%% end of hypothesis on the kernel   %%%%%%%%%%%%%%

\begin{remark}
Note that Hypothesis 3.1. {\it (i)} is stronger than 
Hypothesis 3.1. {\it (ii)}, of course.
\end{remark}

Our first main result is devoted to the existence of a ground
state for $H$ together with the location of the spectrum of $H$
and of its absolutely continuous spectrum.

%%%%%%%%%%%%%%%%%%%%%%%%%%%%%%%%%%%
\begin{theorem}\label{thm:main-1}
Assume that the kernels $G^{(\alpha)}_{\ell,\epsilon,\epsilon'}$
satisfy Hypotheses~\ref{hypothesis:2.1} and
\ref{hypothesis:3.1}(i). Then, there exists $g_2 \in (0,\, g_1]$
such that $H$ has a unique ground state for $g\leq g_2$.
Moreover, for
 $$
  E  =\inf\spec(H)\, ,
 $$
the spectrum of $H$ fulfils
\begin{equation*}
 \spec(H) = \spec_{\mathrm{ac}}(H) = [E,\, \infty)\ ,
\end{equation*}
with $E \leq 0$.
\end{theorem}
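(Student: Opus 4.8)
The plan is to prove Theorem~\ref{thm:main-1} in three stages: existence and uniqueness of the ground state, determination of $\spec(H)$, and identification of $\spec_{\mathrm{ac}}(H)$, together with the bound $E\le 0$.

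\textbf{Step 1: Existence of a ground state.}
I would follow the now-standard compactness strategy of Bach--Fr\"ohlich--Sigal \cite{Bachetal1999S} and Griesemer--Lieb--Loss \cite{GLL} as adapted in \cite{Barbarouxetal2004a, Barbarouxetal2004, DimassiGuillot, Amouretal2007}. First introduce the infrared-cutoff Hamiltonian $H_\sigma$ (or, on the appropriate reduced Fock space, $H^\sigma$) in which the interaction with neutrinos/antineutrinos of energy $\le\sigma$ is switched off, together with a high-neutrino-momentum cutoff if needed. Using Hypothesis~\ref{hypothesis:3.1}(i) — the crucial $|p_2|^{-2}$-integrability of the kernels — and the $N_\tau$-type estimates on the interaction terms (of the form $\|H_I \Psi\|\le C(\|H_0\Psi\| + \|\Psi\|)$ and the relative-bound estimates already used for Theorem~\ref{thm:self-adjointness}), one shows that $H_\sigma$ has a normalized ground state $\Phi_\sigma$ for $g$ small. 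The key soft-photon/soft-neutrino bound $\| N_\nu^{1/2}\Phi_\sigma\|\le C g$ and the analogous bound for the boson number operator, uniform in $\sigma$, then give relative compactness of the family $\{\Phi_\sigma\}$; a weak limit $\Phi_0$ as $\sigma\to 0$ is nonzero (again because the soft-mode bound is $o(1)$, so no mass escapes to infinity in Fock space) and is a ground state of $H$. Uniqueness for small $g$ follows from a perturbative argument: the unperturbed ground state (the vacuum $\Omega$) is simple and separated from the rest of the spectrum of $H_0$ by $\min(m_1,|p_2|\text{-infimum issues})$ — here one uses that the bosons are massive ($m_W>0$) and the massive leptons have mass $m_1>0$, so although the neutrino sector has no gap, the combination of a decay threshold and the smallness of $g$ (quantified by $g_2$) forces the ground-state eigenvalue to remain nondegenerate; alternatively one invokes the gap estimate for $H^{\sigma_n}$ announced in the introduction.

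\textbf{Step 2: $\spec(H)=[E,\infty)$ and $E\le 0$.}
The lower bound $\spec(H)\subset[E,\infty)$ is the definition of $E$. For $E\le 0$: test the quadratic form of $H$ on the bare vacuum $\Omega$. Since $H_0\Omega=0$ and each term of $H_I$ contains at least one annihilation operator or, in $H_I^{(2)}$, only creation operators, one has $\la\Omega, H\Omega\ra = \la\Omega,H_0\Omega\ra + g\la\Omega,H_I\Omega\ra = 0$; hence $E=\inf\spec(H)\le \la\Omega,H\Omega\ra=0$. For the reverse inclusion $[E,\infty)\subset\spec(H)$, I would construct Weyl sequences: given $\lambda\ge 0$, pick a neutrino one-particle state $f_n$ supported in $\{|p_2|\approx\lambda/n\text{ times}\ n\}$... more precisely, choose $f_n\in L^2(\Sigma_1)$ normalized with $\|(|p_2|-\lambda)f_n\|_{L^2}\to 0$ (possible since the neutrino dispersion $|p_2|$ has range $[0,\infty)$ and purely a.c. spectrum). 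Then, with $\Phi$ a ground state of $H$, the states $\Psi_n = c^*_{\ell,\epsilon}(f_n)\Phi$ (suitably normalized), after localizing $f_n$ in momentum near some large value to make it asymptotically orthogonal to the soft modes of $\Phi$, satisfy $\|(H-E-\lambda)\Psi_n\|\to 0$: the commutator $[H_0,c^*(f_n)]=c^*(|p_2|f_n)\approx \lambda\, c^*(f_n)$, and $[H_I,c^*(f_n)]\to 0$ because adding a high-energy (or, with care, any) neutrino decouples from the interaction kernels in the limit by the $L^2$-hypothesis on $G^{(\alpha)}$. This yields $E+\lambda\in\spec(H)$ for all $\lambda\ge 0$, i.e. $[E,\infty)\subset\spec(H)$, and combined with the trivial inclusion gives equality.

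\textbf{Step 3: Absolute continuity.}
Finally, $\spec_{\mathrm{ac}}(H)=[E,\infty)$. This is the deep part and I would only sketch it, since the full argument is the content of Theorems~\ref{thm:Mourre}--\ref{thm:main-2} proved later in the paper: on $(E,m_1)$ one uses the Mourre estimate with the cutoff dilatation generator plus the limiting absorption principle to get a.c. spectrum (this is Theorem~\ref{thm:main-2}), and on $[m_1,\infty)$ the spectrum is a.c. because there the massive-particle channel opens and standard Mourre theory with the generator of dilatations in all variables (or a Weyl-sequence argument showing the spectrum is filled and purely a.c. by the free dispersion relations) applies — at energies above $m_1$ one can effectively ignore the infrared subtlety that forced the restricted conjugate operator below $m_1$. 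Putting the two ranges together with Step~2 gives $\spec_{\mathrm{ac}}(H)=[E,\infty)$, and since $\spec_{\mathrm{ac}}(H)\subset\spec(H)=[E,\infty)$ with $[E,\infty)\subset\overline{\spec_{\mathrm{ac}}(H)}$, and there are no eigenvalues embedded in $(E,\infty)$ by the LAP, the identity $\spec(H)=\spec_{\mathrm{ac}}(H)=[E,\infty)$ follows.

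\textbf{Main obstacle.}
The hard part is the uniform-in-$\sigma$ soft-neutrino bound underlying the existence of the ground state \emph{without} an infrared regularization: one must control $\la\Phi_\sigma, N_\nu\Phi_\sigma\ra$ using only Hypothesis~\ref{hypothesis:3.1}(i), via a pull-through formula for $c_{\ell,\epsilon}(\xi_2)\Phi_\sigma$ and an absorption of the resulting $|p_2|^{-1}$ singularity into $\|G^{(\alpha)}/|p_2|\|_{L^2}$. Equally delicate, though deferred to later sections, is showing that the weak limit $\Phi_0$ is genuinely nonzero and is an eigenvector of the \emph{non-cutoff} $H$; this requires the convergence $H_\sigma\to H$ in strong resolvent sense together with the no-loss-of-mass estimate, and it is exactly here that dropping the sharp high-energy neutrino cutoff of \cite{BarbarouxGuillot2009} demands the extra care advertised in the introduction.
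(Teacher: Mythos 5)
Your Steps 1 and 2 are essentially the paper's own route: the paper proves Theorem~\ref{thm:main-1} by referring to \cite{BarbarouxGuillot2009}, the only genuinely new ingredient needed in the present setting being Proposition~\ref{prop:gap}, which supplies the spectral gap for the cutoff operators $H^{\sigma_n}$ without the sharp high-energy cutoff. Your infrared-cutoff/compactness scheme with the pull-through control of the soft-neutrino number via Hypothesis~\ref{hypothesis:3.1}(i), the uniqueness via the gap for the cutoff operators, the vacuum test giving $E\le 0$, and the Weyl-sequence argument giving $[E,\infty)\subset\spec(H)$ all match that strategy.

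Step 3, however, contains a genuine gap. First, Theorem~\ref{thm:main-1} assumes only Hypotheses~\ref{hypothesis:2.1} and \ref{hypothesis:3.1}(i), while the Mourre estimate (Theorem~\ref{thm:Mourre}), the limiting absorption principle (Theorem~\ref{thm:LAP}) and Theorem~\ref{thm:main-2} all require Hypothesis~\ref{hypothesis:3.1}(ii)--(iii); you cannot invoke them under the stated assumptions. Second, and more seriously, you conflate the set identity $\spec_{\mathrm{ac}}(H)=[E,\infty)$ with purity of the absolutely continuous spectrum on $(E,\infty)$. Your claim that on $[m_1,\infty)$ the spectrum is purely a.c.\ ``by standard Mourre theory with the full generator of dilatations'' is unsupported: no Mourre estimate is proved (or easily provable) above the first threshold, the LAP of Theorem~\ref{thm:LAP} only covers intervals below $E+m_1-\delta$, so ``no eigenvalues embedded in $(E,\infty)$'' is not available, and the paper explicitly states that the nature of the spectrum above the first threshold remains to be studied. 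What the theorem asserts --- and what is proved in \cite{BarbarouxGuillot2009} --- is only that the a.c.\ spectrum fills $[E,\infty)$; this follows from a scattering-type argument (existence of asymptotic neutrino creation operators applied to the ground state, equivalently a time-dependent Weyl-sequence construction), which needs only the square-integrability and infrared hypotheses and no Mourre machinery at all. Replacing your Step 3 by such an argument would make the proof both correct and consistent with the hypotheses of the theorem.
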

%%%%%%%%%%%%%%%%%%%%%%%%%%%%%%%%%%%
%
%
%%%%%%%%%%%%%%%%%%%%%%%%%%%%%%%%%%%
\begin{proof}
The proof of Theorem~\ref{thm:main-1} is done in
\cite{BarbarouxGuillot2009}. The main ingredients of this proof
are the cutoff operators and the existence of a gap above the
ground state energy for these operators (see \eqref{eq:def-H-n-up}
and Proposition~\ref{prop:gap} below and
\cite[Proposition~3.5]{BarbarouxGuillot2009}).
Note that a more general proof of the existence of a ground state
can also be achieved by mimicking the proof given in
\cite{Barbarouxetal2004}.
\end{proof}
%%%%%%%%%%%%%%%%%%%%%%%%%%%%%%%%%%%

Let $b$ be the operator in $L^2(\Sigma_1)$ accounting for the
position of the neutrino
 $$
  b = i\nabla_{p_2}\ ,
 $$
and let
 $$
  \lbr = (1 + |b|^2)^\frac12\, .
 $$
Its second quantized version $\d\Gamma( \lbr)$ is self-adjoint in
$\gF_a (L^2(\Sigma_1))$. We thus define the ``total position''
operator $B$ for all neutrinos and antineutrinos by
\begin{equation}\nonumber%
\begin{split}
 & B_\ell = \1\otimes\1\otimes \d\Gamma(\lbr) \otimes \1
 + \1\otimes\1\otimes\1\otimes \d\Gamma(\lbr)
 \quad \mbox{ on } \gF_\ell, \\
 & B =   \left(B_1 \otimes \1  \otimes \1
        + \1  \otimes B_2 \otimes \1
        + \1  \otimes \1  \otimes B_3\right)\otimes(\1\otimes\1)
 \quad \mbox{ on } \gF\, .
\end{split}
\end{equation}

%%%%%%%%%%%%%%%%%%%%%%%%%%%%%%%%%%
\begin{theorem}\label{thm:main-2}
Assume that the kernels $G^{(\alpha)}_{\ell,\epsilon,\epsilon'}$
satisfy Hypotheses~\ref{hypothesis:2.1} and
\ref{hypothesis:3.1}~(ii)-(iii). For any $\delta>0$ satisfying
$0<\delta<m_1$, there exists $g_\delta>0$ such that for $0 <g \leq
g_\delta$:
\begin{itemize}
\item[(i)] The spectrum of $H$ in $(E,\, E + m_1-\delta]$ is
    purely absolutely continuous.

\item[(ii)] For $s>1/2$, $\varphi\in\gF$, and $\psi\in\gF$,
    the limits
\begin{equation*}
 \lim_{\epsilon\to 0} (\varphi,\, \langle B \rangle^{-s}
 (H-\lambda\pm i\epsilon) \langle B \rangle^{-s} \psi)
\end{equation*}
exist uniformly for $\lambda$ in every compact subset of
$(E,\, E + m_1 - \delta)$.

\item[(iii)] For $s\in (1/2,\, 1)$ and $f\in
    C_0^\infty\left((E,\, E + m_1 - \delta)\right)$, we have
\begin{equation*}
 \left\|
  (B+1)^{-s}  \mathrm{e}^{-i t H} f(H)
  (B+1)^{-s}
 \right\|
 = \mathcal{O}\big(t^{-(s-1/2)}\big)\, .
\end{equation*}
\end{itemize}
\end{theorem}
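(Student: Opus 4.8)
The plan is to deduce Theorem~\ref{thm:main-2} from the abstract conjugate operator method (Mourre theory), using the results announced in the earlier part of the paper: the existence and uniqueness of the ground state (Theorem~\ref{thm:main-1}), the gap estimate for the cutoff Hamiltonians $H^{\sigma_n}$, a Mourre estimate for $H$ with conjugate operator $A$, and $C^2$-regularity of $H$ with respect to $A$. Concretely, I would take as conjugate operator $A = \d\Gamma(\text{generator of dilatations in } p_2)$ localized by a smooth momentum cutoff $\chi(|p_2|\le 1)$, lifted to the full Fock space $\gF$ exactly in the way $B$ is built from $b$ above, so that $A$ acts only on the low-energy neutrinos and antineutrinos. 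The heuristic is that $[H_0, iA]$ reproduces (a cutoff version of) the neutrino number/energy operator, and for a spectral interval $\Delta \subset (E, E+m_1-\delta)$ the mass gap $m_1$ of the massive leptons forces any state below $E+m_1-\delta$ to have essentially no massive particles excited, so its energy above $E$ is carried by neutrinos, whence $E_\Delta(H)\,[H,iA]\,E_\Delta(H) \ge c\, E_\Delta(H) + K$ with $K$ compact; the gap property of $H^{\sigma_n}$ is what lets one control the very-low-energy tail and upgrade this to a strict Mourre estimate with no compact error, uniformly down to $E$.

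The steps, in order, would be: (1) define $A$ precisely and verify it is essentially self-adjoint on the natural core, and that the quadratic form $[H, iA]$ extends to a bounded operator from $\cD(H)$ to its dual — here Hypothesis~\ref{hypothesis:3.1}(iii-a) on $(p_2\cdot\nabla_{p_2})G^{(\alpha)}$ is exactly what makes $[H_I, iA]$ controllable, and the $1/|p_2|$ bound (i) together with (ii) handle the $[H_0,iA]$ part; (2) prove the strict Mourre estimate $E_\Delta(H)[H,iA]E_\Delta(H)\ge c\, E_\Delta(H)$ for every closed $\Delta\subset(E,E+m_1-\delta)$, by combining the soft localization estimate using the mass gap with the gap property of $H^{\sigma_n}$ to absorb the compact remainder (this is the heart of Theorem~\ref{thm:Mourre}); (3) establish that $H\in C^2(A)$, i.e. the iterated commutators $[H,iA]$ and $[[H,iA],iA]$ are $H$-bounded, which requires Hypothesis~\ref{hypothesis:3.1}(iii-b) for the second commutator; (4) invoke the abstract limiting absorption principle of Amrein--Boutet de Monvel--Georgescu / Sahbani (\cite{Amreinetal1996, Sahbani1997}) to get that the boundary values $\lim_{\epsilon\to0}\langle A\rangle^{-s}(H-\lambda\pm i\epsilon)^{-1}\langle A\rangle^{-s}$ exist, locally uniformly on $(E,E+m_1-\delta)$, for $s>1/2$; (5) replace $\langle A\rangle$ by $\langle B\rangle$ in that statement, which is legitimate because on the relevant spectral subspace $A$ and $B$ are comparable (the high-momentum part of $B$ does not enter since low-energy states have their neutrino momenta effectively cut off by the energy constraint) — this gives part (ii), and part (i) follows from the LAP by the standard fact that a locally uniform bound on $\Im\langle\psi,(H-\lambda-i\epsilon)^{-1}\psi\rangle$ forces purely absolutely continuous spectrum on the interval; (6) derive the propagation estimate (iii) from (ii) by the usual Kato-smoothness / interpolation argument: $s$-Hölder continuity of the boundary values of the resolvent (with exponent $s-1/2$ for $s\in(1/2,1)$) yields the $\mathcal{O}(t^{-(s-1/2)})$ decay of $(B+1)^{-s}e^{-itH}f(H)(B+1)^{-s}$ via a stationary-phase/Privalov-type lemma.

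The main obstacle — and where the real work of the paper lies — is step (2), the \emph{strict} Mourre estimate all the way down to the ground-state energy $E$ without any infrared regularization. The difficulty is twofold: first, the positive commutator $[H_0,iA]$ degenerates near zero neutrino momentum (the dilation generator scales like $|p_2|$), so one only gets $c\,N_{\le\sigma}$-type positivity rather than $c$ times the identity, and the compact error coming from the mass-gap localization does not a priori vanish as $\Delta$ shrinks toward $\{E\}$; second, the interaction $H_I$ is not infrared regular, so naive estimates on $[H_I,iA]$ produce terms that are not small relative to the commutator gap. The resolution, following \cite{Bachetal2006} and \cite{Frohlichetal2008}, is the iterative scheme over the scales $\sigma_n\to0$: one uses that $H^{\sigma_n}$ has a spectral gap of size $c\sigma_n$ above its ground state energy to show that, restricted to energies within $O(\sigma_n)$ of $E$, the state is a small perturbation of the dressed ground state tensored with low-energy neutrino modes on which the commutator is bona fide positive of order $\sigma_n$; summing these localized estimates over $n$ and controlling the errors by Hypothesis~\ref{hypothesis:3.1}(ii)-(iii-a) (which precisely quantify how fast the low-energy part of the kernels vanishes) yields the uniform strict Mourre estimate. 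Getting the bookkeeping of these scale-dependent error terms to close — in particular that the extra estimates replacing the Pauli--Fierz transformation of \cite{Frohlichetal2008} are summable — is the technical crux.
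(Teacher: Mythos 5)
Your overall architecture (commutator well-defined via Hypothesis~\ref{hypothesis:3.1}(iii-a), Mourre estimate, $C^2$ regularity via (iii-b), Sahbani's abstract LAP, conversion of the weights from the conjugate operator to $B$, and the H\"older-continuity/propagation step) matches the paper. But there is a genuine gap at your step (2), and it lies precisely in your choice of conjugate operator. You fix a \emph{single} operator $A$, the second-quantized dilation generator cut off at the fixed momentum scale $|p_2|\le 1$, and claim a strict Mourre estimate on every closed $\Delta\subset(E,E+m_1-\delta)$, ``uniformly down to $E$'', obtained by ``summing the localized estimates over $n$''. This does not close. On the interval $\Delta_n$ (energies of order $\sigma_n$ above $E$) the positive part of the commutator, $\d\Gamma(\chi^2 w^{(2)})$, is only of size $\sim\sigma_n$, while with a cutoff fixed at scale $1$ the interaction term $g\,H_I(-i\,aG)$ contributes an error of size $O(g)$: the kernel $aG$ has no smallness, since Hypothesis~\ref{hypothesis:3.1}(ii)--(iii-a) only give smallness of $G$ and $(p_2\cdot\nabla_{p_2})G$ on the region $|p_2|\lesssim\sigma$. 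Hence for fixed $g$ the error eventually swamps the gain as $n\to\infty$, and no summation over scales can repair this for a single $A$. The paper's resolution is structurally different: the conjugate operator itself is scale-adapted, $\ant=\1\otimes\d\Gamma(a_n^{(\tau)})\otimes\1$ with $a_n^{(\tau)}$ cut off at $|p_2|\lesssim\sigma_n$ (see \eqref{def:a-n-tau}--\eqref{def:A-n-tau}), so that $K(-i\,a_n^{(\tau)}G)\le \tilde C(G)\sigma_n$ and the interaction error is $O(g\sigma_n)$, comparable to the positive term; the Mourre constant on $\Delta_n$ is then $C_\delta\gamma^2\sigma_n/N^2$ (Theorem~\ref{thm:Mourre}), and the LAP and absolute continuity are obtained \emph{interval by interval} with the $n$-dependent weights $\langle\ant\rangle^{-s}$, then glued by covering any compact subset of $(E,E+m_1-\delta)$ with finitely many $\Delta_n$. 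Relatedly, the strict positivity on $\Delta_n$ is not obtained by a compact-remainder argument at all: it comes from the factorization $f_n(H_n-E_n)=P^n\otimes f_n(H_{0,n}^{(2)})$ (a consequence of the gap of Proposition~\ref{prop:gap}) together with $\|f_n(H-E)-f_n(H_n-E_n)\|=O(g)$.

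Two smaller points. First, your justification for replacing $\langle A\rangle$ by $\langle B\rangle$ (``low-energy states have their neutrino momenta effectively cut off by the energy constraint'') argues the wrong, and unnecessary, direction of comparability; what is needed is only $|\ant|\lesssim B$, which holds unconditionally because $a_n^{(\tau)}$ is a compactly cut-off dilation generator, so $(\ant)^2\le c_n B^2$ by second-quantization monotonicity, and the fractional-power bound $\|\langle\ant\rangle^{s}(B+1)^{-s}\|<\infty$ follows by Hadamard's three-line interpolation, with no spectral localization involved. Second, Hypothesis~\ref{hypothesis:3.1}(i) is not available under the hypotheses of Theorem~\ref{thm:main-2} (only (ii)--(iii) are assumed), so your step (1) should be run with (ii) and (iii-a) alone, as the paper does.
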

%%%%%%%%%%%%%%%%%%%%%%%%%%%%%%%%%%
The assertions (i), (ii), and (iii) of Theorem~\ref{thm:main-2} are
based on a limiting absorption principle stated in
section~\ref{S-LAP}, obtained by a positive commutator estimate,
called Mourre estimate (section~\ref{S-Mourre}), and a regularity
property of $H$ (section~\ref{S-regularity}).

The proof of Theorem~\ref{thm:main-2} is detailed in
section~\ref{S-proof}.

\begin{remark}\label{rem:reduction}
As a representative example of the general process described
above, we consider the decay \eqref{eq:ivb} of
the intermediate vector boson $W^-$ into an electron and an
electron antineutrino.
All Theorems stated in sections~\ref{S2} and
\ref{S:spectral-properties} will obviously remain true for this
simplified model, as well as for any other reduced model involving
only one species of leptons, i.e., for a fixed value of $\ell \in
\{ 1,\,2,\,3\}$, and with or without the inclusion of their
corresponding antiparticles ($\epsilon=\pm$ and $\epsilon'=\pm$).
Moreover, the proofs of these results, based on the theorems
stated in sections~\ref{S-gap}, \ref{S-Mourre}, \ref{S-regularity},
and \ref{S-LAP}, follow exactly the same arguments and estimates
in the general case as in the case of fixed $\ell$, $\epsilon$ and
$\epsilon'$.
For this reason, and for the sake of clarity, we shall fix
$\ell=1$, $\epsilon=+$ and $\epsilon'=-$ in the next sections, and
we shall adopt the following obvious notations
\begin{equation}\label{eq:reduced-notations}
 b^\sharp(\xi_1)=b_{1,+}^\sharp(\xi_1),\quad
 c^\sharp(\xi_2)=c_{1,-}^\sharp(\xi_2),\quad
 a^\sharp(\xi_3)=a_+^\sharp(\xi_3),\quad  
G^{(\alpha)}=G^{(\alpha)}_{\ell,\epsilon,\epsilon'}\, .
\end{equation}
\end{remark}

\begin{remark} 
Let us comment on two alternative approaches. The first one consists in
confining the interaction to a large box. This is actually exactly the first step in the procedure introduced by Glimm and Jaffe in their attempt to define a Hamiltonian in Fock space. In our case, this approach has been described in the introduction of Barbaroux and Guillot 
\cite{BarbarouxGuillot2009}. It is 
well-known that such a simple confinement is unfortunately not sufficient
to make the Hamiltonian well-defined in Fock space (see the reference
{\rm [16]} to the work of Glimm and Jaffe given in 
\cite{BarbarouxGuillot2009}). In order to achieve
this goal, one is obliged to introduce ultraviolet cutoffs in the
momentum variables. In this way, one then gets square integrable interaction kernels which is exactly the setup we are using in the present paper. The second approach consists in 
fibering out total momentum. Indeed, some time ago,
two of the authors tried to implement this approach but, unfortunately, the obstacles appearing in the estimates which were supposed to lead to
a well-defined setup in Fock space could not be surmounted and, hence,
the main theorem of the present paper could not be established. This
problem remains an open and difficult question, and we intend to come
back to its study in near future.
In conclusion, to the best of our knowledge, the conditions on the interaction kernels given in the present paper constitute the most general criterion for our main theorem to hold true.
\end{remark}

\begin{remark}
We  would like to underline the importance of our result
and to position it w.r.t. similar results on the absolute continuity of
the spectrum above the ground state. Fr\"ohlich, Griesemer, and Sigal have recently proved absolute continuity
of the spectrum between the ground state energy and the first threshold for a system of confined atoms interacting with a quantized electromagnetic field (see \cite{Frohlichetal2008}). In order to do so, they used the
Pauli-Fierz transformation (called the Power-Zienau-Woolley transformation
by the physicists) which is of great use for the study of the infrared problem. Unfortunately, this transformation does not have an analogue
for the quantum field theoretical models we are studying. Hence, 
although our paper is strongly inspired by \cite{Frohlichetal2008}, and the overall  scheme of \cite{Frohlichetal2008} and ours are similar, our
method yields a more general approach to this class of problems 
leading to an alternative proof of the results by the above-mentioned
authors. Very recently, Chen, Faupin, Fr\"ohlich, and Sigal \cite{Chenetal2009}
studied the same problem as ours in the case of a model describing an electron which
interacts with a quantized electromagnetic field. They showed absolute 
continuity of the spectrum of the corresponding Hamiltonian at fixed
total momentum by using the Feshbach-Schur map. We have verified that the same approach can be used for our case. But the proof we give
in our paper is much simpler than the one using the Feshbach-Schur map. However, the use of the Feshbach-Schur 
map seems not to be avoidable  in the work of the above-mentioned authors.
\end{remark}

%%%%%%%%%%%%%%%%%%%%%%%%%%%%%%%%%%%%%%%%%%%%%%%%%%%%%%%%%%%%%%%
%%%%%%%%%%%%%%%%%%%%%%%%%%%%%%%%%%%%%%%%%%%%%%%%%%%%%%%%%%%%%%%
%%%%%%%%%%  SPECTRAL GAP FOR cutoff operators    %%%%%%%%%%%
%%%%%%%%%%%%%%%%%%%%%%%%%%%%%%%%%%%%%%%%%%%%%%%%%%%%%%%%%%%%%%%
\section{Spectral gap for cutoff operators}\label{S-gap}

\setcounter{equation}{0}

A key ingredient of the proof of Theorem~\ref{thm:main-1} and
Theorem~\ref{thm:main-2} is the study of cutoff operators
associated with infrared cutoff Hamiltonians with respect to the
momenta of the neutrinos.
The main result of this section is Proposition~\ref{prop:gap}
where we prove that the cutoff operators have a gap in their spectrum
above the ground state energy. This property was already derived
in \cite{BarbarouxGuillot2009} in the case of a sharp ultraviolet
cutoff. We show here that this result remains true in the present
case where no sharp ultraviolet cutoff assumption is made.
According to Remark~\ref{rem:reduction}, for the sake of clarity,
we will consider only the case of one species $\ell=1$ of leptons,
and pick $\epsilon=+$, and $\epsilon'=-$. We thus use the
notations \eqref{eq:reduced-notations}.

Let us first define the cutoff operators which are the
Hamiltonians with infrared cutoff with respect to the momenta of
the neutrinos. For that purpose, let $\chi_0(.)\in C^\infty(\R, [0,1])$ with $\chi_0=1$ on
$(-\infty, 1]$, and, for $\sigma>0$ and $p\in\R^3$, we set
\begin{equation}\label{def:chitilde2}
\begin{split}
 &  \chi_\sigma(p) = \chi_0(|p|/\sigma)\ , \\
 &  \tilde\chi^\sigma(p) = 1 - \chi_\sigma(p)\ .
\end{split}
\end{equation}
Moreover, the operator $H_{I, \sigma}$ is the interaction given by
\eqref{eq:HI-quadraticform-1}, \eqref{eq:HI-quadraticform-2}, and
\eqref{eq:HI-quadraticform-3} associated with the kernels
$\tilde{\chi}^\sigma(p_2) G^{(\alpha)}(\xi_1,\xi_2,\xi_3)$. We
then set
\begin{equation}\label{eq:def-Hsigma-down}
  H_\sigma = H_0 + g H_{I,\sigma}\ .
\end{equation}
Next, let
\begin{eqnarray*}
 & & \Sigma_{1,\sigma} = \Sigma_1 \cap \{(p_2, s_2);\
 |p_2|<\sigma\}\ , \quad
 \Sigma_1^\sig = \Sigma_1 \cap \{ (p_2,s_2);\ |p_2|\geq
 \sigma\}, \\
 & & \gF_{2,\sigma} = \gF_a(L^2(\Sigma_{1,\sigma})) \ ,\quad
 \gF_{2}^\sig = \gF_a(L^2(\Sigma_1^\sig))\ .
\end{eqnarray*}
%%%%%%%%%%%%%%%%%%%%%%%%%%%%%%%%%%%%%%%%%%%%%%%%%%%%%%%%%%%%%%%
The space $\gF_a(L^2(\Sigma_1))$ is the Fock space for the massive
leptons and $(\gF_{2,\sigma} \otimes \gF_{2}^\sig)$ is the Fock
space for the antineutrinos. 
Now, we set
\begin{eqnarray*}
 & & \gF_L^\sig   = \gF_a(L^2(\Sigma_1)) \otimes \gF_{2}^\sig\
 ,\ \mbox{and}\quad
 \gF_{L,\sigma} = \gF_{2,\sigma}\ ,
\end{eqnarray*}
and, we thus have
\begin{equation*}
 \gF_L \simeq \gF_L^\sig \otimes \gF_{L,\sigma}\ .
\end{equation*}
Moreover, with
\begin{equation}\nonumber
\begin{split}
 & \gF^\sigma = \gF_L^\sigma \otimes\gF_W\ , \quad\mbox{and}\quad
 \gF_\sigma = \gF_{L,\sigma}\, ,
 \\
\end{split}
\end{equation}
we can write
\begin{equation*}
 \gF \simeq  \gF^\sigma \otimes \gF_\sigma\ .
\end{equation*}
Next, we set
\begin{equation*}
\begin{split}
 & H_0^{(1)} =
 \int w^{(1)}(\xi_1)\,
 b^*(\xi_1) b(\xi_1)
 \d\xi_1\ , \\
 & H_0^{(2)} =
 \int w^{(2)}(\xi_2)\,
 c^*(\xi_2) c(\xi_2)
 \d\xi_2\ , \\
 & H_0^{(3)} =
 \int w^{(3)}(\xi_3)
 a^*(\xi_3) a(\xi_3)
 \d\xi_3\ , \\
\end{split}
\end{equation*}
and
\begin{equation}\label{eq:def-h02-up-and-down}
\begin{split}
 & H_0^{(2)\sigma} =
 \int_{|p_2|>\sigma} w^{(2)}(\xi_2)\,
 c^*(\xi_2) c(\xi_2)
 \d\xi_2\ , \\
 & H_{0,\sigma}^{(2)} =
 \int_{|p_2|\leq\sigma} w^{(2)}(\xi_2)\,
 c^*(\xi_2) c(\xi_2)
 \d\xi_2\ .
\end{split}
\end{equation}
Then, on $\gF^\sigma\otimes\gF_\sigma$, we have 
\begin{equation*}
  H_0^{(2)} = H_0^{(2)\sigma} \otimes \1_\sigma +
  \1^\sigma\otimes H_{0,\sigma}^{(2)}\ ,
\end{equation*}
where $\1^{\sigma}$ (resp. $\1_\sigma$) is the identity operator
on $\gF^{\sigma}$ (resp. $\gF_\sigma$).
Next, using the definitions
\begin{equation}\nonumber
 H^{\sigma} = H_\sigma|_{\gF^{\,\sigma}}\quad\mbox{and}\quad
 H_0^{\,\sigma} = H_0|_{\gF^\sigma}\ ,
\end{equation}
we get
\begin{equation}\nonumber
  H^{\sigma} = H_0^{(1)} + H_0^{(2)\,\sigma} + H_0^{(3)}+
  g H_{I,\sigma}\quad\mbox{on}\ \gF^{\,\sigma}\ ,
\end{equation}
and
\begin{equation}\label{eq:def-h-sigma-down}
  H_\sigma = H^{\sigma}\otimes\1_\sigma +
  \1^{\,\sigma}\otimes H_{0,\sigma}^{(2)}
  \quad\mbox{on}\ \gF^{\,\sigma}\otimes\gF_\sigma\ .
\end{equation}

Moreover, for $\delta\in\R$ with $  0< \delta< m_1$, 
we define the sequence $(\sigma_n)_{n\geq 0}$ by
\begin{equation}\label{eq:def-sigma-n}
\begin{split}
 & \sigma_0 = 2m_1 +1 \ ,\\
 & \sigma_1 = m_1-\frac{\delta}{2}\ , \\
 & \sigma_{n+1} = \gamma \sigma_n \mbox{ for } n\geq 1\ ,
\end{split}
\end{equation}
where
\begin{equation}\label{eq:def-gamma}
 \gamma = 1 - \frac{\delta}{2m_1 - \delta}\ .
\end{equation}
For $n\geq 0$, we then define the cutoff operators on
$\gF^n=\gF^{\sigma_n}$ by
\begin{equation}\label{eq:def-H-n-up}
H^n = H^{\sigma_n},\quad H_0^n = H_0^{\sigma_n},
\end{equation}
and we denote, for $n\geq 0$,
\begin{equation}\label{eq:def-E-n-up}
 E^n = \inf \spec(H^n)\, .
\end{equation}
Furthermore, we set
\begin{equation}\label{def:D-tilde-delta}
 \tdd(G) =
 \max\left\{ \frac{4(2m_1+1)\gamma}{2 m_1 - \delta},\ 2\right\}
 \kt(G)(2 m_1 \ct + \bt)\ ,
\end{equation}
where $\kt(G)$ is given by
Hypothesis~\ref{hypothesis:3.1}\textit{(iii-a)} and $\bt$ and
$\ct$ are defined for given $\eta>0$ and $\beta>0$ as in
\cite[(3.29)]{BarbarouxGuillot2009} by the  relations
\begin{equation}\label{eq:3.26}
\begin{split}
 & C_{\beta\eta} =
 \left( \frac{3}{m_W} \left(1 + \frac{1}{m_1{}^2}\right)
 + \frac{3\beta}{m_W m_1{}^2} +
 \frac{12\,\eta}{m_1{}^2}(1+\beta)\right)^\frac12\ ,\\
 & B_{\beta\eta} =
 \left( \frac{3}{m_W} \left(1 + \frac{1}{4\beta}\right)
 + 12 \left(\,\eta\left(1+ \frac{1}{4\beta}\right) + \frac{1}{4\eta}\,\right)
 \right)^\frac12\ ,
\end{split}
\end{equation}
\begin{equation*}
 \tilde{C}_{\beta\eta} = C_{\beta\eta}
 \left(1 + \frac{g_1 K(G) C_{\beta\eta}}{1 - g_1 K(G)
 C_{\beta\eta}}\right) ,
\end{equation*}
\begin{equation*}
 \tilde{B}_{\beta\eta} =
 \left(1 + \frac{g_1\, K(G) C_{\beta\eta}}{1-g_1\, K(G)\, C_{\beta\eta}}
 \left(2 + \frac{g_1 K(G) B_{\beta\eta} C_{\beta\eta}}{ 1 - g_1 K(G)
 C_{\beta\eta}}\right)\right)B_{\beta\eta}\ ,
\end{equation*}
and
\begin{equation}\label{eq:3.28}
 K(G) = \left(
 \sum_{\alpha=1,2}
 \big\| G^{(\alpha)}\big\|^2\
 \right)^\frac12 .
\end{equation}
Finally, let $g_\delta^{(1)}$ be such that
\begin{equation}\label{eq:def-gdeltaun}
 0 < g_\delta^{(1)} < \min\left\{1,\ g_1,\
 \frac{\gamma-\gamma^2}{3 \tdd(G)}\right\}\ .
\end{equation}
We then have
%%%%%%%%%%%%%%%%%%%  GAP PROPOSITION %%%%%%%%%%%%%%%%%%
\begin{proposition}\label{prop:gap}
 Suppose that the kernels
 $G^{(\alpha)}$ satisfy
 Hypotheses~\ref{hypothesis:2.1} and \ref{hypothesis:3.1}(ii).
 Then, there exists $\tilde{g}_\delta^{(2)}>0$ with
 $\tilde{g}_\delta^{(2)} \leq g_\delta^{(1)}$ such that, for $g\leq
 \tilde{g}_\delta^{(2)}$, $E^n$ is a simple eigenvalue of $H^n$
 for $n\geq 1$, and $H^n$ does not have spectrum in the interval 
$(E^n,\, E^n+
 (1 - 3 g \tdd(G)/\gamma)\sigma_n)$.
\end{proposition}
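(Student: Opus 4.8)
\emph{Approach and setup.} The statement will be proved by induction on $n\ge 0$, relating each $H^{n+1}$ to $H^n$ through a Feshbach--Schur (equivalently, second‑order perturbation) argument, as in \cite{Bachetal2006} and \cite[Prop.~3.5]{BarbarouxGuillot2009}; the only genuinely new point is that, in the absence of a sharp cutoff, the interaction increment from $H^n$ to $H^{n+1}$ must be estimated by means of Hypothesis~\ref{hypothesis:3.1}(ii). For $n\ge 0$ put $A_n=\{(p_2,s_2):\sigma_{n+1}\le |p_2|<\sigma_n\}$ and use the exact decomposition $\gF^{\sigma_{n+1}}\simeq\gF^{\sigma_n}\otimes\gF_a(L^2(A_n))$ induced by $L^2(\Sigma_1^{\sigma_{n+1}})=L^2(\Sigma_1^{\sigma_n})\oplus L^2(A_n)$. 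Since the neutrino creation/annihilation operators occurring in $H_{I,\sigma_n}$ carry momenta $|p_2|\ge\sigma_n$, one has on this tensor product the operator identity
\begin{equation*}
 H^{n+1}=\tilde H_n+g\,W_n,\qquad
 \tilde H_n:=H^n\otimes\1+\1\otimes\d\Gamma\!\big(|p_2|\,\mathbf 1_{A_n}\big),
\end{equation*}
where $W_n:=H_{I,\sigma_{n+1}}-H_{I,\sigma_n}$ is the interaction built from the kernels $\tilde G^{(\alpha)}:=(\chi_{\sigma_n}-\chi_{\sigma_{n+1}})G^{(\alpha)}$. With a monotone $\chi_0$ whose transition is supported in $[1,c_0]$ with $c_0<\sqrt2$, the factor $\chi_{\sigma_n}-\chi_{\sigma_{n+1}}$ is nonnegative and supported in $\{\sigma_{n+1}\le|p_2|\le c_0\sigma_n\}$, so Hypothesis~\ref{hypothesis:3.1}(ii), applied at scale $c_0\sigma_n$, yields $\big(\sum_{\alpha=1,2}\|\tilde G^{(\alpha)}\|_{L^2}^2\big)^{1/2}\le 2\,\kt(G)\,\sigma_n$. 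This $O(\sigma_n)$ smallness, which shrinks precisely at the rate set by the geometric sequence $(\sigma_n)$, is the quantitative heart of the argument.

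\emph{Base case.} The free operator $H_0^{\sigma_0}$ has the Fock vacuum $\Omega$ as simple ground state with eigenvalue $0$, and every other vector of $\gF^{\sigma_0}$ carries free energy at least $\min(m_1,\sigma_0,m_W)=m_1$, so $H_0^{\sigma_0}$ has a spectral gap $m_1$ above its ground state. By Theorem~\ref{thm:self-adjointness}, $gH_{I,\sigma_0}$ is $H_0^{\sigma_0}$‑bounded with relative bound $O(g)$; standard perturbation theory for an isolated simple eigenvalue then shows that for all $g$ below some threshold $g_{\mathrm{base}}>0$ (depending only on $\delta$, $m_1$, $K(G)$) the operator $H^0$ has a simple ground state with $E^0=\inf\spec(H^0)$ and $\spec(H^0)\cap(E^0,E^0+\sigma_1)=\emptyset$ (recall $\sigma_1=m_1-\delta/2<m_1$). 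Applying the induction step below with $n=0$ then gives the claim for $n=1$.

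\emph{Induction step $n\to n+1$.} Suppose $E^n$ is a simple eigenvalue of $H^n$ and $\spec(H^n)\cap(E^n,E^n+G_n)=\emptyset$, with $G_n=\sigma_1$ if $n=0$ and $G_n=(1-3g\tdd(G)/\gamma)\sigma_n$ if $n\ge1$; since $g\le g_\delta^{(1)}<(\gamma-\gamma^2)/(3\tdd(G))$, we have $G_n\ge\gamma\sigma_n=\sigma_{n+1}$ in all cases. As the one‑neutrino energy on $A_n$ is $\ge\sigma_{n+1}$, the operator $\tilde H_n$ has $\psi^n\otimes\Omega$ as simple ground state with eigenvalue $E^n$ and $\spec(\tilde H_n)\cap(E^n,E^n+\sigma_{n+1})=\emptyset$. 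Let $P=P^n\otimes|\Omega\rangle\langle\Omega|$ be its rank‑one ground‑state projection, $\bar P=\1-P$. Using the relative bound of the interaction from \cite[(3.29)]{BarbarouxGuillot2009}, but with $H_0$ replaced by $\tilde H_n-E^n$ — the passage from the $H_0$‑relative bound with constants $\cb,\bb$ to this one with constants $\ct,\bt$ being exactly what the Neumann‑series expressions defining $\ct=\tilde C_{\beta\eta}$, $\bt=\tilde B_{\beta\eta}$ encode, for a convenient fixed choice of the parameters $\beta,\eta>0$ — one gets, for $\lambda\le E^n+(1-2g\tdd(G)/\gamma)\sigma_{n+1}$,
\begin{equation*}
 g\,\big\|W_n\,\big(\bar P(\tilde H_n-\lambda)\bar P\big)^{-1}\bar P\big\|
 \ \lesssim\ \frac{g\,\kt(G)\,(2m_1\ct+\bt)\,\sigma_n}{2g\,\tdd(G)\,\sigma_{n+1}/\gamma}\ <\ 1,
\end{equation*}
the last inequality following from the definition \eqref{def:D-tilde-delta} of $\tdd(G)$ — its second branch ($2$) together with $\sigma_n/\sigma_{n+1}=\gamma^{-1}$ when $n\ge1$, and its first branch $4(2m_1+1)\gamma/(2m_1-\delta)$ together with $\sigma_0/\sigma_1=2(2m_1+1)/(2m_1-\delta)$ when $n=0$. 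Hence $\bar P(H^{n+1}-\lambda)\bar P$ is boundedly invertible on $\mathrm{Ran}\,\bar P$ for every such $\lambda$, so the Feshbach--Schur map $F_P(H^{n+1}-\lambda)$ is well defined; being a rank‑one perturbation of $(E^n-\lambda)P$ of size $O(g\tdd(G)\sigma_{n+1}/\gamma)$, its isospectrality with $H^{n+1}$ on $(-\infty,E^n+(1-2g\tdd(G)/\gamma)\sigma_{n+1})$ gives that $H^{n+1}$ has in that half‑line exactly one point of spectrum, a simple eigenvalue $E^{n+1}$ with $|E^{n+1}-E^n|\le g\tdd(G)\sigma_{n+1}/\gamma$. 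A short computation then yields $\spec(H^{n+1})\cap\big(E^{n+1},\,E^{n+1}+(1-3g\tdd(G)/\gamma)\sigma_{n+1}\big)=\emptyset$, which closes the induction. Taking $\tilde g_\delta^{(2)}:=\min\{g_\delta^{(1)},g_{\mathrm{base}}\}$ finishes the proof: the smallness used in the induction step, $g\tdd(G)/\gamma<(1-\gamma)/3<\tfrac13$, is uniform in $n$ and already guaranteed by $g\le g_\delta^{(1)}$.

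\emph{Main obstacle.} The delicate step is the induction step: one must show that $gW_n$ is small relative to the gap $\sigma_{n+1}$ of $\tilde H_n$ with a constant sharp enough to reproduce the factor $(1-3g\tdd(G)/\gamma)$, and — unlike in \cite{BarbarouxGuillot2009} — $W_n$ does \emph{not} split as $\1^{\sigma_n}\otimes(\cdot)+(\cdot)\otimes\1$ on $\gF^{\sigma_n}\otimes\gF_a(L^2(A_n))$, because the smooth cutoff difference $\chi_{\sigma_n}-\chi_{\sigma_{n+1}}$ has tails above $\sigma_n$. This is why one works with relative bounds for $W_n$ only, combined with the quantitative estimate $\|\tilde G^{(\alpha)}\|_{L^2}\lesssim\kt(G)\sigma_n$ from Hypothesis~\ref{hypothesis:3.1}(ii); the numerical loss caused by the cutoff tails, and the special status of the first scale $\sigma_0=2m_1+1$, are exactly what the two branches of the maximum in \eqref{def:D-tilde-delta} absorb.
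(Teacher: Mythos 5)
Your proposal is correct in substance and shares the paper's skeleton: the same shell decomposition $\gF^{\sigma_{n+1}}\simeq\gF^{\sigma_n}\otimes\gF_a(L^2(A_n))$, the same increment operator (the paper's $H_{I,n}^{n+1}$, with kernels $(\tilde\chi^{\sigma_{n+1}}-\tilde\chi^{\sigma_n})G^{(\alpha)}$), the same crucial use of Hypothesis~\ref{hypothesis:3.1}(ii) to get $K_n^{n+1}(G)\lesssim \kt(G)\sigma_n$ and hence a relative bound of the increment with respect to $\tilde H_+^n+\sigma_{n+1}$ of size $g\tdd(G)/\gamma$, the same special treatment of the scale $\sigma_0=2m_1+1$ (your $g_{\mathrm{base}}$ plays the role of the paper's $g_3$, obtained there by an explicit min--max computation giving a gap $\ge m_1$ for $H^0$), and the same bookkeeping via the two branches of the maximum in \eqref{def:D-tilde-delta}. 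Where you genuinely diverge is the mechanism of the induction step: you invert $\bar P(H^{n+1}-\lambda)\bar P$ by a Neumann series and invoke Feshbach--Schur isospectrality to localize the low-lying spectrum to a single simple eigenvalue, whereas the paper converts the same relative bound into a quadratic-form inequality (via Kato), obtains the operator lower bound $H^{n+1}\ge E^n-\tfrac{g\tdd(G)}{\gamma}\sigma_{n+1}+\bigl(1-\tfrac{g\tdd(G)}{\gamma}\bigr)\tilde H_+^n$, and concludes with the min--max principle (second min--max value tested against $\tilde\psi^{n+1}=\psi^n\otimes\Omega_n^{n+1}$), together with the variational upper bound $E^{n+1}\le E^n+\tfrac{g\tdd(G)}{\gamma}\sigma_{n+1}$. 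The paper's route is lighter (no isospectrality or second-order Feshbach terms to control, and simplicity plus the gap come out of one min--max argument); your route buys a sharper localization statement (exactly one spectral point below $E^n+(1-2g\tdd(G)/\gamma)\sigma_{n+1}$) at the cost of more machinery. One caveat on your key display: $W_n$ is \emph{not} bounded (it contains the boson field), so the written bound, which treats $g\|W_n(\bar P(\tilde H_n-\lambda)\bar P)^{-1}\bar P\|$ as (kernel norm)$\times$(reduced resolvent norm), is not literally valid; you must keep the relative-bound structure, i.e.\ use $g\|W_n\psi\|\le \tfrac{g\tdd(G)}{\gamma}\|(\tilde H_+^n+\sigma_{n+1})\psi\|$ and then estimate $\sup_{\mu\ge\sigma_{n+1}}(\mu+\sigma_{n+1})/(\mu-(\lambda-E^n))$ on $\mathrm{Ran}\,\bar P$. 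Doing this correctly still yields a Neumann bound $\le 1/2<1$ at $\lambda=E^n+(1-2g\tdd(G)/\gamma)\sigma_{n+1}$ (in fact invertibility holds up to $E^n+(1-g\tdd(G)/\gamma)\sigma_{n+1}$), so your conclusion and the constant $3$ in the gap survive; also note that only the \emph{upper} bound on $E^{n+1}-E^n$ is needed for the final gap, so the $O(g^2)$ second-order Feshbach shift is harmless.
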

%%%%%%%%%%%%%%%%%%%%%%%%%%%%%%%%%%%%%%%%%%%%%%%%%%%%%%%
%
%
%
%%%%%%%%%%%%
\begin{proof} The proof of Proposition~\ref{prop:gap} is a slight
modification of the proof of
\cite[Proposition~3.5]{BarbarouxGuillot2009} which was based on
the method develop in \cite{Bachetal2006}. The only difference
to the proof of \cite[Proposition~3.5]{BarbarouxGuillot2009}
is that we have to deal with the absence of the sharp ultraviolet
cutoff.
To do so, we define, for $n\geq 0$, 
\begin{equation}\nonumber
\Sigma_n^{n+1} = \Sigma_1 \cap \{(p_2,\, s_2);\, \sigma_{n+1}\leq |p_2|
<\sigma_n\},
\end{equation}
and
\begin{equation}\nonumber
 \gF_n^{n+1} = \gF_a\left( L^2(\Sigma_n^{n+1})\right)\, ,
\end{equation}
and we get
\begin{equation}\nonumber
 \gF^{n+1} \simeq \gF^n\otimes \gF_n^{n+1}\, .
\end{equation}
Now, let $\Omega^n$ (respectively $\Omega_n^{n+1}$) be the vacuum
state in $\gF^n$ (respectively in $\gF_n^{n+1}$),  and set
\begin{equation}\nonumber
 H_{0,n}^{n+1} = \int_{\sigma_{n+1} \leq |p_2| <\sigma_n}
 w^{(2)}(\xi_2) c^*(\xi_2) c(\xi_2) \d \xi_2\, ,
\end{equation}
which is self-adjoint in $\gF_n^{n+1}$.
Moreover, we denote by $H_I^n$ and $H_{I,n}^{n+1}$ the operators
defined as the interaction $H_I$ given by
\eqref{eq:HI-quadraticform-1}-\eqref{eq:HI-quadraticform-3} (for
$\ell=1$, $\epsilon=+$ and $\epsilon'=-$), but associated,
respectively, with the kernels
\begin{equation}\nonumber
 \tilde{\chi}^{\sigma_n}(p_2) G^{(\alpha)}(\xi_1,\, \xi_2,\, \xi_3),
\end{equation}
and
\begin{equation}\nonumber
 \left(
 \tilde{\chi}^{\sigma_{n+1}}(p_2) - \tilde{\chi}^{\sigma_n}(p_2)
 \right)
 G^{(\alpha)}(\xi_1,\, \xi_2,\, \xi_3),
\end{equation}
where $\tilde{\chi}^{\sigma_n}$ and $\tilde{\chi}^{\sigma_{n+1}}$
are defined as in \eqref{def:chitilde2}.
We also consider
\begin{equation}\nonumber
\begin{split}
 & H_+^n = H^n -E^n\, ,\\
 & \tilde{H}_+^n = H_+^n \otimes \1_n^{n+1}
 + \1_n\otimes H_{0,n}^{n+1}\, ,
\end{split}
\end{equation}
which are self-adjoint operators in $\gF^n$ and $\gF^{n+1}$,
respectively.
Combining \eqref{eq:2.53} of Lemma~\ref{lem:5.5} with
\eqref{eq:3.26} and \eqref{eq:3.28}, we obtain, for $n\geq 0$ and
$\psi\in\cD(H_{0}^{n})\subset\gF^n$,
\begin{equation}\label{eq:A7}
 g\| H_I^n \psi\| \leq g K(G) \left( C_{\beta\eta} \|H_0\psi\| +
 B_{\beta\eta}\|\psi\|\right)\, .
\end{equation}
It then follows from \cite[Section~V,~Theorem~4.11]{Kato1966} that
\begin{equation}\nonumber
\begin{split}
   & H^n \geq -\frac{g K(G) B_{\beta\eta}}{1 - g_1 K(G)
   C_{\beta\eta}} \geq - \frac{ g_1 K(G) B_{\beta\eta}}
   {1 - g_1 K(G) C_{\beta\eta}}\, ,\\
   & E^n \geq - \frac{ g K(G) B_{\beta\eta}}
   { 1 - g_1K(G) C_{\beta\eta}} \, .
\end{split}
\end{equation}
Since we have
\begin{equation}\nonumber
   ( \Omega^n,\, H^n \Omega^n ) = 0 ,
\end{equation}
we get
\begin{equation}\label{eq:A.11}
   E^n\leq 0\, ,\quad\mbox{and}\quad
   |E^n| \leq \frac{g K(G) B_{\beta\eta}}{1 - g_1 K(G)
   C_{\beta\eta}}\, .
\end{equation}
Let now
\begin{equation}\label{A.12}
 K_n^{n+1}(G) = K(\1_{\sigma_{n+1}\leq |p_2|\leq 2\sigma_n}G)\, .
\end{equation}
Combining \eqref{eq:2.53} with \eqref{eq:3.26} and \eqref{A.12},
we obtain, for $n\geq 0$ and $\psi\in \cD(H_0^{n+1})\subset\gF^{n+1}$,
\begin{equation}\nonumber
 g \|H_{I,n}^{n+1} \psi\|
 \leq g K_n^{n+1}(G) \left(C_{\beta\eta} \|H_0^{n+1} \psi\| +
 B_{\beta\eta} \|\psi\|\right)\, .
\end{equation}
Moreover, we  have
\begin{equation}\label{eq:A.14}
 H_0^{n+1} \psi = \tilde{H}_+^n\psi + E^n\psi -
 g(H_I^n\otimes\1_n^{n+1})\psi\, ,
\end{equation}
and by \eqref{eq:A7},
\begin{equation}\label{eq:A.15}
  g\| (H_I^n\otimes \1_n^{n+1})\psi\|
  \leq g K(G)
  (C_{\beta\eta} \|H_0^{n+1}\psi\| + B_{\beta\eta}\|\psi\|)\, .
\end{equation}
In view of \eqref{eq:A.11} and \eqref{eq:A.14}, it follows from
\eqref{eq:A.15} that
\begin{equation}\nonumber
\begin{split}
 & g\left\| \left( H_I^n\otimes\1_n^{n+1}\right)\psi\right\| \\
 & \leq \frac{g K(G) C_{\beta\eta}}{1 - g_1 K(G) C_{\beta\eta}}
 \| \tilde{H}_+^n\psi\|
 + \frac{g K(G) B_{\beta\eta}}{1 - g_1 K(G) C_{\beta\eta}}
 \left( 1 +\frac{g K(G) B_{\beta\eta}}{1 - g_1 K(G)
 C_{\beta\eta}}\right)
 \| \psi\|\, .
\end{split}
\end{equation}
Therefore, we obtain
\begin{equation}\label{eq:A.17}
 g  \| H_{I,n}^{n+1} \psi\|
 \leq g K_n^{n+1}(G) \left( \ct \| \tilde{H}_+^n\psi\|
 + \bt \|\psi\|\right)\, .
\end{equation}
Moreover, due to Hypothesis~\ref{hypothesis:3.1}\textit{(ii)}, we have
\begin{equation}\label{eq:A.18}
 K_n^{n+1}(G) \leq 2\, \sigma_n \tilde{K}(G)\, .
\end{equation}
%%%%%%%%%%%%%%%%%%%%%%
Now, recall that for $n\geq 0$, we have $\sigma_{n+1} < m_1$.
Therefore, by \eqref{eq:A.17} and \eqref{eq:A.18}, we get
\begin{equation}\nonumber
 g\, \|H_{I,n}^{n+1} \psi\| \leq g\,
 K_n^{n+1}(G)\, \big(\, \tilde C_{\beta\eta} \| (\tilde H_+^n +
 \sigma_{n+1})\psi\| + (\tilde C_{\beta\eta}\, m_1 + \tilde
 B_{\beta\eta}) \|\psi\|\, \big)\ ,
\end{equation}
and, for $\phi\in\gF$,
\begin{equation}\label{eq:B.28}
\begin{split}
  g\| H_{I,n}^{n+1} (\tilde H_+^n +
 \sigma_{n+1})^{-1}\phi\|
 & \leq g\, K_n^{n+1}(G)\, \left(\, \tilde C_{\beta\eta} +
 \frac{ m_1 \tilde C_{\beta\eta} + \tilde
 B_{\beta\eta}}{\sigma_{n+1}}\, \right) \|\phi\| \\
 & \leq \frac{g}{\gamma}\,
 \tilde K(G) (2m_1 \tilde C_{\beta\eta} + \tilde B_{\beta\eta})
 \|\phi\|\ .
\end{split}
\end{equation}
Thus, by \eqref{eq:B.28}, the operator $H_{I,n}^{n+1} (\tilde
H_+^n + \sigma_{n+1})^{-1}$ is bounded and
\begin{equation}\nonumber
 g \|H_{I,n}^{n+1} (\tilde
 H_+^n + \sigma_{n+1})^{-1} \| \leq g\frac{\tilde
 {D}_{\delta}(G)}{\gamma}\ ,
\end{equation}
where $\tilde{D}_{\delta}(G)$ is given by
\eqref{def:D-tilde-delta}.
This yields, for $\psi\in\cD(\tilde H_+^n)$,
\begin{equation}\nonumber
 g \| H_{I,n}^{n+1} \psi\| \leq g \frac{\tilde{D}_{\delta}(G)}{\gamma}
  \| (\tilde H_+^n + \sigma_{n+1})\psi\| \ .
\end{equation}
Hence, it follows from \cite[\S V, Theorems~4.11 and
4.12]{Kato1966} that
\begin{equation}\label{eq:B.22}
  g | (H_{I,n}^{n+1}\psi,\, \psi)| \leq g\frac{\tilde{D}_{\delta}(G)}
  {\gamma} (\, (\tilde H_+^n + \sigma_{n+1})\psi,\,\psi\,)\ .
\end{equation}
For $g_\delta^{(1)}$ given by \eqref{eq:def-gdeltaun}, let
$g_\delta^{(2)}>0$ be such that
\begin{equation}\nonumber
 g_\delta^{(2)} \frac{\tilde{D}_{\delta}(G)}{\gamma} <1
 \quad\mbox{and}\quad
 g_\delta^{(2)} \leq g_\delta^{(1)}\ .
\end{equation}
By \eqref{eq:B.22}, we then get, for $g\leq g_\delta^{(2)}$,
\begin{equation}\label{eq:B.23}
 H^{n+1} = \tilde H_+^n + E^n + g H_{I,n}^{n+1}
 \geq E^n - \frac{g\, \tilde{D}_{\delta}(G)}{\gamma}\, \sigma_{n+1}
 + \Big(1- \frac{g\, \tilde{D}_{\delta}(G)}{\gamma}\Big) \tilde H_+^n\ .
\end{equation}
Because $(1- g \tilde{D}_{\delta}(G) / \gamma) \tilde H_+^n \geq
0$, it follows from \eqref{eq:B.23} that, for $n\geq 0$,
\begin{equation}\label{eq:B.24}
 E^{n+1} \geq E^n - \frac{g\, \tilde{D}_{\delta}(G)}{\gamma}\,
 \sigma_{n+1} .
\end{equation}
Suppose now that $\psi^n\in\gF^n$ with $\|\psi^n\|=1$ satisfies, for
$\epsilon>0$,
\begin{equation}\label{eq:B.25}
 E^n \leq (\psi^n,\, H^n \psi^n) \leq E^n + \epsilon \ .
\end{equation}
Then, for
\begin{equation}\label{eq:B.26}
 \tilde\psi^{n+1} = \psi^n\otimes\Omega_n^{n+1} \in \gF^{n+1}\ ,
\end{equation}
we obtain
\begin{equation}\label{eq:B.27}
 E^{n+1} \leq (\tilde\psi^{n+1},\, H^{n+1} \tilde\psi^{n+1})
 \leq E^n + \epsilon + g(\tilde\psi^{n+1},\ H_{I,n}^{n+1}
 \,\tilde\psi^{n+1})\, .
\end{equation}
By \eqref{eq:B.22}, \eqref{eq:B.25}, \eqref{eq:B.26}, and
\eqref{eq:B.27} we get, for every $\epsilon>0$,
\begin{equation}\nonumber
 E^{n+1} \leq E^n + \epsilon\left(1 + \frac{g\,\tilde{D}_{\delta}(G)}{\gamma}\right)
 + \frac{g\, \tilde{D}_{\delta}(G)}{\gamma} \, \sigma_{n+1}\ ,
\end{equation}
where $g\leq g_\delta^{(2)}$. This yields
\begin{equation}\label{eq:A.37}
 E^{n+1} \leq E^n + \frac{g\, \tilde{D}_{\delta}(G)}{\gamma}\, \sigma_{n+1}\ ,
\end{equation}
and by \eqref{eq:B.24}, we obtain
\begin{equation}\nonumber
 |E^n - E^{n+1}| \leq \frac{g\, \tilde{D}_{\delta}(G)}{\gamma}\, \sigma_{n+1}\
 .
\end{equation}

Let us first check that, for $\sigma_0$ given by
\eqref{eq:def-sigma-n}, $E^0$ is a simple isolated
eigenvalue of $H^0$ with
\begin{equation}\label{eq:A.19.1}
\inf \spec (H^0)\setminus\{E^0\} \geq m_1\, .
\end{equation}
Since
\begin{equation}\nonumber
\begin{split}
  g\| H_I(\1_{\sigma_0\leq |p_2|} G) \psi\|
 &\leq g K(G) (\cb\| H_0^0 \psi\|
 + \bb\|\psi\|) \\
 & \leq g K(G) (\cb\| (H_0^0 +1)\psi\|
 + (\cb + \bb)\| \psi \|)\, ,
\end{split}
\end{equation}
we get
\begin{equation}\label{eq:A.19.3}
 g \| H_I(\1_{\sigma_0\leq |p_2|} G)\psi\|
 \leq
 g K(G) (2\cb + \bb) \| (H_0^0 +1)\psi\|,
\end{equation}
and
\begin{equation}\label{eq:A.19.4}
 g \left|
 (\psi,\, H_I(\1_{\sigma_0\leq |p_2|} G) \psi)\right|
 \leq
 g K(G) (2\cb + \bb) (\psi,\, (H_0^0+1)\psi)\, .
\end{equation}
Set now
\begin{equation}\label{eq:A.19.5}
 \mu_2 =
 \sup_{\substack{\phi\in\gF^0\\ \phi\neq0}}\ \
 \inf_{\substack{\psi\in\cD(H^0) \\ (\psi,\phi)=0\\ \|\psi\|=1}}
 (\psi,\, H^0\psi)\, .
\end{equation}
By \eqref{eq:A.19.4} and \eqref{eq:A.19.5}, we have, for
$\Omega^0$ being the vacuum state in $\gF^{0}$,
\begin{equation}\nonumber
 \mu_2 \geq \inf_{\substack{\psi\in\cD(H^0)\\
 (\psi,\Omega^0)=0\\ \|\psi\|=1}}
 (\psi,\, H^0\psi)
 \geq \sigma_0 - gK(G) (2\cb + \bb)(\sigma_0+1) \, .
\end{equation}
Using the definition
\begin{equation}\nonumber
 g_3 = \frac{1}{2 K(G) (2\cb +\bb)}\, ,
\end{equation}
we get, for $g\leq g_3$,
\begin{equation}\nonumber
 \mu_2 \geq \frac{\sigma_0 - 1}{2} \geq E^0 + m_1,
\end{equation}
since $\sigma_0 = 2m_1+1$ and $E^0\leq 0$. Therefore, by
the min-max principle, $E^0$ is a simple eigenvalue of
$H^0$ and \eqref{eq:A.19.1} holds true.
We now conclude the proof of  Proposition~\ref{prop:gap} by
induction in $n\in\N$. Suppose that $E^n$ is a simple isolated
eigenvalue of $H^n$ such that, for $n\geq 1$,
\begin{equation}\nonumber
  \inf \left( \spec(H_+^n)\setminus\{0\} \right)
  \geq \Big(1 -\frac{3 g \tilde{D}_{\delta}(G)}{\gamma}\Big)
  \sigma_n\, .
\end{equation}
Due to \eqref{eq:def-sigma-n}-\eqref{eq:def-gdeltaun}, we have,
for $0<g\leq g_\delta^{(1)}$ and $n\geq 1$,
\begin{equation}\nonumber
 0 < \sigma_{n+1} < \Big(1- \frac{3 g \tdd(G)}{\gamma}\Big)\sigma_n \, .
\end{equation}
Therefore, for $g\leq g_\delta^{(2)}$, $0$ is also a simple
isolated eigenvalue of $\tilde{H}_+^n$ such that
\begin{equation}\label{eq:A.47}
  \inf \left( \spec(\tilde H_+^n)\setminus\{0\} \right)
  \geq \sigma_{n+1}\ .
\end{equation}
We now prove that $E^{n+1}$ is a simple isolated eigenvalue
of $H^{n+1}$ such that
\begin{equation}\nonumber
 \inf \left( \spec(H_+^{n+1}) \setminus\{0\}\right)
 \geq \Big(1-\frac{3 g \tilde{D}_{\delta}(G)}{\gamma}\Big)\sigma_{n+1}\ .
\end{equation}
To this end, let
\begin{equation}\nonumber
 \lambda^{(n+1)} = \sup_{\substack{\psi\in\gF^{n+1}\\ \psi\neq 0}}
 \ \ \inf_{\substack{(\phi,\psi)=0\\ \phi\in\cD(H^{n+1})\\ \|\phi\|=1}}
 (\phi,\, H_+^{n+1}\phi)\ .
\end{equation}
By \eqref{eq:B.23} and \eqref{eq:A.37}, we obtain, in $\gF^{n+1}$,
\begin{equation}\label{eq:A.50}
\begin{split}
 H_+^{n+1} & \geq E^n - E^{n+1}-\frac{g \tilde{D}_{\delta}(G)}
 {\gamma}\sigma_{n+1}
 + \Big(1- \frac{g \tilde{D}_{\delta}(G)}{\gamma}\Big)\tilde H_+^{n}\\
 & \geq \Big(1-\frac{g \tilde{D}_{\delta}(G)}{\gamma}\Big) \tilde H_+^n -
  \frac{2 g \tilde{D}_{\delta}(G)}{\gamma} \sigma_{n+1}\ .
\end{split}
\end{equation}
By \eqref{eq:B.26}, $\tilde\psi^{n+1}$ is the unique ground state
of $\tilde H_+^n$ and by \eqref{eq:A.47} and \eqref{eq:A.50}, we
have, for $g\leq g_\delta^{(2)}$,
\begin{equation}\nonumber
\begin{split}
 \lambda^{(n+1)} & \geq \inf_{\substack{(\phi,\tilde\psi^{n+1})=0\\
 \phi\in\cD(H^{n+1})\\ \|\phi\|=1}} (\phi, H_+^{n+1}\phi) \\
  & \geq \Big(1-\frac{g\tilde{D}_{\delta}(G)}{\gamma}\Big)\sigma_{n+1} - \frac{2 g
  \tilde{D}_{\delta}(G)}{\gamma} \sigma_{n+1}
  = \Big(1- \frac{3 g \tilde{D}_{\delta}(G)}{\gamma}\Big)
  \sigma_{n+1} >0\ .
\end{split}
\end{equation}
This concludes the proof of Proposition~\ref{prop:gap}, if one
proves that for
\begin{equation}\nonumber
 \tilde{g}_\delta^{(2)} = \min \left\{g_\delta^{(2)},\, g_3\right\}\, ,
\end{equation}
the operator $H^1$ satisfies the gap condition
\begin{equation}\label{eq:gap-cond-1}
 \inf\left(\spec(H_+^1)\setminus\{0\}\right)\geq
 \Big(1-\frac{3g\tilde{D}_{\delta}(G)}{\gamma}\Big)\sigma_1\, .
\end{equation}
By noting that $0$ is a simple isolated eigenvalue of $\tilde
H_+^0$ with $\inf(\spec(\tilde H_+^0)\setminus\{0\} )
\geq\sigma_1$, we prove that $E^1$ is indeed an isolated simple
eigenvalue of $H^1$ such that \eqref{eq:gap-cond-1} holds, by
mimicking the proof given above for $H_+^{n+1}$.
\end{proof}
%%%%%%%%%%%%

%%%%%%%%%%%%%%%%%%%%%%%%%%%%%%%%%%%%%%%%%%%%%%%%%%%%%%%%%%%%%%%
%%%%%%%%%%%%%%%%%%%%%%%%%%%%%%%%%%%%%%%%%%%%%%%%%%%%%%%%%%%%%%%
%%%%%%%%%%     MOURRE INEQUALITY                %%%%%%%%%%%%%%%
%%%%%%%%%%%%%%%%%%%%%%%%%%%%%%%%%%%%%%%%%%%%%%%%%%%%%%%%%%%%%%%
\section{Mourre inequality}\label{S-Mourre}

\setcounter{equation}{0}

Let us set
\begin{equation}\nonumber
 \tau = 1 - \frac{\delta}{2(2m_1 - \delta)}\ .
\end{equation}
According to \eqref{eq:def-gamma}, we have
\begin{equation}\label{eq:4.14}
 0 < \gamma < \tau <
 1\quad\mbox{and}\quad\frac{\tau-\gamma}{2}<\gamma\ .
\end{equation}
Moreover, let $\chi^{(\tau)}\in C^\infty(\R,\, [0,1])$ be such that
\begin{equation}\nonumber
 \chi^{(\tau)}(\lambda) = \left\{
  \begin{array}{ll}
   1 & \mbox{ for } \lambda\in(-\infty,\, \tau]\, , \\
   0 & \mbox{ for } \lambda\in[1,\, \infty) \, .
  \end{array}
  \right.
\end{equation}
With the definition \eqref{eq:def-sigma-n} of $(\sigma_n)_{n\geq
0}$, we set, for all $p_2\in\R^3$ and $n\geq 1$,
\begin{equation}\nonumber
 \chi_n^{(\tau)} (p_2) =
 \chi^{(\tau)}\left(\frac{|p_2|}{\sigma_n}\right),
\end{equation}
\begin{equation}\label{def:a-n-tau}
 a_n^{(\tau)} = \chi_n^{(\tau)}(p_2) \frac12
 \left(p_2\cdot i\nabla_{p_2}
 + i\nabla_{p_2}\cdot p_2\right)
 \chi_n^{(\tau)}(p_2)\ ,
\end{equation}
and
\begin{equation}\label{def:A-n-tau}
 \ant = \1\otimes\d \Gamma(a_n^{(\tau)})\otimes\1\, ,
\end{equation}
where $\d\Gamma(.)$ refers to the usual second quantization of one
particle operators. The operators $a_n^{(\tau)}$ and $\ant$ are
self-adjoint, and we also have
\begin{equation}
 a_n^{(\tau)} = \frac12
 \left(\chi_n^{(\tau)}(p_2){}^2 p_2\cdot
 i\nabla_{p_2} +
 i\nabla_{p_2} \cdot p_2 \, \chi_n^{(\tau)}(p_2){}^2\right) \ .
\end{equation}
Next, let $N$ be the smallest integer such that
\begin{equation}\label{eq:4.19}
 N\gamma \geq 1\ .
\end{equation}
Due to \eqref{eq:def-sigma-n}-\eqref{eq:def-gdeltaun}, we have,
for $0<g\leq g_\delta^{(1)}$,
\begin{equation}\label{eq:4.9}
 0 < \gamma < \Big(1- \frac{3 g \tdd(G)}{\gamma}\Big)\ .
\end{equation}
Therefore, according to \eqref{eq:4.9} and \eqref{eq:4.19}, we
have
\begin{equation}\label{eq:4.20}
 \gamma < \gamma+\frac{1}{N}\left(1 - \frac{3 g \tdd(G)}{\gamma}
 -\gamma\right) < 1 - \frac{3 g \tdd}{\gamma}\ ,
\end{equation}
and
\begin{equation}\label{eq:4.21}
 \frac{\gamma}{N} \leq \gamma - \frac{1}{N}\left(
 1 - \frac{3 g \tdd(G)}{\gamma} -\gamma\right) < \gamma\ .
\end{equation}
Let us next define
\begin{equation}\label{eq:def-eps-gamma}
 \epsilon_\gamma = \min\left\{\frac{1}{2N} \Big(1-\frac{3 g
 \tdd}{\gamma} -\gamma\Big),\, \frac{\tau-\gamma}{4}\right\}\ ,
\end{equation}
and choose $f\in C_0^\infty(\R)$ such that $0\leq f \leq 1$ and
\begin{equation}\label{eq:4.23}
f(\lambda)=
 \left\{
 \begin{array}{ll}
  1 & \mbox{ if } \lambda\in [(\gamma-\epsilon_\gamma)^2,\, \gamma
  + \epsilon_\gamma]\, , \\
  0 & \mbox{ if } \lambda > \gamma+ 2\epsilon_\gamma\, , \\
  0 & \mbox{ if } \lambda < (\gamma - 2\epsilon_\gamma)^2\, .
 \end{array}
 \right.
\end{equation}
Note that using the definition \eqref{eq:def-eps-gamma} of
$\epsilon_\gamma$, \eqref{eq:4.20}, \eqref{eq:4.14}, and
\eqref{eq:4.21}, we have, for $g\leq g_\delta^{(1)}$,
\begin{equation}\nonumber
 \gamma + 2\epsilon_\gamma < 1  - \frac{3 g
 \tdd}{\gamma}\, ,
\end{equation}
where $g_\delta^{(1)}$ is defined by \eqref{eq:def-gdeltaun}. 
Moreover, we also have $\gamma + 2\epsilon_\gamma <\tau$, and
\begin{equation}\label{eq:4.26}
 \gamma - \epsilon_\gamma >\frac{\gamma}{N}\ .
\end{equation}
Next, for $n\geq 1$, we define
\begin{equation}\label{eq:def-fn}
 f_n(\lambda) = f\left(\frac{\lambda}{\sigma_n}\right)\ ,
\end{equation}
and
\begin{equation}\label{eq:def2-Hn-down}
 H_n = H_{\sigma_n}\,,
 \quad
 E_n=\inf\spec(H_n)
 \quad\mbox{and }
 H_{0,n}^{(2)} = H_{0,\sigma_n}^{(2)}\ ,
\end{equation}
where we used the definitions \eqref{eq:def-h02-up-and-down} and
\eqref{eq:def-h-sigma-down} for $H_{0,\sigma_n}^{(2)}$ and
$H_{\sigma_n}$. Note that $E_n = E^n$, where $E^n$ is defined by
\eqref{eq:def-E-n-up}. Let $P^n$ denote the ground state
projection of $H^n$. It follows from Proposition~\ref{prop:gap}
that, for $n\geq 1$ and $g\leq \tilde{g}_\delta^{(2)}$,
\begin{equation}\label{eq:4.32}
 f_n(H_n - E_n) = P^n\otimes f_n(H_{0, n}^{(2)})\ .
\end{equation}
For $E=\inf\spec(H)$ being the ground state energy of $H$ defined
in Theorem~\ref{thm:main-1}, and any interval $\Delta$, let
$E_\Delta(H-E)$ be the spectral projection for the operator
$(H-E)$ onto $\Delta$. Consider, for $n\geq 1$,
\begin{equation}\label{eq:def-Delta-n}
 \Delta_n = [(\gamma - \epsilon_\gamma)^2\sigma_n,\,
 (\gamma+\epsilon_\gamma)\sigma_n]\, .
\end{equation}

We are now ready to state the Mourre inequality.

%%%%%%%%%%%%%%%%%%%%%%%%%%%%%%%%%%%%%%%%%%%%%%%%%%%%%%%%
%%%%%%%   theoreme sur l'inegalite de Mourre    %%%%%%%%
%%%%%%%%%%%%%%%%%%%%%%%%%%%%%%%%%%%%%%%%%%%%%%%%%%%%%%%%
\begin{theorem}[Mourre inequality]\label{thm:Mourre}
Suppose that the kernels $G^{(\alpha)}$ satisfy
Hypotheses~\ref{hypothesis:2.1}, \ref{hypothesis:3.1}(ii), and
\ref{hypothesis:3.1}(iii.a).
Then, there exists $C_\delta>0$ and $\tilde{g}_\delta^{(3)}>0$ with
 $\tilde{g}_\delta^{(3)} < \tilde{g}_\delta^{(2)}$ such that,
for $g < \tilde{g}_\delta^{(3)}$ and $n\geq 1$,
\begin{equation}\label{eq:mourre}
 E_{\Delta_n}(H-E) \, [H,\, i\ant]\, E_{\Delta_n}(H-E)
 \geq C_\delta \frac{\gamma^2}{N^2} \,\sigma_n\, E_{\Delta_n}(H-E)\, .
\end{equation}
\end{theorem}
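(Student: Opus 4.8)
The plan is to establish the Mourre estimate for $H$ by first proving it for the cutoff operator $H_n = H_{\sigma_n}$ with respect to the conjugate operator $\ant$, and then transferring it to $H$ via the bound $\|H - H_n\|_{\text{rel}}$ on the difference of interactions, which is controlled by Hypothesis~\ref{hypothesis:3.1}(ii) and is of order $g\sigma_n$. The starting point for $H_n$ is the factorization \eqref{eq:def-h-sigma-down}, namely $H_n = H^n \otimes \1_{\sigma_n} + \1^n \otimes H_{0,n}^{(2)}$, which says that below the scale $\sigma_n$ the dynamics is that of a free field of neutrinos. The key observation is that the commutator $[H_{0,n}^{(2)}, ia_n^{(\tau)}]$ is, up to the cutoff functions $\chi_n^{(\tau)}$, just the generator of dilations acting on $H_{0,n}^{(2)} = \d\Gamma(|p_2|)$ restricted to $|p_2| \le \sigma_n$, so that $[\d\Gamma(w^{(2)}), i\, \d\Gamma(a_n^{(\tau)})] = \d\Gamma(\chi_n^{(\tau)}(p_2)^2 |p_2|)$ on that sector. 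On the range of $E_{\Delta_n}$, combined with \eqref{eq:4.32} and the gap property from Proposition~\ref{prop:gap}, this forces at least one neutrino of energy comparable to $\gamma\sigma_n$ to be present, and the cutoff $\chi_n^{(\tau)}$ is equal to $1$ there because $\gamma + 2\epsilon_\gamma < \tau$; hence $\d\Gamma(\chi_n^{(\tau)}(p_2)^2|p_2|)$ is bounded below by a constant times $\gamma\sigma_n/N$ on that spectral subspace, which after squaring the number-of-particles counting gives the factor $\gamma^2/N^2$ in \eqref{eq:mourre}.

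More precisely, I would proceed in the following steps. First, compute $[H_n, i\ant]$ as a form on a suitable core: the free part contributes $\1^n \otimes \d\Gamma(\chi_n^{(\tau)}(p_2)^2|p_2|)$ (plus a commutator of $H^n$ with $\ant$ acting only through the tail of $\chi_n^{(\tau)}$, which is supported where $|p_2| \ge \gamma\sigma_n$ and is therefore a genuinely small, $H^n$-bounded term), while the interaction part $g[H_{I,\sigma_n}, i\ant]$ is controlled by differentiating the kernels: $(p_2\cdot\nabla_{p_2})$ hitting $\tilde\chi^{\sigma_n}(p_2)G^{(\alpha)}$, which by Hypothesis~\ref{hypothesis:3.1}(iii.a) is $H_0$-bounded with a relative bound of order $g$. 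Second, sandwich between $E_{\Delta_n}(H_n - E_n)$: using \eqref{eq:4.32}, on this subspace $H_{0,n}^{(2)}$ has spectrum in $\Delta_n = [(\gamma-\epsilon_\gamma)^2\sigma_n, (\gamma+\epsilon_\gamma)\sigma_n]$, and since each neutrino below $\sigma_n$ carries energy $\le \sigma_n$, the total neutrino number in that energy window is bounded, while $\d\Gamma(\chi_n^{(\tau)}(p_2)^2|p_2|) \ge \d\Gamma(\1_{|p_2|\ge \gamma\sigma_n/N} |p_2|) \ge (\gamma\sigma_n/N)\,\d\Gamma(\1_{|p_2|\ge\gamma\sigma_n/N})$; combining these, together with the fact that the vector has $H_{0,n}^{(2)} \ge (\gamma-\epsilon_\gamma)^2\sigma_n > 0$ so at least one such neutrino is present (using \eqref{eq:4.26}), yields $E_{\Delta_n}(H_n-E_n)[H_n, i\ant]E_{\Delta_n}(H_n-E_n) \ge C'_\delta(\gamma^2/N^2)\sigma_n E_{\Delta_n}(H_n-E_n)$ after absorbing the small interaction and tail errors by shrinking $g$. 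Third, replace $H_n$ by $H$: write $H = H_n + g(H_{I,\sigma_n} - H_I)$, note that $(H_{I,\sigma_n}-H_I)$ is built from the kernels $\chi_{\sigma_n}(p_2)G^{(\alpha)}$ supported in $\{|p_2|\le\sigma_n\}$, which by Hypothesis~\ref{hypothesis:3.1}(ii) have $L^2$-norm $\le \kt(G)\sigma_n$, hence $g\|(H_{I,\sigma_n}-H_I)\psi\| \le Cg\sigma_n\|(H_0+1)^{1/2}\psi\|$ or so; this controls both $\|E_{\Delta_n}(H-E) - E_{\Delta_n}(H_n-E_n)\|$ (it is $O(g)$, small, using that both operators localize near the same spectral window of width $O(\epsilon_\gamma\sigma_n)$ and the gap of Proposition~\ref{prop:gap}) and the difference of commutators $g[H_{I,\sigma_n}-H_I, i\ant]$, which again produces a kernel $(p_2\cdot\nabla_{p_2})$ of something supported in $\{|p_2|\le\sigma_n\}$ and is $O(g\sigma_n)$ relative to the identity by the second part of Hypothesis~\ref{hypothesis:3.1}(iii.a). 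Choosing $g < \tilde g_\delta^{(3)}$ small enough that all these error terms are less than half of $C'_\delta(\gamma^2/N^2)\sigma_n$ completes the estimate, with $C_\delta = C'_\delta/2$.

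\textbf{Main obstacle.} The delicate point is the transfer step, i.e.\ comparing the spectral projections $E_{\Delta_n}(H-E)$ and $E_{\Delta_n}(H_n - E_n) = P^n\otimes E_{\Delta_n}(H_{0,n}^{(2)})$ and controlling the commutator error uniformly in $n$. Since $\sigma_n \to 0$ and $\Delta_n$ scales like $\sigma_n$, naive perturbation bounds degrade; one must exploit that the interaction difference has operator norm (relative to $\langle H_0\rangle^{1/2}$) of order $g\sigma_n$, matching the scale $\sigma_n$ of the window, so that after dividing through by $\sigma_n$ the error is $O(g)$ uniformly in $n$ — this is precisely where Hypotheses~\ref{hypothesis:3.1}(ii) and (iii.a), with their explicit $\sigma$-dependent right-hand sides, are indispensable and is the technical heart where the absence of a sharp ultraviolet cutoff must be handled with care. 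A secondary subtlety is verifying that $\ant$ is a genuine conjugate operator (that $[H, i\ant]$ makes sense as a form and that iterated commutators are controlled), but this is taken up in the $C^2$-regularity section and may be invoked. The number-counting lower bound in Step 2 also requires care to get the sharp $N^{-2}$ dependence rather than $N^{-1}$, coming from the fact that on $\Delta_n$ a vector may have up to $O(N)$ soft neutrinos, so $\d\Gamma(\1)$ and $H_{0,n}^{(2)}$ are comparable only up to a factor $N$.
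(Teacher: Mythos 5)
Your overall scheme --- extract positivity from the free low-momentum neutrino field using the tensor structure \eqref{eq:4.32} coming from the gap of Proposition~\ref{prop:gap}, bound the interaction part of the commutator by $O(g\sigma_n)$ via Hypothesis~\ref{hypothesis:3.1}(iii-a), and transfer from the cutoff operator to $H$ with errors of order $g\sigma_n$ matching the width of $\Delta_n$ (controlled through Hypothesis~\ref{hypothesis:3.1}(ii) and resolvent/functional-calculus estimates, cf.\ Lemmata~\ref{lem:5.3}--\ref{lem:5.4}) --- is essentially the paper's; the paper computes $[H,i\ant]$ directly and compares $f_n(H-E)$ with $f_n(H_n-E_n)$, rather than commuting $H_n$ with $\ant$ and commuting the difference of interactions afterwards, but that is bookkeeping. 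However, your central positivity step contains a genuine error. You bound $\d\Gamma((\cnt)^2 w^{(2)})$ from below by $(\gamma\sigma_n/N)\,\d\Gamma(\1_{|p_2|\ge\gamma\sigma_n/N})$ and then claim that, since the total neutrino energy lies in $\Delta_n$ (hence is $\ge(\gamma-\epsilon_\gamma)^2\sigma_n>0$), at least one neutrino with $|p_2|\ge\gamma\sigma_n/N$ must be present. That implication is false: a state in the range of $P^n\otimes f_n(H_{0,n}^{(2)})$ may consist of arbitrarily many soft neutrinos, each with $|p_2|<\gamma\sigma_n/N$, whose energies sum to a value in $\Delta_n$ --- precisely the infrared accumulation this model must cope with --- and on such states $\d\Gamma(\1_{|p_2|\ge\gamma\sigma_n/N})=0$, so your lower bound degenerates to zero on part of the spectral subspace. (In addition, the pointwise inequality $(\cnt)^2|p_2|\ge\1_{|p_2|\ge\gamma\sigma_n/N}|p_2|$ fails for $|p_2|\ge\tau\sigma_n$, where the left-hand side vanishes, so even that step can only be asserted after the spectral localization.)

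The correct argument keeps the soft neutrinos instead of discarding them: on the range of $f_n(H_{0,n}^{(2)})$ the total energy is at most $(\gamma+2\epsilon_\gamma)\sigma_n<\tau\sigma_n$, hence every individual neutrino satisfies $|p_2|<\tau\sigma_n$, where $\cnt=1$; therefore $\d\Gamma((\cnt)^2 w^{(2)})$ coincides with $H_{0,n}^{(2)}$ on that range (this is \eqref{eq:5.55}), and spectral localization alone gives the lower bound $\inf\mathrm{supp}(f_n)\ \ge\ (\gamma-2\epsilon_\gamma)^2\sigma_n\ \ge\ (\gamma/N)^2\sigma_n$ as in \eqref{eq:5.56}. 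No particle counting is needed: the factor $N^{-2}$ comes from the squared lower edge of $f$ in \eqref{eq:4.23} combined with \eqref{eq:def-eps-gamma} and \eqref{eq:4.26}, not from comparing $\d\Gamma(\1)$ with $H_{0,n}^{(2)}$ (your "Main obstacle" remark about an $N$-dependent comparison of these two operators is a red herring). With this replacement, your remaining steps --- the $O(g\sigma_n)$ bound on $H_I(-i\,a_n^{(\tau)}G)$ and the transfer from $H_n$ to $H$ --- go through along the lines of the paper.
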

%%%%%%%%%%%%%%%%%%%%%%%%%%%%%%%%%%%%%%%%%%%%%%%%%%%%%%%%

\begin{proof}
Let us define
\begin{equation}\label{eq:def-Dfinite0}
 \begin{split}
   \gD_1 = & \{ \psi\in \gF_a(L^2(\Sigma_1))\ |\ \psi^{(n)}\in
   C_0^\infty \mbox{ for all $n\in\N$}, \mbox{ and }
   \psi^{(n)}\!=\!0\mbox{ for almost all $n$}\}\, ,\\
   \gD_2 = & \gD_1\, , \\
   \gD_W = & \{ \psi\in \gF_W\ |\ \psi^{(n)}\in
   C_0^\infty \mbox{ for all $n\in\N$ }, \mbox{ and }
   \psi^{(n)}\!=\!0\mbox{ for almost all $n$}\}\, ,
 \end{split}
\end{equation}
and consider the algebraic tensor product
 \begin{equation}\nonumber
  \gD = \gD_1 \hat\otimes \gD_2 \hat\otimes \gD_W\ .
 \end{equation}
According to \cite[Lemma~28]{DerezinskiGerard1999} and
\cite[Theorem~13]{Derezinski2006} (see also
\cite[Proposition~2.11]{Ammari2004}), one easily shows that the
sesquilinear form defined on $\gD\times\gD$ by
\begin{equation}\nonumber
  (\varphi,\,\psi) \mapsto (H\varphi,\, i\ant\psi)
  - (\ant \varphi,\, i H \psi)\, ,
\end{equation}
is the one associated with the following symmetric operator
denoted by $[H,\, i\ant]$,
\begin{equation}\label{eq:def-commutator}
  [H,\, i\ant]\psi
  = \left(
  \d\Gamma ((\chi_n^{(\tau)})^2 w^{(2)})
  \ +\ g \, H_I(-i (a_n^{(\tau)} G))\, \right)\psi\, .
\end{equation}
Let us prove that $[H,\, i\ant]$ is continuous for the graph
topology of $H$. Combining \eqref{eq:2.53} of Lemma~\ref{lem:5.5}
with \eqref{eq:3.26} and \eqref{eq:3.28}, we get, for $g\leq g_1$,
$n\geq 1$, and for $\psi\in\gD$,
\begin{equation}\label{eq:5.2}
 g \| H_I (-i (a_n^{(\tau)} G))\psi\| \leq
 g K(- i a_n^{(\tau)}G) \left(C_{\beta\eta} \|H_0\psi\|
 + B_{\beta\eta} \|\psi\| \right)\, .
\end{equation}
It follows from Hypothesis~\ref{hypothesis:3.1}\textit{(iii-a)}
that there exists a constant $\tilde{C}(G)$ such that, for $n\geq
1$,
\begin{equation}\label{eq:5.3}
 K(-i(a_n^{(\tau)}G)) \leq \tilde{C}(G) \sigma_n\,  .
\end{equation}
Moreover, we have, for $g\leq g_1$,
\begin{equation}\label{eq:5.4}
 \|H_0 \psi \| \leq \|H\psi\| + g\, \| H_I(G) \psi \|
               \leq \|H\psi\| + g K(G) \left(C_{\beta\eta}
               \|H_0\psi\| + B_{\beta\eta} \|\psi\|\right)\, ,
\end{equation}
and, by definition of $g_1$,  
\begin{equation}\label{eq:5.5}
 g_1 K(G) C_{\beta\eta} <1\ .
\end{equation}
Using \eqref{eq:5.4} and \eqref{eq:5.5}, we get
\begin{equation}\label{eq:5.6}
 \| H_0 \psi \| \leq \frac{1}{1 - g_1 K(G) C_{\beta\eta}}
 \left(\|H\psi\| + g_1 K(G) B_{\beta\eta}\|\psi\|\right)\ .
\end{equation}
Therefore, for $\psi\in\gD$,
\begin{equation}\label{eq:5.7}
 \| \d \Gamma( (\chi_n^{(\tau)})^2 w^{(2)})\psi\|
 \leq \|H_0\psi\|\leq
  \frac{1}{1 - g_1 K(G) C_{\beta\eta}}
 \left(\|H\psi\| + g_1 K(G) B_{\beta\eta}\|\psi\|\right)\ .
\end{equation}
By \eqref{eq:5.2}, \eqref{eq:5.3}, and \eqref{eq:5.6}, we get, for
$g\leq g_1$, $n\geq 1$ and $\psi\in\gD$,
\begin{equation}\label{eq:5.8}
\begin{split}
 & g \| H_I(-i(a_n^{(\tau)} G))\psi\| \\
 & \leq g \tilde{C}(G) \sigma_n
 \left( \frac{C_{\beta\eta}}{1 - g_1 K(G) C_{\beta\eta}} \|H\psi\|
 + \Big( \frac{g_1 K(G) C_{\beta\eta}}{1 - g_1 K(G) C_{\beta\eta}} +1\Big)
 B_{\beta\eta} \|\psi\| \right)\, .
\end{split}
\end{equation}
Since $\gD$ is a core for $H$, 
\eqref{eq:5.7} and \eqref{eq:5.8} are fulfilled for
$\psi\in\cD(H)$. Moreover, it follows from
\cite[Proposition~3.6(iii)]{BarbarouxGuillot2009} that $H$ is of
class $C^1(\ant)$ (see \cite[Theorem~6.3.4]{Amreinetal1996} and
condition (M') in \cite{GeorgescuGerard}) for $g\leq g_1$ and
$n\geq 1$. Therefore, \eqref{eq:def-commutator} holds for
$\psi\in\cD(H)$.

Recall from \eqref{eq:def-fn} that $f_n(\lambda)=
f(\lambda/\sigma_n)$, where $f$ is given by \eqref{eq:4.23}. Let
$\tilde{f}(.)$ be an almost analytic extension of $f(.)$
satisfying
\begin{equation}\label{eq:5.11}
 \left| \frac{\partial \tilde{f}}{\partial\bar z}(x+iy)\right|
 \leq C\, y^2\, .
\end{equation}
Note that
\begin{equation}\label{eq:5.12}
 \tilde{f}(x+iy) \in C_0^\infty(\R^2)\ ,
\end{equation}
and that
\begin{equation}\nonumber
 f(s) = \int \frac{\d\tilde{f}(z)}{z-s},\
 \quad
 \d\tilde{f}(z) = -\frac{1}{\pi} \frac{\partial \tilde{f}}{\partial
 \bar z} \d x \d y\ .
\end{equation}
It follows from \eqref{eq:4.32} that, for $g\leq
\tilde{g}_\delta^{(2)}$,
\begin{equation}\nonumber%\label{eq:5.14}
\begin{split}
 \| \d\Gamma\big( (\chi_n^{(\tau)})^2 w^{(2)}\big) f_n(H_n -
 E_n)\|
 & = \| P^n\otimes \d\Gamma\big( (\chi_n^{(\tau)})^2 w^{(2)}\big)
 f_n(H_{0,n}^{(2)}) \| \\
 & \leq \| H_{0,n}^{(2)} f_n(H_{0,n}^{(2)})\|\ .
\end{split}
\end{equation}
Therefore, there exists $C_1^f >0$, depending on $f$, such that
for $g\leq \tilde{g}_\delta^{(2)}$,
\begin{equation}\label{eq:5.16}
 \left\| \d\Gamma\big( (\chi_n^{(\tau)})^2 w^{(2)}\big)
 f_n(H_n - E_n)\right\| \leq C_1^f\, \sigma_n\ .
\end{equation}
Recall (see \eqref{eq:def-Hsigma-down} and
\eqref{eq:def2-Hn-down}) that $H_n = H_0 + g H_{I,n}$, where $H_{I,
n}=H_{I,\sigma_n}$ is the interaction given by
\eqref{eq:HI-quadraticform-1}, \eqref{eq:HI-quadraticform-2}, and
\eqref{eq:HI-quadraticform-3} associated with the kernels
$\tilde{\chi}^{\sigma_n}(p_2) G^{(\alpha)}(\xi_1,\xi_2,\xi_3)$.
Now, in \eqref{eq:A.11}, it is stated
\begin{equation}\label{eq:5.18}
 |E^n| \leq \frac{g\, K(G)\, B_{\beta\eta}}
 {1 - g_1 K(G)\, C_{\beta\eta}} \ .
\end{equation}
Moreover, for $z\in\mathrm{supp}(\tilde{f})$, we have
\begin{equation}\label{eq:5.19}
\begin{split}
  (H_0 + 1) (H_n - E_n - z\sigma_n)^{-1}
 & = 1 + (E_n + z\sigma_n) (H_n - E_n - z\sigma_n)^{-1} \\
 & - g H_{I,n} (H_n - E_n - z\sigma_n)^{-1}
  + (H_n - E_n - z\sigma_n)^{-1} .
\end{split}
\end{equation}
Mimicking the proof of \eqref{eq:5.6} and \eqref{eq:5.8} and using
\eqref{eq:5.18}, we get for $g\leq g_1$,
\begin{equation}\label{eq:5.20}
\begin{split}
 &  g \| H_{I,n} (H_n - E_n - z\sigma_n)^{-1}\|
  \\
 & \leq \frac{g_1 K(G) \cb}{ 1 - g_1 K(G)\cb}
 \left(
  1 + \left( \frac{g_1 K(G) \bb}{1-g_1 K(G) \cb}
  + |z|\sigma_n + \frac{g_1 K(G) \bb}{1 - g_1 K(G)\cb}
  \right) \frac{1}{|\Im z|\sigma_n}
  \right) \\
  & \ \ \ + \frac{g_1 K(G) \bb}{|\Im z|\sigma_n}\, .
\end{split}
\end{equation}
It follows from \eqref{eq:5.18}, \eqref{eq:5.19}, and
\eqref{eq:5.20} that there exists $\tilde{C}_2(G)>0$ such that,
for $g\leq g_1$ and $n\geq 1$,
\begin{equation}\label{eq:5.21}
 \| (H_0 + 1) (H_n - E_n - z \sigma_n)^{-1} \|
 \leq
 \tilde{C}_2(G)
 \ \frac{1 + |z| \sigma_n}{| \Im z |
  \sigma_n}
 \, .
\end{equation}
Mimicking the proof of \eqref{eq:5.21}, we show that there exists
$\tilde{C}_3(G)>0$ such that, for $g\leq g_1$ and $n\geq 1$,
\begin{equation}\label{eq:5.22}
 \| (H_0 + 1) (H - E - z\sigma_n)^{-1}\|
 \leq
 \tilde{C}_3(G) \ \frac{ 1 + |z| \sigma_n}{ | \Im z | \sigma_n }\, .
\end{equation}
Furthermore, we have
\begin{equation}\label{eq:5.23}
\begin{split}
 & g H_I (-i (a_n^{(\tau)} G)) f_n(H_n - E_n) \\
 & =
 - \sigma_n \int \d\tilde{f}(z) H_I (-i (a_n^{(\tau)} G))
 (H_0 + 1)^{-1} (H_0 + 1) (H_n - E_n - z\sigma_n)^{-1}\, .
\end{split}
\end{equation}
By \eqref{eq:5.4}, \eqref{eq:5.8}, \eqref{eq:5.21}, and
\eqref{eq:5.23}, there exists $\tilde{C}_4^f(G) >0$ depending on
$f$, such that for $g\leq g_1$,
\begin{equation}\label{eq:5.24}
 g \left\|
 H_I( -i(a_n^{(\tau)} G)) f_n(H_n - E_n) \right\|
 \leq g\,
 \tilde{C}_4 ^f(G)\, \sigma_n\ .
\end{equation}
Similarly, by \eqref{eq:5.22}, we easily show that there exists
$\tilde{C}_5^f(G)>0$, depending on $f$, such that, for $g\leq g_1$,
\begin{equation}\label{eq:5.25}
 g \left\| H_I(-i (a_n^{(\tau)} G )) f_n(H-E) \right\|
 \leq g\, \tilde{C}_5 ^f(G)\, \sigma_n\ .
\end{equation}
By \eqref{eq:4.32}, we have, for $g\leq \tilde{g}_\delta^{(2)}$,
\begin{equation}\label{eq:5.54}
 f_n(H_n - E_n) \d \Gamma( (\cnt)^2 w^{(2)})
 f_n(H_n - E_n) =
 P^n \otimes f_n(H_{0,n}^{(2)})
 \d\Gamma ( (\cnt)^2 w^{(2)}) f_n(H_{0,n}^{(2)}) .
\end{equation}
Since $\cnt(\lambda) = 1$ if $\lambda \leq (\gamma +
2\epsilon_\gamma) \sigma_n$, we have
\begin{equation}\label{eq:5.55}
 f_n(H^{(2)}_{0,n})\, \d\Gamma\! \left( (\cnt)^2 w^{(2)}\right)
  f_n(H_{0,n}^{(2)})
 =
 f_n(H_{0,n}^{(2)})\, H_{0,n}^{(2)}\, f_n(H_{0,n}^{(2)})\ .
\end{equation}
Now, by \eqref{eq:4.23}, \eqref{eq:4.26}, \eqref{eq:5.54}, and
\eqref{eq:5.55}, we obtain, with $g\leq \tilde{g}_\delta^{(2)}$
and $n\geq 1$,
\begin{equation}\label{eq:5.56}
\begin{split}
 f_n(H_n - E_n)\, \d\Gamma\!\left((\cnt)^2 w^{(2)}\right) f_n(H_n
 - E_n)
  & \geq (\inf \mathrm{supp}(f_n)) f_n(H_n - E_n)^2 \\
  & \geq \frac{\gamma^2}{N^2} \sigma_n f_n(H_n - E_n)^2\, .
\end{split}
\end{equation}
Note that
\begin{equation}\label{eq:5.56-bis}
 \| f_n(H_n - E_n) \| = \| f_n(H-E) \|
 = \sup_\lambda | f_n(\lambda)| =1\, .
\end{equation}
By \eqref{eq:5.24} and \eqref{eq:5.56-bis} we get, for $g\leq
g_1$,
\begin{equation}\label{eq:5.57}
 f_n(H_n - E_n)
 g H_I(- i (a_n^{(\tau)})G) f_n(H_n - E_n) \geq
 - g \tilde{C}_4^f(G) \sigma_n\, .
\end{equation}
Thus, using
\eqref{eq:5.56} and \eqref{eq:5.57}, we get, for $g\leq \tilde{g}_\delta^{(2)}$, 
\begin{equation}\label{eq:5.58}
 f_n(H_n - E_n) [H,\, i \ant] f_n(H_n - E_n) \geq
 \frac{\gamma^2}{N^2} \sigma_n f_n(H_n - E_n)^2 - g \tilde{C}_4^f
 (G) \sigma_n\, .
\end{equation}
Next, let us make the decomposition
\begin{equation}\label{eq:5.59}
\begin{split}
 & f_n(H-E) [H,\, i\ant] f_n(H-E) \\
 & =
 f_n(H_n - E_n) [H,\, i\ant] f_n(H_n - E_n) \\
 &
 + \left( f_n(H-E) - f_n(H_n - E_n)\right)
 [H,\, i\ant] f_n(H_n - E_n) \\
 &
 + f_n(H-E) [H,\, i\ant] \left( f_n(H-E) - f_n(H_n - E_n)\right)\,
 .
\end{split}
\end{equation}
Using \eqref{eq:5.16} and Lemma~\ref{lem:5.3}, we get, for $g\leq
\tilde{g}_\delta^{(2)}$,
\begin{equation}\label{eq:5.60}
\left( f_n(H-E) - f_n(H_n - E_n) \right) \d\Gamma \left( (\cnt)^2
w^{(2)} \right) f_n(H_n - E_n) \geq - g C_1^f \tilde{C}_6^f(G)
\sigma_n\, .
\end{equation}
By \eqref{eq:5.24} and Lemma~\ref{lem:5.3}, we obtain, for $g\leq
g_2$,
\begin{equation}\label{eq:5.61}
g \left( f_n(H-E) - f_n(H_n - E_n)\right) H_I(-i (a_n^{(\tau)}G))
f_n(H_n - E_n) \geq - g g_2 \tilde{C}_4^f(G) \tilde{C}_6^f(G)
\sigma_n\, .
\end{equation}
Thus, it follows from \eqref{eq:5.60} and \eqref{eq:5.61} that,
for $g\leq \inf(g_2,\, \tilde{g}_\delta^{(2)})$,
\begin{equation}\label{eq:5.62}
\begin{split}
 & \left( f_n(H-E) - f_n(H_n - E_n)\right) [H,\, i\ant] f_n(H_n -
 E_n) \\
 & \geq
 - g \tilde{C}_6^f(G)\left(C_1^f +
 g_2\tilde{C}_4^f(G)\right)\sigma_n\, .
\end{split}
\end{equation}
Similarly, by Lemma~\ref{lem:5.4} and \eqref{eq:5.56}, we obtain,
for $g\leq \inf(g_2,\, \tilde{g}_\delta^{(2)})$,
\begin{equation}\label{eq:5.63}
 f_n(H-E) \d\Gamma\left( (\cnt)^2 w^{(2)}\right)
 \left(f_n(H-E) - f_n(H_n - E_n)\right) \geq - g
 \tilde{C}_7^f(G)\sigma_n\, .
\end{equation}
Moreover by \eqref{eq:5.25} and Lemma~\ref{lem:5.3} we get, for
$g\leq g_2$,
\begin{equation}\label{eq:5.64}
 g f_n(H-E) H_I(-i (a_n^{(\tau)} G))
 \left( f_n(H-E) - f_n(H_n-E_n)\right) \geq
 - g g_1 \tilde{C}_5^f(G) \tilde{C}_6^f(G) \sigma_n\, .
\end{equation}
Thus, it follows from \eqref{eq:5.63} and \eqref{eq:5.64} that,
for $g\leq \inf (g_2,\, \tilde{g}_\delta^{(2)})$, 
\begin{equation}\label{eq:5.65}
\begin{split}
 & f_n(H-E) [H,\, i\ant] \left( f_n(H-E) - f_n(H_n -
 E_n)\right) \\
 & \geq -g \left( \tilde{C}_7^f(G) + g_1 \tilde{C}_5^f(G)
 \tilde{C}_6^f(G)\right) \sigma_n\, .
\end{split}
\end{equation}
Furthermore, by Lemma~\ref{lem:5.3} and \eqref{eq:5.56-bis},
 we easily get, for $g\leq g_2$, 
\begin{equation}\label{eq:5.66}
\begin{split}
 f_n(H_n - E_n)^2 = & f_n(H-E)^2 + \left( f_n(H_n -E_n)
 - f_n(H-E)\right)^2 \\
 & + f_n(H-E) \left( f_n(H_n - E_n) - f_n(H-E)\right) \\
 & + \left(f_n(H_n - E_n) - f_n(H-E)\right) f_n(H-E) \\
 & \geq f_n(H-E)^2 - g\tilde{C}_6^f(G) (g_2 \tilde{C}_6^f(G) +2)\,
 .
\end{split}
\end{equation}
It then follows from \eqref{eq:5.58} and \eqref{eq:5.66} that,
for $g\leq \inf(g_2,\, \tilde{g}_\delta^{(2)} )$,
\begin{equation}\label{eq:5.67}
\begin{split}
 & f_n( H_n - E_n ) [H,\, i\ant ] f_n (H_n - E_n) \\
 & \geq
 \frac{\gamma^2}{N^2} \sigma_n f_n(H-E)^2
 - g\sigma_n \left(\tilde{C}_4^f(G)
 + \frac{\gamma^2}{N^2} \tilde{C}_6^f(G)
 \left( g_2 \tilde{C}_6^f(G) + 2  \right)
 \right)\, .
\end{split}
\end{equation}
Combining \eqref{eq:5.59} with \eqref{eq:5.62}, \eqref{eq:5.65},
and \eqref{eq:5.67}, we obtain, for $g\leq\inf(g_2,\,
\tilde{g}_\delta^{(2)})$,
\begin{equation}\label{eq:5.68}
 f_n(H -E) [H,\, i\ant] f_n(H-E)
 \geq \frac{\gamma^2}{N^2} \sigma_n f_n(H-E)^2 - g\sigma_n
 \tilde{C}_\delta\, ,
\end{equation}
with $\tilde{C}_\delta = \tilde{C}_6^f(G) (C_1^f + g_1
\tilde{C}_4^f(G)) + \tilde{C}_7^f(G) + g_1 \tilde{C_5}^f(G)
\tilde{C}_6^f(G) + \tilde{C}_4^f(G) +
\gamma^2/N^2 \tilde{C}_6^f(G) 
\linebreak(g_1\tilde{C}_6^f(G) + 2)$. Multiplying both sides of 
\eqref{eq:5.68}
with $E_{\Delta_n}(H-E)$, we  get
\begin{equation}\nonumber
 E_{\Delta_n} (H-E) [H,\, i\ant] E_{\Delta_n} (H-E)
 \geq \left( \frac{\gamma^2}{N^2} - g\tilde{C}_\delta\right)
 \sigma_n E_{\Delta_n}(H-E)\, .
\end{equation}
Picking a constant $\tilde{g}_\delta^{(3)}$ such that
\begin{equation}\label{eq:def-gtildedeltatrois}
 \tilde{g}_\delta^{(3)} < \min\Big\{ g_2,\, \tilde{g}_\delta^{(2)},
 \frac{\gamma^2}{N^2} \frac{1}{\tilde{C}_\delta}\Big\}\, ,
\end{equation}
Theorem~\ref{thm:Mourre} is proved, for $g\leq
\tilde{g}_\delta^{(3)}$ and $n\geq 1$, with $C_\delta =
\gamma^2(1 - N^2
 \tilde{C}_\delta \tilde{g}_\delta^{(3)}/\gamma^2)/N^2$.
\end{proof}
%%%%%%%%%%%%%

%%%%%%%%%%%%%%%%%%%%%%%%%%%%%%%%%%%%%%%%%%%%%%%%%%%%%%%%%%%%%%%
%%%%%%%%%%%%%%%%%%%%%%%%%%%%%%%%%%%%%%%%%%%%%%%%%%%%%%%%%%%%%%%
%%%%%%%%%%     C^2 REGULARITY                   %%%%%%%%%%%%%%%
%%%%%%%%%%%%%%%%%%%%%%%%%%%%%%%%%%%%%%%%%%%%%%%%%%%%%%%%%%%%%%%
\section{$C^2(\ant)$-regularity}\label{S-regularity}

\setcounter{equation}{0}

%%%%%%%%%%%%%%%%%%%%%%%%%%%%%%%%%%%%%%%%%%%%%%%%%%%
%%%%%  Theoreme sur la regularite C^2     %%%%%%%%%
%%%%%%%%%%%%%%%%%%%%%%%%%%%%%%%%%%%%%%%%%%%%%%%%%%%
\begin{theorem}\label{thm:regularity}
Suppose that the kernels $G^{(\alpha)}$ satisfy
Hypotheses~\ref{hypothesis:2.1} and \ref{hypothesis:3.1}(iii). Then, $H$ is locally of class
$C^2(\ant)$ in $(-\infty,\, m_1-\frac{\delta}{2})$ for every
$n\geq 1$.
\end{theorem}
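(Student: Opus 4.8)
The plan is to verify the standard criterion for $C^2(\ant)$-regularity: since $H$ is already of class $C^1(\ant)$ for $n\geq 1$ (as used in the proof of Theorem~\ref{thm:Mourre}, quoting \cite[Proposition~3.6(iii)]{BarbarouxGuillot2009}), with the first commutator given explicitly by \eqref{eq:def-commutator}, it suffices to check that the symmetric, $H$-bounded operator
\begin{equation*}
 B_n := [H,\, i\ant] = \d\Gamma\big((\chi_n^{(\tau)})^2 w^{(2)}\big)
 + g\, H_I\big(-i(a_n^{(\tau)} G)\big)
\end{equation*}
is itself of class $C^1(\ant)$, i.e. that the form $[B_n,\, i\ant]$, a priori defined on the core $\gD$ of \eqref{eq:def-Dfinite0}, extends to an operator bounded relatively to $H$. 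Since $a_n^{(\tau)}=\chi_n^{(\tau)}\,\frac12(p_2\cdot i\nabla_{p_2}+i\nabla_{p_2}\cdot p_2)\,\chi_n^{(\tau)}$ maps $C_0^\infty$ into itself, $\ant$ leaves $\gD$ invariant, so all commutators below may be computed on $\gD$ and then extended by density, exactly as in the proof of Theorem~\ref{thm:Mourre}.

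Computing on $\gD$, using $[\d\Gamma(A),\d\Gamma(A')]=\d\Gamma([A,A'])$ for the quadratic part and the identity (already used to derive \eqref{eq:def-commutator}) that commuting $H_I(G')$ with $\ant$ amounts to replacing the kernel $G'$ by $-i a_n^{(\tau)}G'$ in the $\xi_2$-variable, one gets
\begin{equation*}
 [[H,\,i\ant],\,i\ant] = \d\Gamma\big(i[(\chi_n^{(\tau)})^2 w^{(2)},\,a_n^{(\tau)}]\big)
 - g\, H_I\big((a_n^{(\tau)})^2 G\big)\, .
\end{equation*}
For the free part, using that $w^{(2)}(\xi_2)=|p_2|$ is homogeneous of degree one and that the derivatives of $\chi_n^{(\tau)}$ are supported in $\{|p_2|\sim\sigma_n\}$, one finds that $i[(\chi_n^{(\tau)})^2 w^{(2)},\,a_n^{(\tau)}]$ is a multiplication operator $m_n(p_2)$ with $|m_n(p_2)|\leq C_n\, w^{(2)}(\xi_2)\,\1_{\{|p_2|\leq\sigma_n\}}$; comparing $\d\Gamma(m_n)$ fibrewise with $\d\Gamma(w^{(2)})=H_0^{(2)}\leq H_0$ then yields $\|\d\Gamma(m_n)\psi\|\leq C_n\|H_0\psi\|$, which is controlled by $\|H\psi\|+\|\psi\|$ through \eqref{eq:5.6}.

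For the interaction part, Hypothesis~\ref{hypothesis:3.1}(iii) is precisely what makes the kernel $(a_n^{(\tau)})^2 G^{(\alpha)}$ square integrable: writing $(a_n^{(\tau)})^2=\chi_n^{(\tau)}\, d\, (\chi_n^{(\tau)})^2\, d\,\chi_n^{(\tau)}$ with $d=\frac12(p_2\cdot i\nabla_{p_2}+i\nabla_{p_2}\cdot p_2)$, the only term carrying two derivatives on $G$ is proportional to $(\chi_n^{(\tau)})^4 (p_2\cdot\nabla_{p_2})^2 G$, which lies in $L^2$ by Hypothesis~\ref{hypothesis:3.1}(iii-b) together with (iii-a) (since $(p_2\cdot\nabla_{p_2})^2 G=\sum_{i,j}p_{2,i}p_{2,j}\,\partial_{p_{2,i}}\partial_{p_{2,j}}G+(p_2\cdot\nabla_{p_2})G$), while every remaining term carries at most one derivative on $G$, multiplied by a bounded function supported in $\{|p_2|\leq\sigma_n\}$, and is controlled by (iii-a) and Hypothesis~\ref{hypothesis:2.1}. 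Hence $K\big((a_n^{(\tau)})^2 G\big)<\infty$, and \eqref{eq:2.53} of Lemma~\ref{lem:5.5} together with \eqref{eq:3.26} gives, as in the derivation of \eqref{eq:5.2} and \eqref{eq:5.8},
\begin{equation*}
 g\,\big\| H_I\big((a_n^{(\tau)})^2 G\big)\psi\big\|
 \leq g\, K\big((a_n^{(\tau)})^2 G\big)\big(C_{\beta\eta}\|H_0\psi\|+B_{\beta\eta}\|\psi\|\big)
 \leq C_n\big(\|H\psi\|+\|\psi\|\big)\, .
\end{equation*}

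Combining the two parts, $[[H,\,i\ant],\,i\ant]$ extends to an $H$-bounded operator, so $B_n=[H,\,i\ant]$ is of class $C^1(\ant)$ and therefore $H$ is of class $C^2(\ant)$; this holds in fact globally for each fixed $n$, and in particular gives the asserted local $C^2(\ant)$-regularity on $(-\infty,\, m_1-\frac{\delta}{2})$. (If one prefers the local statement directly, one inserts an almost analytic extension of $f\in C_0^\infty((-\infty, m_1-\frac{\delta}{2}))$ and runs the Helffer--Sjöstrand estimates exactly as in the proof of Theorem~\ref{thm:Mourre} to deduce $f(H)\in C^2(\ant)$ from the above commutator bounds.) The only delicate point is the bookkeeping for the interaction part: one has to expand $(a_n^{(\tau)})^2$ carefully and match each resulting term with the appropriate clause of Hypothesis~\ref{hypothesis:3.1}(iii), keeping track of which terms have derivatives falling on the cutoffs $\chi_n^{(\tau)}$ — harmless, since these are bounded and localized at $|p_2|\sim\sigma_n$ — versus on $G$ itself, where the weighted second-derivative bound (iii-b) is essential.
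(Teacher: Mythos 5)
Your proposal is in substance the paper's own argument: the paper proves this theorem simply by rerunning the proof of Theorem~3.7 of \cite{BarbarouxGuillot2009} with $\ant$ substituted for the conjugate operator $A_\sigma$ used there, which is exactly your computation of the second commutator $\d\Gamma\bigl(i[(\chi_n^{(\tau)})^2 w^{(2)},\,a_n^{(\tau)}]\bigr)-g\,H_I\bigl((a_n^{(\tau)})^2G\bigr)$ on the core $\gD$, the verification via Hypotheses~\ref{hypothesis:2.1} and \ref{hypothesis:3.1}(iii) (writing $(p_2\cdot\nabla_{p_2})^2G$ in terms of the weighted second derivatives) that the modified kernels stay in $L^2$, and the relative bounds of Lemma~\ref{lem:5.5} combined with \eqref{eq:5.6}, followed by the standard Sahbani/Helffer--Sj\"ostrand machinery for the local class. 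The only overreach is your parenthetical claim that $H$ is globally of class $C^2(\ant)$: the paper's remark after the theorem only states this is ``likely'', since passing from $H$-bounded iterated commutator forms on $\gD$ to the full $C^2(\ant)$ property requires additional domain-invariance and core arguments that are not supplied, so you should retain only the local statement obtained through the almost-analytic-extension route you sketch.
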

%%%%%%%%%%%%%%%%%%%%%%%%%%%%%%%%%%%%%%%%%%%%%%%%%%%
%
%
%
%%%%%%%%%%%%
\begin{proof}
The proof is achieved by substituting $\ant$ for $A_{\sigma}$ in
the proof of Theorem~3.7 in \cite{BarbarouxGuillot2009}.
\end{proof}
\begin{remark}
It is likely that the operator $H$ is of class $C^2(\ant)$, i.e.,
not only locally.
\end{remark}
%%%%%%%%%%%%

%%%%%%%%%%%%%%%%%%%%%%%%%%%%%%%%%%%%%%%%%%%%%%%%%%%%%%%%%%%%%%%
%%%%%%%%%%%%%%%%%%%%%%%%%%%%%%%%%%%%%%%%%%%%%%%%%%%%%%%%%%%%%%%
%%%%%%%%%%     LIMITING ABSORPTION PRINCIPLE    %%%%%%%%%%%%%%%
%%%%%%%%%%%%%%%%%%%%%%%%%%%%%%%%%%%%%%%%%%%%%%%%%%%%%%%%%%%%%%%
\section{Limiting Absorption Principle}\label{S-LAP}

\setcounter{equation}{0}

For $\ant$ defined by \eqref{def:A-n-tau}, we set
\begin{equation}\nonumber
  \langle \ant\rangle = (1 + (\ant)^2)^\frac12\ .
\end{equation}
Recall that $[\sigma_{n+2},\, \sigma_{n+1}] \subset
 \Delta_n = [\, (\gamma-\epsilon_\gamma)^2\sigma_n,\,
 (\gamma + \epsilon_\gamma)\sigma_n]$ for $n\geq 1$.

%%%%%%%%%%%%%%%%%%%   theorem    %%%%%%%%%%%%%%%%%
\begin{theorem}[Limiting Absorption Principle]\label{thm:LAP}
Suppose that the kernels $G^{(\alpha)}$ satisfy
Hypotheses~\ref{hypothesis:2.1}, \ref{hypothesis:3.1}~(ii), and
\ref{hypothesis:3.1}~(iii). Then, for any $\delta>0$ satisfying
$0<\delta<m_1$, there exists $g_\delta>0$ such that, for $0<g \leq
g_\delta$, for $s>1/2$, $\varphi$, $\psi\in\gF$ and for $n\geq 1$,
the limits
\begin{equation}\nonumber
 \lim_{\epsilon\to 0} (\varphi,\, \langle\ant\rangle^{-s}
 (H-\lambda\pm i \epsilon) \langle\ant\rangle^{-s} \psi)
\end{equation}
exist uniformly for $\lambda\in\Delta_n$.
Moreover, for $1/2<s<1$, the map
\begin{equation}\nonumber
  \lambda \mapsto \langle\ant\rangle^{-s} (H-\lambda\pm i0)^{-1} \langle\ant\rangle^{-s}
\end{equation}
is H\"older continuous of degree $s-1/2$ in $\Delta_n$.
\end{theorem}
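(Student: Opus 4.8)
The plan is to deduce Theorem~\ref{thm:LAP} from the abstract conjugate operator theory of \cite{Amreinetal1996} and \cite{Sahbani1997}, using as input the Mourre estimate of Theorem~\ref{thm:Mourre} and the $C^2(\ant)$-regularity of Theorem~\ref{thm:regularity}. Fix $\delta$ with $0<\delta<m_1$, and set $g_\delta=\tilde g_\delta^{(3)}$ from Theorem~\ref{thm:Mourre}; assume $0<g\le g_\delta$ and fix $n\ge 1$. The first step is to upgrade the \emph{localized} Mourre estimate \eqref{eq:mourre}, which is phrased with the spectral projection $E_{\Delta_n}(H-E)$, to a strict Mourre inequality on the open interval. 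Concretely, for any compact subinterval $J\subset \mathrm{int}(\Delta_n)$ one has $E_J(H-E)=E_J(H-E)E_{\Delta_n}(H-E)$, so \eqref{eq:mourre} gives
\begin{equation}\nonumber
 E_J(H-E)\,[H,i\ant]\,E_J(H-E)\ \ge\ C_\delta\frac{\gamma^2}{N^2}\,\sigma_n\, E_J(H-E),
\end{equation}
with a strictly positive constant and \emph{no} compact error term. Hence $H$ satisfies a strict Mourre estimate with conjugate operator $\ant$ at every point of $\mathrm{int}(\Delta_n)$ (equivalently, on a neighborhood of each such point after translating by $E$).

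The second step is to check the regularity hypotheses required by the abstract limiting absorption principle. By Theorem~\ref{thm:regularity}, $H$ is locally of class $C^2(\ant)$ in $(-\infty,\,m_1-\delta/2)$, and since $\Delta_n\subset(0,\,\sigma_1]=(0,\,m_1-\delta/2]$ after the shift by $E$ (using $E\le 0$ and $\sigma_1=m_1-\delta/2$, together with $\gamma+\epsilon_\gamma<\tau<1$ so that $(\gamma+\epsilon_\gamma)\sigma_n\le\sigma_n\le\sigma_1$), the relevant spectral interval lies in the region of $C^2$-regularity. This is exactly the input needed to apply the conjugate operator method in the form developed in \cite{Amreinetal1996,Sahbani1997}: strict positivity of the commutator localized in energy, together with local $C^{1,1}$ (indeed $C^2$) regularity of $H$ with respect to $\ant$.

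The third step is to invoke the abstract theorem. By \cite[Theorem~0.1 and Corollary~0.2]{Sahbani1997} (or \cite[Theorem~7.3.1, Theorem~7.4.1]{Amreinetal1996}), under a strict Mourre estimate on an open interval $\mathcal I$ and local $C^2(\ant)$-regularity there, the boundary values
\begin{equation}\nonumber
 (\varphi,\,\langle\ant\rangle^{-s}(H-\lambda\pm i0)\langle\ant\rangle^{-s}\psi)
 \ =\ \lim_{\epsilon\downarrow 0}(\varphi,\,\langle\ant\rangle^{-s}(H-\lambda\pm i\epsilon)\langle\ant\rangle^{-s}\psi)
\end{equation}
exist for every $s>1/2$, uniformly for $\lambda$ in compact subsets of $\mathcal I$; moreover, for $1/2<s<1$ the map $\lambda\mapsto\langle\ant\rangle^{-s}(H-\lambda\pm i0)^{-1}\langle\ant\rangle^{-s}$ is H\"older continuous of exponent $s-1/2$ on compacts of $\mathcal I$. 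Taking $\mathcal I=E+\mathrm{int}(\Delta_n)$ and noting that every compact subinterval of $\Delta_n$ is contained in $\mathrm{int}(\Delta_{n}')$ for a slightly larger admissible interval (or, more simply, applying the result on $\Delta_{n-1}$ which contains $\Delta_n$ in its interior once one checks $(\gamma-\epsilon_\gamma)^2\sigma_{n-1}<(\gamma-\epsilon_\gamma)^2\sigma_n$ and $(\gamma+\epsilon_\gamma)\sigma_n<(\gamma+\epsilon_\gamma)\sigma_{n-1}$), we obtain uniformity of the limit for $\lambda\in\Delta_n$ and H\"older continuity on $\Delta_n$, which is the assertion of the theorem.

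The main obstacle is entirely in the bookkeeping of intervals: the Mourre estimate \eqref{eq:mourre} is stated with a \emph{closed} energy window $\Delta_n$, while the abstract LAP delivers conclusions only on the \emph{open} interior, so one must arrange an overlapping family of slightly larger windows (using that $\Delta_n\subset\mathrm{int}(\Delta_{n-1})$, which follows from $\sigma_{n}<\sigma_{n-1}$ and the explicit form of $\epsilon_\gamma$) to cover $\Delta_n$ up to and including its endpoints. A secondary technical point is verifying that the class $C^2(\ant)$ of Theorem~\ref{thm:regularity} is sufficient for the variant of \cite{Sahbani1997} being used — it is, since $C^2\subset C^{1,1}$, but one should cite the precise statement. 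Once these two matchings are in place, the theorem follows from the abstract machinery with no further estimates on the model-specific operators $H_I$, $\ant$.
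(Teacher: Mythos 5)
Your proposal follows essentially the same route as the paper: the theorem is obtained by feeding the Mourre inequality of Theorem~\ref{thm:Mourre} and the local $C^2(\ant)$-regularity of Theorem~\ref{thm:regularity} into the abstract conjugate-operator theorems of \cite{Sahbani1997} (Theorems~0.1 and 0.2), with $g_\delta=\tilde g_\delta^{(3)}$ from \eqref{eq:def-gtildedeltatrois}; the paper's proof is exactly this citation, with none of the interval bookkeeping spelled out. One correction to the bookkeeping you added: since $\sigma_{n-1}>\sigma_n$, one has $(\gamma-\epsilon_\gamma)^2\sigma_{n-1}>(\gamma-\epsilon_\gamma)^2\sigma_n$, so your claimed inequality is reversed and $\Delta_n$ is \emph{not} contained in the interior of $\Delta_{n-1}$; the windows are not nested but merely overlap (because $(\gamma-\epsilon_\gamma)^2<\gamma(\gamma+\epsilon_\gamma)$ and $\gamma<1$), and it is this overlap --- the lower endpoint of $\Delta_n$ lies in the interior of $\Delta_{n+1}$ and, for $n\geq 2$, the upper endpoint lies in the interior of $\Delta_{n-1}$ --- that lets one cover the closed interval $\Delta_n$ when applying the abstract LAP, which otherwise only yields conclusions locally uniformly on open sets where the strict Mourre estimate holds.
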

\begin{proof}
Theorem~\ref{thm:LAP} follows from the $C^2(\ant)$-regularity in
Theorem~\ref{thm:regularity} and the Mourre inequality in
Theorem~\ref{thm:Mourre} with $g_\delta= \tilde{g}_\delta^{(3)}$
(defined by \eqref{eq:def-gtildedeltatrois}), according to
Theorems~0.1 and 0.2 in \cite{Sahbani1997} (see also
\cite{GoleniaJecko2007}, \cite{Gerard2008}, and
\cite{Frohlichetal2008}).
\end{proof}
%%%%%%%%%%%%%%%%%%%%%%%%%%%%%

\section{Proof of Theorem~\ref{thm:main-2}}\label{S-proof}

\setcounter{equation}{0}

$\bullet$ We first prove (i) of Theorem~\ref{thm:main-2}.
According to \eqref{eq:def-sigma-n} we have
 $$
  [\sigma_{n+2},\, \sigma_{n+1}]
  \subset [(\gamma-\epsilon_\gamma)^2\sigma_n,\,
  (\gamma+\epsilon_\gamma)\sigma_n] = \Delta_n\, ,
 $$
thus $\bigcup_{n} \Delta_n$ is a covering by open sets of any
compact subset of $(\inf\spec(H),\, m_1-\delta)$. Therefore,
\cite[Theorem~0.1 and Theorem~0.2]{Sahbani1997} together with the
Mourre inequality \eqref{eq:mourre} in Theorem~\ref{thm:Mourre}
and the local $C^2(\ant)$ regularity in
Theorem~\ref{thm:regularity} imply that (i) of
Theorem~\ref{thm:main-2} holds true.

\medskip

$\bullet$ For the proof of (ii) of Theorem~\ref{thm:main-2}, let
us first note that since $\bigcup_{n} \Delta_n$ is a covering by
intervals of $(E,\,E+ m_1-\delta)$, using subadditivity,
it suffices to prove the result for any $n\geq1$ and $f\in
C_0^\infty(\Delta_n)$.

For $a_n^{(\tau)} = \chi_n^{(\tau)}(p_2) \frac12 \left(p_2\cdot
i\nabla_{p_2} + i\nabla_{p_2}\cdot p_2\right)\chi_n^{(\tau)}(p_2)$ as given by \eqref{def:a-n-tau}, and $b = i\nabla_{p_2}$, we
have, for all $\varphi\in\cD(b)$,
\begin{equation}\nonumber
\begin{split}
 \| a_n^{(\tau)} \varphi \| & =
 \| \chi_n^{(\tau)}(p_2) \frac12
 \left(p_2\cdot i\nabla_{p_2}
 + i\nabla_{p_2}\cdot p_2\right)
 \chi_n^{(\tau)}(p_2) \varphi\| \\
 & \leq \frac12( \|\cnt(p_2)\, p_2\|\ + \|p_2\cnt(p_2)\|) \|i\nabla_{p_2}\varphi\|
 + \frac12 \| i\nabla_{p_2}\, p_2 \cnt\|\, \|\varphi\|\, ,
\end{split}
\end{equation}
 Therefore, there exists $c_n>1$ such
that
\begin{equation}\nonumber
 |a_n^{(\tau)}|^2 \leq c_n \langle b \rangle^2\, .
\end{equation}
Since $\langle b \rangle$ is a nonnegative operator,
\cite[Proposition~3.4 ii)]{Georgescuetal2004} implies
\begin{equation}\nonumber
 \d\Gamma( a_n^{(\tau)} )^2 \leq c_n \d\Gamma( \langle b \rangle )^2\, ,
\end{equation}
and thus
\begin{equation}\nonumber
 (\ant)^2 \leq c_n B^2\ .
\end{equation}
This implies
\begin{equation}\label{eq:hi-0}
 \| (B+1)^{-1} \langle \ant\rangle\| <\infty\quad\mbox{and}\quad
 \| \langle \ant\rangle (B+1)^{-1}\| <\infty\, .
\end{equation}
The map
 $$
  F(z)=\mathrm{e}^{-z\ln (B+1)} \mathrm{e}^{z\ln \langle\ant\rangle}\phi
 $$
is analytic on the strip $S=\{ z\in\C\ |\ 0 < \Re\, z < 1\}$ for
all $\phi\in\cD(B)\subset\cD(\langle A_n^{(\tau)} \rangle)$. For
$\Re\, z =0$, the operator $F(z)$ is bounded by $\|\phi\|$ and,
for $\Re\, z=1$, according to \eqref{eq:hi-0}, $F(z)$ is bounded
by $\| (B+1)^{-1} \langle \ant\rangle\|\,\|\phi\|$. Therefore, due
to Hadamard's three-line theorem, $F(z)$ is a bounded operator on
the strip $S$. In particular, for all $s\in (0,1)$, we obtain
\begin{equation}\label{eq:hi-1}
 \| (B+1)^{-s} \langle \ant\rangle^s\| <\infty\quad\mbox{and}\quad
 \| \langle \ant\rangle^s (B+1)^{-s}\| <\infty\, .
\end{equation}
Using \eqref{eq:hi-1}, we can write
\begin{equation*}
\begin{split}
 & (\varphi,\, \langle B+1 \rangle^{-s} (H-\lambda\pm\epsilon)^{-1}
 \langle B+1 \rangle^{-s} \psi ) \\
 & ( \langle \ant \rangle^{s} \langle B+1 \rangle^{-s} \varphi
 ,\, \langle \ant\rangle^{-s} (H-\lambda\pm\epsilon)^{-1} \langle\ant\rangle^{-s}
 \langle \ant\rangle^s \langle B+1 \rangle^{-s} \psi)\, .
\end{split}
\end{equation*}
We thus conclude the proof of Theorem~\ref{thm:main-2}~(ii) by
using Theorem~\ref{thm:LAP}.

\medskip

$\bullet$ We finally prove (iii) of Theorem~\ref{thm:main-2}. For
that sake, we first need to establish the following lemma.
\begin{lemma}\label{lem:8.1}
Suppose that $s\in(1/2,\, 1)$ and that for some $n$, $f\in
C_0^\infty(\Delta_n)$. Then,
\begin{equation}\nonumber
 \left\|
  \langle \ant\rangle^{-s} \mathrm{e}^{-i t H} f(H) \langle\ant\rangle^{-s}
 \right\|
 =
 \mathcal{O}\left(t^{-(s-\frac12)}\right)\, .
\end{equation}
\end{lemma}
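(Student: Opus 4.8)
The plan is to derive this dispersive estimate from the limiting absorption principle of Theorem~\ref{thm:LAP} in the now-standard way (cf. \cite{Frohlichetal2008}, \cite{Chenetal2009}). First I would fix $n\geq 1$ and $f\in C_0^\infty(\Delta_n)$, and choose $s$ with $1/2<s<1$. Since $f$ is supported in the interior of $\Delta_n$, where the limiting absorption principle holds, I can write the spectrally localized propagator via the Stone formula as an integral over $\lambda\in\mathrm{supp}(f)$:
\begin{equation}\nonumber
 \langle\ant\rangle^{-s}\mathrm{e}^{-itH}f(H)\langle\ant\rangle^{-s}
 = \frac{1}{2\pi i}\int f(\lambda)\,\mathrm{e}^{-it\lambda}\,
 \langle\ant\rangle^{-s}\big[(H-\lambda-i0)^{-1}-(H-\lambda+i0)^{-1}\big]
 \langle\ant\rangle^{-s}\,\d\lambda\, .
\end{equation}
Here I use that $H$ has purely absolutely continuous spectrum on $\Delta_n$ (Theorem~\ref{thm:main-2}(i)), so that $f(H)$ acts through the boundary values of the resolvent, which exist as bounded operators on $\gF$ after sandwiching by $\langle\ant\rangle^{-s}$.

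Next I would invoke the Hölder continuity statement from Theorem~\ref{thm:LAP}: for $1/2<s<1$, the map
$\lambda\mapsto \langle\ant\rangle^{-s}(H-\lambda\pm i0)^{-1}\langle\ant\rangle^{-s}$
is Hölder continuous of exponent $s-1/2$ on $\Delta_n$. Writing $F_\pm(\lambda)$ for these boundary-value operators and $G(\lambda)=f(\lambda)(F_-(\lambda)-F_+(\lambda))$, the quantity to estimate is $\int \mathrm{e}^{-it\lambda}G(\lambda)\,\d\lambda$ where $G$ is a compactly supported, operator-valued, Hölder-$(s-1/2)$ function. The decay then follows from the elementary oscillatory-integral bound: for a compactly supported function $G$ that is Hölder continuous of order $\alpha\in(0,1]$ one has $\|\int \mathrm{e}^{-it\lambda}G(\lambda)\,\d\lambda\|=\mathcal{O}(t^{-\alpha})$, proved by the standard trick of splitting $G(\lambda)=\frac12(G(\lambda)-G(\lambda+\pi/t))+\frac12(G(\lambda)+G(\lambda+\pi/t))$ and using $\mathrm{e}^{-it(\lambda+\pi/t)}=-\mathrm{e}^{-it\lambda}$ so that the oscillatory integral equals $\frac12\int \mathrm{e}^{-it\lambda}(G(\lambda)-G(\lambda-\pi/t))\,\d\lambda$ (plus boundary corrections controlled by the compact support), whose norm is bounded by $\frac12|\mathrm{supp}\,G|\cdot C(\pi/t)^\alpha$. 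Applying this with $\alpha=s-1/2$ yields the claimed $\mathcal{O}(t^{-(s-1/2)})$ bound.

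The only slightly delicate points, and the main things to be careful about, are: (a) justifying that the Stone-formula integral representation of $\langle\ant\rangle^{-s}\mathrm{e}^{-itH}f(H)\langle\ant\rangle^{-s}$ is valid as a norm-convergent (or weakly convergent, tested against $\varphi,\psi\in\gF$) integral — this uses the absolute continuity on $\Delta_n$ and the existence and boundedness of the boundary values, i.e. precisely Theorems~\ref{thm:main-2}(i) and \ref{thm:LAP}; and (b) ensuring the oscillatory-integral lemma is applied to the correct combination so that the cancellation $\mathrm{e}^{-i\pi}=-1$ can be exploited, with the contributions from the endpoints of $\mathrm{supp}\,f$ handled by the compactness of the support. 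Neither of these is a genuine obstacle: part (a) is a routine consequence of the functional calculus once the LAP is in hand, and part (b) is a classical estimate. I expect step (a) — setting up the representation rigorously — to be the part requiring the most care, but it presents no conceptual difficulty.
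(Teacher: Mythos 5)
Your proposal is correct and coincides with the paper's route: the paper simply invokes \cite[Theorem~25]{Frohlichetal2008}, whose proof is exactly this argument — represent $\langle\ant\rangle^{-s}\mathrm{e}^{-itH}f(H)\langle\ant\rangle^{-s}$ via the Stone formula using the boundary values of the weighted resolvent from Theorem~\ref{thm:LAP}, and then convert the H\"older continuity of degree $s-\frac12$ into $\mathcal{O}(t^{-(s-1/2)})$ decay by the standard oscillatory-integral (shift by $\pi/t$) estimate. Your points (a) and (b) are precisely the routine justifications needed, and they go through as you describe.
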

\begin{proof}
The proof is the same as the one  in
\cite[Theorem~25]{Frohlichetal2008} for the Pauli-Fierz model of
non-relativistic QED. It makes use of the local H\"older
continuity stated in Theorem~\ref{thm:LAP}.
\end{proof}

We now prove (iii) of Theorem~\ref{thm:main-2} by using
\eqref{eq:hi-1}, Lemma~\ref{lem:8.1}, and writing
\begin{equation*}
\begin{split}
  & \| (B+1)^{-s} \mathrm{e}^{i t H} f(H) (B+1)^{-s} \| \\
  & \leq \| (B+1)^{-s} \langle \ant\rangle^s\|\,
  \| \langle\ant\rangle^{-s} \mathrm{e}^{i t H} f(H) \langle\ant\rangle^{-s} \|\,
  \| \langle \ant\rangle^s (B+1)^{-s}\|\, .
\end{split}
\end{equation*}
%%%%%%%%%%%%%%%%%%%%%%%%%%%%%%%%%%%%%%%%%%%%%%%%%%%%%%%%%%%%%
%%%%%%%%%%%%%%%%%%%%%%%%%%%%%%%%%%%%%%%%%%%%%%%%%%%%%%%%%%%%%

\appendix

\section{}\label{appendix}

In this section, we establish several lemmata that are useful for
the proof of the Mourre estimate in Section~\ref{S-Mourre}.

%%%%%%%%%%%%%%%%%%%%   lemma   %%%%%%%%%%%%%%%%%%
\begin{lemma}\label{lem:5.1}
Suppose that the kernels $G^{(\alpha)}$ satisfy
Hypotheses~\ref{hypothesis:2.1} and \ref{hypothesis:3.1}(ii). Then,
there exists a constant $D_1(G)$ such that, for $g\leq g_2$ and
$n\geq 1$,
\begin{equation}\nonumber%\label{eq:5.26}
 | E - E_n| \leq g D_1(G) \sigma_n\, .
\end{equation}
\end{lemma}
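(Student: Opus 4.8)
The plan is to compare the ground state energies $E = \inf\spec(H)$ and $E_n = \inf\spec(H_n)$ by using the fact that $H_n = H_{\sigma_n}$ differs from $H$ only through the interaction associated with the kernel $\chi_{\sigma_n}(p_2)\,G^{(\alpha)}$, i.e.\ the low-energy part of the neutrino interaction that has been cut off. The natural first step is to write
\begin{equation}\nonumber
 H - H_n = g\,(H_I - H_{I,\sigma_n}) = g\,H_I(\chi_{\sigma_n} G)\, ,
\end{equation}
where $H_I(\chi_{\sigma_n}G)$ denotes the interaction operator built from the kernels $\chi_{\sigma_n}(p_2)\,G^{(\alpha)}(\xi_1,\xi_2,\xi_3)$. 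First I would estimate this difference in relative bound form: applying the basic $N_\tau$-type estimate \eqref{eq:2.53} of Lemma~\ref{lem:5.5} (as already used to obtain \eqref{eq:A7}), together with \eqref{eq:3.26} and \eqref{eq:3.28}, one gets for $\psi\in\cD(H_0)$
\begin{equation}\nonumber
 g\,\|H_I(\chi_{\sigma_n}G)\psi\|
 \leq g\,K(\chi_{\sigma_n}G)\left(C_{\beta\eta}\|H_0\psi\| + B_{\beta\eta}\|\psi\|\right)\, ,
\end{equation}
and then use Hypothesis~\ref{hypothesis:3.1}(ii) to control the cutoff norm $K(\chi_{\sigma_n}G)$. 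Since $\chi_{\sigma_n}$ is supported in $\{|p_2|\leq \sigma_n\}$, Hypothesis~\ref{hypothesis:3.1}(ii) directly yields $K(\chi_{\sigma_n}G)\leq \tilde K(G)\,\sigma_n$ (up to the summation over $\alpha$, absorbed into the constant).

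The second step is to convert the relative bound into a bound on the difference of infima of the spectra. For this I would bound $\|H_0\psi\|$ in terms of $\|H\psi\|$ and $\|\psi\|$ as in \eqref{eq:5.4}--\eqref{eq:5.6}, so that $g\,\|H_I(\chi_{\sigma_n}G)\psi\|\leq g\,\sigma_n\,(a\|H\psi\| + b\|\psi\|)$ with constants $a,b$ independent of $n$; alternatively, and more simply, one can directly estimate the associated quadratic form: for $g\leq g_2$, $\psi\in\cD(H)=\cD(H_n)$ with $\|\psi\|=1$,
\begin{equation}\nonumber
 |(\psi, (H-H_n)\psi)| = g\,|(\psi, H_I(\chi_{\sigma_n}G)\psi)|
 \leq \tfrac12 g\,\sigma_n\,D_1'(G)\,(\psi,(H_0+1)\psi)\, ,
\end{equation}
and then use that $H_0+1$ is itself bounded relatively to $H$ (uniformly for $g\leq g_1$, by \eqref{eq:5.6}) together with the fact that on a ground-state-approaching sequence $\psi_k$ for $H$ one has $(\psi_k, H\psi_k)\to E$, which is bounded; this gives $(\psi_k,(H_0+1)\psi_k)$ bounded uniformly. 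Testing $H_n\leq H + g\sigma_n D_1'(G)(H_0+1)$ on such a sequence gives $E_n \leq E + g\sigma_n D_1(G)$, and testing $H\leq H_n + g\sigma_n D_1'(G)(H_0+1)$ on a ground-state-approaching sequence for $H_n$ (whose energies are likewise uniformly bounded, by \eqref{eq:A.11}) gives the reverse inequality $E \leq E_n + g\sigma_n D_1(G)$, whence $|E - E_n|\leq g D_1(G)\sigma_n$.

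The main technical point — really the only nontrivial one — is the uniform (in $n$) control of $(\psi,(H_0+1)\psi)$ along the almost-minimizing sequences, i.e.\ making sure the relative-bound constants do not degrade as $n\to\infty$. This is exactly handled by the smallness condition $g\leq g_1$ (so that $g_1 K(G)C_{\beta\eta}<1$, cf.\ \eqref{eq:5.5}) combined with the uniform bound $|E_n|\leq g_1 K(G)B_{\beta\eta}/(1-g_1K(G)C_{\beta\eta})$ from \eqref{eq:A.11}; since all these quantities are $n$-independent, the constant $D_1(G)$ comes out independent of $n$. The factor $\sigma_n$ is supplied entirely by Hypothesis~\ref{hypothesis:3.1}(ii), which is why that hypothesis (rather than the stronger (i)) is the one assumed here. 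I do not expect any genuine obstacle; the argument is a routine perturbative estimate combined with the gap/localization machinery already set up in Section~\ref{S-gap}.
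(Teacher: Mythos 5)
Your proposal is correct, and its first half is exactly the paper's: you write $H-H_n=gH_I(\chi_{\sigma_n}G)$ (cf.\ \eqref{eq:5.28}), apply the relative bound \eqref{eq:2.53} of Lemma~\ref{lem:5.5} with \eqref{eq:3.26}--\eqref{eq:3.28}, and extract the factor $\sigma_n$ from Hypothesis~\ref{hypothesis:3.1}(ii) (note only that, with the convention implicit in \eqref{A.12}, $\chi_{\sigma_n}$ is supported in $\{|p_2|\leq 2\sigma_n\}$ rather than $\{|p_2|\leq\sigma_n\}$, which your constant absorbs). Where you genuinely diverge is in the variational step. The paper invokes the actual normalized ground states $\phi$ of $H$ and $\phi_n$ of $H_n$ (imported from \cite[Theorem~3.3]{BarbarouxGuillot2009}, whence the restriction $g\leq g_2$), uses the first-order inequalities $E-E_n\leq(\phi_n,(H-H_n)\phi_n)$ and $E_n-E\leq(\phi,(H_n-H)\phi)$ as in \eqref{eq:5.27}, and then needs uniform-in-$n$ bounds on $\|H_0\phi\|$, $\|H_0\phi_n\|$ (via \cite[(4.9)]{BarbarouxGuillot2009} and \eqref{eq:5.18}). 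You instead test the operator/form inequality $\pm(H-H_n)\leq g\sigma_n D_1'(G)(H_0+1)$ on almost-minimizing sequences for $H$ and for $H_n$, controlling $(\psi_k,(H_0+1)\psi_k)$ uniformly by the absorption $g_1K(G)C_{\beta\eta}<1$ together with the uniform bound \eqref{eq:A.11} on $E_n$. This buys you independence from ground-state existence, which is aesthetically preferable here since the lemma only assumes Hypothesis~\ref{hypothesis:3.1}(ii) while the existence result rests on stronger hypotheses; the paper's route, in exchange, is shorter because the eigenvectors are already available. One detail you should make explicit: passing from the norm relative bound $\|H_I(\chi_{\sigma_n}G)\psi\|\leq K(\chi_{\sigma_n}G)(C_{\beta\eta}\|H_0\psi\|+B_{\beta\eta}\|\psi\|)$ to the quadratic-form bound $|(\psi,H_I(\chi_{\sigma_n}G)\psi)|\lesssim K(\chi_{\sigma_n}G)\,(\psi,(H_0+1)\psi)$ is not a plain Cauchy--Schwarz step; it requires \cite[\S V, Theorems~4.11 and 4.12]{Kato1966}, exactly as the paper does to obtain \eqref{eq:B.22}. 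With that citation added, your argument is complete and yields the same $n$-independent constant $D_1(G)$.
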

%%%%%%%%%%%%%%%%%%%%%%%%%%%%%%%%%%%%%%%%%%%%%%%%%
%
%
%
%%%%%%%%%%%%
\begin{proof}
For $g\leq g_2$, where $0<g_2$ is given by
\cite[Theorem~3.3]{BarbarouxGuillot2009}, we consider $\phi$
(respectively $\phi_n$), the unique normalized ground state of $H$
(respectively $H_n$) (see again
\cite[Theorem~3.3]{BarbarouxGuillot2009}). We thus have
\begin{equation}\label{eq:5.27}
\begin{split}
 & E - E_n \leq (\phi_n, (H-H_n)\phi_n)\, , \\
 & E_n - E \leq (\phi, (H_n - H) \phi )\, ,
\end{split}
\end{equation}
with
\begin{equation}\label{eq:5.28}
 H - H_n = g H_I (\chi_{\sigma_n/2}(p_2) G)\,
\end{equation}
where
\begin{equation}\nonumber
\chi_{\sigma_n/2}(p) =
\chi_0\left(\frac{|p|}{\sigma_n/2}\right)\, ,
\end{equation}
and $\chi_0(.) \in C^\infty(\R,\, [0,\,1])$ is fixed.
Combining \eqref{eq:2.53} of Lemma~\ref{lem:5.5} with
\eqref{eq:3.26} and \eqref{eq:3.28}, we get, together with
\eqref{eq:5.28}, for $g\leq g_2$,
\begin{equation}\nonumber
 \| (H-H_n)\phi_n\| \leq g K(\chi_{\sigma_n/2}(p_2) G)
 \left( \cb \|H_0\phi_n\| + \bb\right)\, ,
\end{equation}
and
\begin{equation}\nonumber
 \| (H - H_n) \phi \| \leq g K( \chi_{\sigma_n/2}(p_2) G)
   \left( \cb \|H_0 \phi\| + \bb \right)\, .
\end{equation}
It follows from Hypothesis~\ref{hypothesis:3.1}\textit{(ii)},
\cite[(4.9)]{BarbarouxGuillot2009}, and with \eqref{eq:5.18} that
there exists a constant $D_1(G)>0$ depending on $G$, such that,
for $n\geq 1$ and $g\leq g_2$, 
\begin{equation}
 \sup \left( \|(H-H_n)\phi_n\|,\
 \| (H-H_n)\phi\|\right) \leq g D_1(G) \sigma_n\, .
\end{equation}
By \eqref{eq:5.27}, this proves Lemma~\ref{lem:5.1}.
\end{proof}
%%%%%%%%%%%%

%%%%%%%%%%%%%%%%%%%%%  lemma  %%%%%%%%%%%%%%%%%%%%
\begin{lemma}\label{lem:5.2}
 We have
\begin{equation}\label{eq:5.32}
\begin{split}
 & \| \d\Gamma( (\cnt)^2 w^{(2)})
 \left(H_n - E_n - z\sigma_n\right )^{-1}\|
 \\
 & \leq
 \| (H_n - E_n) (H_n - E_n - z\sigma_n)^{-1}\| \leq
 1 + \frac{|z|}{ | \Im z |}\, .
\end{split}
\end{equation}
\end{lemma}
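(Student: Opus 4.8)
The plan is to reduce the first inequality to a functional-calculus estimate for commuting nonnegative operators, and to obtain the second from the elementary resolvent bound. The one mild subtlety is that $H_n - E_n$ does not commute with $\d\Gamma((\cnt)^2 w^{(2)})$, because of the interaction $g H_{I,n}$ contained in $H_n$; this is circumvented by dominating $\d\Gamma((\cnt)^2 w^{(2)})$ by an operator that does commute with everything in sight.

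First I would establish, on $\gF \simeq \gF^{\sigma_n}\otimes\gF_{\sigma_n}$, the chain of operator inequalities
\begin{equation}\nonumber
 0 \leq \d\Gamma((\cnt)^2 w^{(2)}) \leq \1^{\sigma_n}\otimes H_{0,\sigma_n}^{(2)} \leq H_n - E_n\, .
\end{equation}
Since $0\le\cnt\le 1$ and $\cnt$ vanishes for $|p_2|\ge\sigma_n$, one has the pointwise inequality $0 \le (\cnt(p_2))^2 w^{(2)}(\xi_2) \le \1_{\{|p_2|\le\sigma_n\}}(\xi_2)\, w^{(2)}(\xi_2)$; moreover, $(\cnt)^2 w^{(2)}$ being supported in $\{|p_2|<\sigma_n\}$, the operator $\d\Gamma((\cnt)^2 w^{(2)})$ acts trivially on the factor $\gF^{\sigma_n}$, so it equals $\1^{\sigma_n}$ tensored with the corresponding $\d\Gamma$ on $\gF_{\sigma_n}$. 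The monotonicity of second quantization then gives the first two inequalities. The last one follows from \eqref{eq:def-h-sigma-down} together with $E_n=E^n$ and $H^n-E^n\ge 0$, i.e. from $H_n - E_n = (H^n - E^n)\otimes\1_{\sigma_n} + \1^{\sigma_n}\otimes H_{0,\sigma_n}^{(2)} \ge \1^{\sigma_n}\otimes H_{0,\sigma_n}^{(2)}$.

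Next I would pass to a joint spectral representation. Put $A = \d\Gamma((\cnt)^2 w^{(2)})$, $D = \1^{\sigma_n}\otimes H_{0,\sigma_n}^{(2)}$ and $K = (H^n - E^n)\otimes\1_{\sigma_n}$, so that $H_n - E_n = K + D$. The operators $A$, $D$, $K$ are nonnegative, self-adjoint and pairwise commuting: $A$ and $D$ are both second quantizations of multiplication operators in $p_2$ on the factor $\gF_{\sigma_n}$, whereas $K$ acts on the complementary factor $\gF^{\sigma_n}$. In a common spectral representation they become multiplication by nonnegative functions $a$, $d$, $k$ with $a\le d$, and both operator norms appearing in the statement become suprema of the corresponding scalar functions. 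The bound $\|\d\Gamma((\cnt)^2 w^{(2)})(H_n - E_n - z\sigma_n)^{-1}\| \le \|(H_n - E_n)(H_n - E_n - z\sigma_n)^{-1}\|$ thus reduces to the pointwise inequality $a/|k+d-z\sigma_n| \le (k+d)/|k+d-z\sigma_n|$, which holds because $0\le a\le d\le k+d$.

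Finally, for the last inequality I would set $T = H_n - E_n$, a nonnegative self-adjoint operator, and write $T(T - z\sigma_n)^{-1} = \1 + z\sigma_n(T - z\sigma_n)^{-1}$. Since $\spec(T)\subset[0,\infty)$ and $\sigma_n>0$, we have $\|(T - z\sigma_n)^{-1}\| \le |\Im(z\sigma_n)|^{-1} = (|\Im z|\,\sigma_n)^{-1}$, hence $\|T(T - z\sigma_n)^{-1}\| \le 1 + |z|/|\Im z|$. I do not expect a genuine obstacle in this argument; the only point deserving some care is the identification of $\d\Gamma((\cnt)^2 w^{(2)})$ with $\1^{\sigma_n}\otimes(\,\cdot\,)$ and the resulting exact commutation of $A$ with $K+D$, which is precisely what makes the spectral-representation step legitimate in spite of the interaction contained in $H_n$.
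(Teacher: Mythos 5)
Your argument is correct and follows essentially the same route as the paper: the same domination chain $\d\Gamma((\cnt)^2 w^{(2)}) \le \1\otimes H_{0,\sigma_n}^{(2)} \le H_n-E_n$, obtained from the tensor decomposition \eqref{eq:def-h-sigma-down} together with $H^n-E^n\ge 0$, followed by the elementary resolvent estimate giving $1+|z|/|\Im z|$. The only inessential difference is how the commuting tensor structure is exploited: the paper expands $\|(M_1+M_2)\psi\|^2$ and drops the nonnegative cross term $2\,(M_1^{1/2}\psi,\,M_2 M_1^{1/2}\psi)$ to get $\|\d\Gamma((\cnt)^2 w^{(2)})\psi\|\le\|(H_n-E_n)\psi\|$ on a core and then puts $\psi=(H_n-E_n-z\sigma_n)^{-1}\phi$, whereas you pass to a joint spectral representation of the three commuting operators and argue pointwise.
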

%%%%%%%%%%%%%%%%%%%%%%%%%%%%%%%%%%%%%%%%%%%%%%%%%%
%
%
%
%%%%%%%%%%%%
\begin{proof}
We have
\begin{equation}\label{eq:5.33}
 \1\otimes \d\Gamma((\cnt)^2 w^{(2)} ) \leq \1\otimes
 H_{0,n}^{(2)} \leq H_n - E_n\, .
\end{equation}
Set
 $$
 M_1 = \1 \otimes H_{0,n}^{(2)},\quad
 M_2 = (H^n - E^n)\otimes \1, \quad\mbox{and}\quad
 M = M_1 + M_2 = H_n - E_n\, ,
 $$
and let $\psi$ be in the algebraic tensor product
$\gD(M_1)\hat\otimes\gD(M_2)$. We obtain
\begin{equation*}
\begin{split}
 & \| (M_1\otimes \1 + \1\otimes M_2)\psi\|^2 \\
 & = \| (M_1\otimes \1)\psi\|^2 + \|(\1 \otimes M_2)\psi\|^2
 + 2\, \Re (\psi, \, (M_1\otimes\1)(\1 \otimes M_2) \psi) \\
 & = \| (M_1\otimes \1)\psi\|^2 + \|(\1 \otimes M_2)\psi\|^2
 + 2 ( (M_1{}^\frac12\otimes\1)\psi,\,
 (\1\otimes M_2)\, (M_1{}^\frac12\otimes\1)\psi) \\
 & \geq \| (M_1\otimes\1)\psi\|^2\, .
\end{split}
\end{equation*}
Thus, we get
\begin{equation}\label{eq:5.42}
 \| \d\Gamma\big( (\cnt)^2 w^{(2)}\big) \psi\|
 \leq \| (H_n - E_n) \psi\|\, .
\end{equation}
The set $\gD(M_1)\hat\otimes\gD(M_2)$ is a core for $M$, thus
\eqref{eq:5.42} is satisfied for every $\psi\in\cD(H_n - E_n) =
\gD(H_0)$. Setting
 $$
  \psi = (H_n - E_n - z\sigma_n)^{-1}\phi\, ,
 $$
in \eqref{eq:5.42}, we immediately get \eqref{eq:5.32}.
\end{proof}
%%%%%%%%%%%%

%%%%%%%%%%%%%%%%%%%%%  lemma  %%%%%%%%%%%%%%%%%%%%
\begin{lemma}\label{lem:5.3}
Suppose that the kernels $G^{(\alpha)}$ verify
Hypotheses~\ref{hypothesis:2.1} and ~\ref{hypothesis:3.1}(ii).
Then, there exists a constant $\tilde{C}_6^f(G)>0$ such that,
for $g\leq g_2$ and $n\geq 1$, 
\begin{equation}\label{eq:5.44}
 \| f_n(H_n - E_n) - f_n(H-E)\| \leq g\, \tilde{C}_6^f(G)\, .
\end{equation}

\end{lemma}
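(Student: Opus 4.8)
The plan is to estimate $f_n(H_n-E_n)-f_n(H-E)$ by means of the Helffer–Sjöstrand formula together with a resolvent identity, exploiting the fact that the difference $H-H_n$ is a ``small'' interaction term controlled by Hypothesis~\ref{hypothesis:3.1}(ii). First I would fix an almost analytic extension $\tilde f$ of $f$ with $|\partial_{\bar z}\tilde f(x+iy)|\leq C\,y^{2}$ (as in \eqref{eq:5.11}–\eqref{eq:5.12}), so that for any self-adjoint operator $K$ one has $f_n(K)=f(K/\sigma_n)=\int \d\tilde f(z)\,(K/\sigma_n-z)^{-1}$, i.e.
\begin{equation}\nonumber
 f_n(K) = \sigma_n\int \d\tilde f(z)\,(K-\sigma_n z)^{-1}\, .
\end{equation}
Applying this with $K=H_n-E_n$ and $K=H-E$ and using the second resolvent identity gives
\begin{equation}\nonumber
 f_n(H_n-E_n)-f_n(H-E)
 = \sigma_n\int \d\tilde f(z)\,(H_n-E_n-\sigma_n z)^{-1}
 \big[(H-H_n)-(E-E_n)\big](H-E-\sigma_n z)^{-1}\, .
\end{equation}

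Next I would insert a factor $(H_0+1)^{-1}(H_0+1)$ to the left of $(H-E-\sigma_n z)^{-1}$ and estimate the three factors separately. By \eqref{eq:5.28} we have $H-H_n=gH_I(\chi_{\sigma_n/2}(p_2)G)$, and combining \eqref{eq:2.53} of Lemma~\ref{lem:5.5} with \eqref{eq:3.26} and Hypothesis~\ref{hypothesis:3.1}(ii) (exactly as in the proof of Lemma~\ref{lem:5.1}, where $K(\chi_{\sigma_n/2}(p_2)G)\leq g\,\mathrm{const}\cdot\sigma_n$ via \eqref{eq:A.18}-type bounds) yields
\begin{equation}\nonumber
 g\,\big\|(H-H_n)(H_0+1)^{-1}\big\|\leq g\,D(G)\,\sigma_n\, .
\end{equation}
By Lemma~\ref{lem:5.1}, $|E-E_n|\leq gD_1(G)\sigma_n$, so the bracketed operator composed with $(H_0+1)^{-1}$ has norm at most $g\,\mathrm{const}\cdot\sigma_n$. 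For the resolvents, the bound \eqref{eq:5.22} gives $\|(H_0+1)(H-E-\sigma_n z)^{-1}\|\leq \tilde C_3(G)(1+|z|\sigma_n)/(|\Im z|\sigma_n)$, and the elementary bound $\|(H_n-E_n-\sigma_n z)^{-1}\|\leq 1/(|\Im z|\sigma_n)$ (cf.\ Lemma~\ref{lem:5.2}) handles the left-most factor. Multiplying these together, the overall $z$-integrand is bounded by $g\,\mathrm{const}\cdot(1+|z|\sigma_n)/(|\Im z|^2\sigma_n)$, which — since $z$ ranges over the fixed compact $\mathrm{supp}(\tilde f)$ and $\sigma_n\leq\sigma_1<m_1$ — is dominated, after multiplying by the prefactor $\sigma_n$ and integrating $\d\tilde f(z)$ against $|\Im z|^{2}$, by a constant $g\,\tilde C_6^f(G)$ independent of $n$.

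The main point to be careful about is the $\sigma_n$-bookkeeping: each resolvent brings a factor $1/\sigma_n$, the prefactor brings a $\sigma_n$, and the difference $(H-H_n)-(E-E_n)$ brings a $\sigma_n$, so the powers of $\sigma_n$ must cancel to leave an $n$-independent constant; the almost-analyticity bound $|\partial_{\bar z}\tilde f|\lesssim|\Im z|^2$ is precisely what absorbs the two $1/|\Im z|$ factors coming from the two resolvents. One subtlety is that $\mathrm{supp}(\tilde f)$ is a fixed compact set in $z$-space (since $f\in C_0^\infty$), so $|z|$ and $\sigma_n|z|$ stay bounded uniformly in $n$ once $\sigma_n\leq\sigma_1$; this is what makes the factor $(1+|z|\sigma_n)$ harmless. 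Gathering all constants and using that $\tilde g_\delta^{(2)}\leq g_2$ on the relevant range, one obtains \eqref{eq:5.44} for $g\leq g_2$ and all $n\geq 1$, which completes the proof.
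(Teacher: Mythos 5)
Your proposal is correct and follows essentially the same route as the paper's proof: the Helffer--Sj\"ostrand formula with the second resolvent identity (the paper's \eqref{eq:5.45}), the insertion of $(H_0+1)^{-1}(H_0+1)$, the bound $g\|H_I(\chi_{\sigma_n/2}G)(H_0+1)^{-1}\|\leq g D_2(G)\sigma_n$ from Lemma~\ref{lem:5.5} and Hypothesis~\ref{hypothesis:3.1}(ii), Lemma~\ref{lem:5.1} for $|E-E_n|$, and \eqref{eq:5.22} for the weighted resolvent, with the same $\sigma_n$ and $|\Im z|$ bookkeeping. The only cosmetic slip is the extra factor of $g$ in your bound on $\|(H-H_n)(H_0+1)^{-1}\|$ (since $H-H_n$ already carries the $g$), which does not affect the argument.
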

%%%%%%%%%%%%%%%%%%%%%%%%%%%%%%%%%%%%%%%%%%%%%%%%%%
%
%
%
%%%%%%%%%%%%
\begin{proof}
We have
\begin{equation}\label{eq:5.45}
\begin{split}
 & f_n(H_n-E_n)-f_n(H-E)\\
 &= \sigma_n \int \frac{1}{H_n - E_n - z\sigma_n}
 (H_n - H + E - E_n)\frac{1}{ H - E - z\sigma_n}
 \, \d\tilde{f}(z)\, .
\end{split}
\end{equation}
Combining \eqref{eq:2.53} of Lemma~\ref{lem:5.5}, \eqref{eq:3.26},
\eqref{eq:3.28}, and Hypothesis~\ref{hypothesis:3.1}\textit{(ii)},
we obtain, for every $\psi\in\cD(H_0)$ and for $g\leq g_2$,
\begin{equation}
 g \| H_I(\chi_{\sigma_n/2} G)\psi\|
 \leq
 g \sigma_n \tilde{K}(G) (\cb \| (H_0+1)\psi\|
 + (\cb+\bb)\|\psi\| )\, .
\end{equation}
This yields
\begin{equation}\label{eq:5.47}
 g \|H_I (\chi_{\sigma_n/2} G)  (H_0+1)^{-1}\|
 \leq g D_2(G) \sigma_n\, ,
\end{equation}
for some constant $D_2(G)$ and for $g\leq g_2$. Combining
Lemma~\ref{lem:5.1} with \eqref{eq:5.22} and
\eqref{eq:5.45}-\eqref{eq:5.47}, we obtain, for $g\leq g_2$,
\begin{equation}\label{eq:5.48}
 \| f_n(H_n -E_n) - f_n(H-E)\|
 \leq g D_2(G) \tilde{C}_3(G)
 \int \frac{\left| \frac{\partial \tilde{f}}{\partial \bar
 z}(x+iy)\right|} {y^2} (1 + |z| m_1) \d x \d y \, .
\end{equation}
Using \eqref{eq:5.11} and \eqref{eq:5.12}, we conclude the proof
of Lemma~\ref{lem:5.3} with
$$
 \tilde{C}_6^f(G) = D_2(G) \tilde{C}_3(G)
 \int \frac{\left| \frac{\partial \tilde{f}}{\partial \bar
 z}(x+iy)\right|} {y^2} (1 + |z| m_1) \d x \d y \, .
$$
\end{proof}
%%%%%%%%%%%%

%%%%%%%%%%%%%%%%%%%%%  lemma  %%%%%%%%%%%%%%%%%%%%
\begin{lemma}\label{lem:5.4}
Suppose that the kernels $G^{(\alpha)}$ satisfy
Hypotheses~\ref{hypothesis:2.1} and \ref{hypothesis:3.1}(ii). Then,
there exists a constant $\tilde{C}_7^f(G)>0$ such that, for $g\leq
g_2$ and $n\geq 1$,
\begin{equation}\nonumber
 \| \d\Gamma( (\cnt)^2 w^{(2)})
 ( f_n(H_n - E_n) - f_n (H - E))\|
 \leq g\, \tilde{C}_7^f \sigma_n\, .
\end{equation}
\end{lemma}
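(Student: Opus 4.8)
The plan is to run the same Helffer--Sj\"ostrand (almost analytic extension) argument as in the proof of Lemma~\ref{lem:5.3}, but carrying the extra bounded factor $\d\Gamma((\cnt)^2 w^{(2)})$ along. Starting from the resolvent representation \eqref{eq:5.45} of $f_n(H_n-E_n)-f_n(H-E)$ and applying $\d\Gamma((\cnt)^2 w^{(2)})$ on the left, I would write
\begin{equation*}
\begin{split}
 & \d\Gamma\big((\cnt)^2 w^{(2)}\big)\big(f_n(H_n-E_n)-f_n(H-E)\big) \\
 & = \sigma_n\!\int \d\Gamma\big((\cnt)^2 w^{(2)}\big) (H_n - E_n - z\sigma_n)^{-1}
 \big(H_n - H + E - E_n\big)(H_0+1)^{-1} \\
 & \qquad\qquad \times (H_0+1)(H - E - z\sigma_n)^{-1}\,\d\tilde f(z)\, ,
\end{split}
\end{equation*}
and then estimate the three operator factors in the integrand separately.

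For the first factor I would invoke Lemma~\ref{lem:5.2}, which gives directly $\|\d\Gamma((\cnt)^2 w^{(2)})(H_n - E_n - z\sigma_n)^{-1}\| \leq 1 + |z|/|\Im z|$; this is the step that absorbs the operator $\d\Gamma((\cnt)^2 w^{(2)})$ at the price of only an $O(1+|z|/|\Im z|)$ factor, rather than the $1/(\sigma_n|\Im z|)$ one would get from a bare resolvent. For the middle factor I would use $H_n - H = -gH_I(\chi_{\sigma_n/2}(p_2)G)$ from \eqref{eq:5.28}, the bound $g\|H_I(\chi_{\sigma_n/2}(p_2)G)(H_0+1)^{-1}\| \leq gD_2(G)\sigma_n$ from \eqref{eq:5.47}, the estimate $|E - E_n| \leq gD_1(G)\sigma_n$ from Lemma~\ref{lem:5.1}, and $\|(H_0+1)^{-1}\| \leq 1$, to get $\|(H_n - H + E - E_n)(H_0+1)^{-1}\| \leq g(D_1(G)+D_2(G))\sigma_n$ for $g\leq g_2$. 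For the last factor I would use \eqref{eq:5.22}, namely $\|(H_0+1)(H-E-z\sigma_n)^{-1}\| \leq \tilde C_3(G)(1+|z|\sigma_n)/(|\Im z|\sigma_n)$.

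Multiplying the three bounds against the prefactor $\sigma_n$, the $\sigma_n$-powers collapse to a single net $\sigma_n$ (the $\sigma_n^2$ from the prefactor and the middle factor, against the $\sigma_n$ in the denominator of the last factor). Bounding $1+|z|\sigma_n \leq 1+|z|m_1$ via $\sigma_n \leq \sigma_1 < m_1$ makes the remaining $z$-integral independent of $n$; it converges because, by the almost-analyticity bound \eqref{eq:5.11}, $|\d\tilde f(z)|$ is dominated by $|\Im z|^2\,\d x\,\d y$, so the integrand behaves like $(|\Im z|+|z|)(1+|z|m_1)$ over the compact support of $\tilde f$ given by \eqref{eq:5.12}. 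This yields the claim with $\tilde C_7^f(G) = (D_1(G)+D_2(G))\,\tilde C_3(G)\int (1+|z|/|\Im z|)(1+|z|m_1)\,|\d\tilde f(z)|/|\Im z|$. I expect the only delicate point to be the bookkeeping of the $\sigma_n$-powers together with the uniform-in-$n$ convergence of the $z$-integral; the structural input that makes it work is precisely that, thanks to Lemma~\ref{lem:5.2}, the first factor contributes $O(1+|z|/|\Im z|)$ instead of a $1/\sigma_n$, which is exactly why the final estimate carries one extra power of $\sigma_n$ compared with Lemma~\ref{lem:5.3}.
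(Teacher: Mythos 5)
Your proposal is correct and follows essentially the same route as the paper's own proof: the same Helffer--Sj\"ostrand representation \eqref{eq:5.45}, with the factor $\d\Gamma\big((\cnt)^2 w^{(2)}\big)(H_n-E_n-z\sigma_n)^{-1}$ controlled by Lemma~\ref{lem:5.2}, the middle factor by \eqref{eq:5.28}, \eqref{eq:5.47} and Lemma~\ref{lem:5.1}, the last factor by \eqref{eq:5.22}, and the $z$-integral made convergent and $n$-independent via \eqref{eq:5.11}--\eqref{eq:5.12} and $\sigma_n\leq\sigma_1<m_1$, with the same net single power of $\sigma_n$. The only (harmless) difference is in the explicit constant, where you keep the $D_1(G)$ contribution from $|E-E_n|$ that the paper's stated constant leaves implicit.
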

%%%%%%%%%%%%%%%%%%%%%%%%%%%%%%%%%%%%%%%%%%%%%%%%%%
%
%
%
%%%%%%%%%%%%
\begin{proof}
We have
\begin{equation}\nonumber
\begin{split}
 & \d\Gamma( (\cnt)^2 w^{(2)})(f_n(H_n-E_n) -
 f_n(H-E)) \\
 & = \sigma_n \int \d\Gamma( (\cnt)^2w^{(2)})
 \frac{1}{H_n - E_n - z\sigma_n}
 (H_n - H + E_n -E)
 \frac{1}{H-E-z\sigma_n} \d\tilde{f}(z)\, .
\end{split}
\end{equation}
Combining Lemmata~\ref{lem:5.1} and \ref{lem:5.2} with
\eqref{eq:5.22} and \eqref{eq:5.45}-\eqref{eq:5.47}, we obtain
\begin{equation}\nonumber
\begin{split}
 & \|\d\Gamma\left( (\cnt)^2 w^{(2)}\right)
 \left( f_n(H_n-E_n) - f_n(H-E)\right) \|\\
 & \leq
 g D_2(G) \tilde{C}_3(G) \sigma_n
 \int \left| \frac{\partial \tilde{f}}{\partial \bar z}
 (x+iy)\right|
 \left( 1 +\frac{|z|}{y}\right)
 \left( \frac{1 + |z| m_1}{y}\right) \d x \d y\, .
\end{split}
\end{equation}
Using \eqref{eq:5.11} and \eqref{eq:5.12}, we conclude the proof
of Lemma~\ref{lem:5.4} with
\begin{equation}\nonumber
 \tilde{C}_7^f(G) = D_2(G) \tilde{C}_3(G)
 \int \left|\frac{\partial \tilde{f}}{\partial \bar z}
 (x+iy)\right|
 \left( 1 +\frac{|z|}{y}\right)
 \left( \frac{1 + |z| m_1}{y}\right) \d x \d y\, .
\end{equation}
\end{proof}
%%%%%%%%%%%%

%%%%%%%%%%%%%%%%%%%%%%%%%%%%%%%%%%%%%%%%%%%%%%%%%%%%%%%%%%%%%%%%%%%%%%%
%%%   rappel des resultats de bornes relatives obtenus dans [BG] %%%%%%
%%%%%%%%%%%%%%%%%%%%%%%%%%%%%%%%%%%%%%%%%%%%%%%%%%%%%%%%%%%%%%%%%%%%%%%
The following lemma was proved in
\cite[(2.53)-(2.54)]{BarbarouxGuillot2009}, and gives explicitly
the relative bound for $H_I$ with respect to $H_0$. Note that this
bound holds for any interaction operator $H_I$ of the form
\eqref{eq:HI-quadraticform-1}-\eqref{eq:HI-quadraticform-3}, as
soon as the kernels $G^{(\alpha)}$ fulfil
Hypothesis~\ref{hypothesis:2.1}.
\begin{lemma}\label{lem:5.5}
For any $\eta>0$, $\beta>0$, and $\psi\in\cD(H_0)$, we have
\begin{equation}\label{eq:2.53}
\begin{split}
 & \| H_I \psi\| \\
 & \leq
 6 \sum_{\alpha=1,2} \| G^{(\alpha)}\|^2
 \left(
 \frac{1}{2m_W} \left(\frac{1}{m_1^2} +1\right) + \frac{\beta}{2m_W
 m_1^2} +\frac{2\eta}{m_1{}^2}(1+\beta)\right)\|H_0\psi\|^2 \\
 &
 + \left
 (\frac{1}{2m_W}\left(1+\frac{1}{4\beta}\right)
 + 2\eta\left(1+\frac{1}{4\beta}\right)
 + \frac{1}{2\eta}
 \right)
 \|\psi\|^2\, .
\end{split}
\end{equation}
\end{lemma}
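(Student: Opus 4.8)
The plan is to establish \eqref{eq:2.53} (read with $\|H_I\psi\|^2$ on the left-hand side and the factor $6\sum_\alpha\|G^{(\alpha)}\|^2$ multiplying both terms on the right) directly, by the standard $N_\tau$-type (pull-through) estimates of constructive field theory, applied to $H_I$ monomial by monomial. By Remark~\ref{rem:reduction} we may take $\ell=1$, $\epsilon=+$, $\epsilon'=-$, so that $H_I=W_1+W_1^*+W_2+W_2^*$ with
\[
 W_1=\int G^{(1)}(\xi_1,\xi_2,\xi_3)\,b^*(\xi_1)c^*(\xi_2)a(\xi_3)\,\d\xi_1\d\xi_2\d\xi_3,\qquad W_2=\int G^{(2)}(\xi_1,\xi_2,\xi_3)\,b^*(\xi_1)c^*(\xi_2)a^*(\xi_3)\,\d\xi_1\d\xi_2\d\xi_3 .
\]
By the triangle inequality it suffices to bound $\|W_1^\sharp\psi\|$ and $\|W_2^\sharp\psi\|$ for $\psi\in\cD(H_0)$; the resulting mixed terms $\|H_0\psi\|\,\|\psi\|$ are then split by the weighted arithmetic--geometric mean inequality $xy\le\beta x^2+\tfrac{1}{4\beta}y^2$, used once with weight $\beta$ and once with a second weight $\eta$ — these are exactly the free parameters appearing in \eqref{eq:2.53} — which turns everything into the announced combination $A\|H_0\psi\|^2+B\|\psi\|^2$.

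The estimate of a single monomial proceeds by writing $\|W^\sharp\psi\|^2=\langle\psi,(W^\sharp)^*W^\sharp\psi\rangle$, normal-ordering, and using the pull-through identity $\int w^{(3)}(\xi_3)\|a(\xi_3)\psi\|^2\,\d\xi_3=\|(H_0^{(3)})^{1/2}\psi\|^2$ (and its CCR counterpart $\|a^*(h)\chi\|^2=\|a(h)\chi\|^2+\|h\|^2\|\chi\|^2$ when the bosonic factor is a creator), together with the fact that all the fermionic creation and annihilation operators are \emph{bounded}, $\|b^\sharp(\varphi)\|=\|c^\sharp(\varphi)\|=\|\varphi\|_{L^2}$. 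The leading contraction produces $\tfrac{1}{m_W}\|G^{(\alpha)}\|_{L^2}^2\,\langle\psi,H_0\psi\rangle$ (using $w^{(3)}\ge m_W$ and $0\le H_0^{(3)}\le H_0$), and then $\langle\psi,H_0\psi\rangle\le\|\psi\|\,\|H_0\psi\|$ plus the AM--GM step gives the standalone $\tfrac{1}{2m_W}$ pieces of the two coefficients in \eqref{eq:2.53}. The remaining contractions involve the massive-lepton number operator $N_1$; since $w^{(1)}\ge m_1$ one has $N_1\le H_0^{(1)}/m_1\le H_0/m_1$, and each power of $N_1$ costs a factor $m_1^{-1}$, accounting for the $m_1^{-2}$-dependent terms. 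The crucial point — this is the only genuinely delicate one — is that the neutrino variable never produces a power of $H_0$: although $w^{(2)}=|p_2|$ has no spectral gap, so that $N_2$ is \emph{not} relatively $H_0$-bounded, each neutrino line is controlled individually by $\|G^{(\alpha)}(\cdot,\cdot,\xi_3)\|_{L^2}$ thanks to the boundedness of the $c$-operators, and thus never has to be summed up into an $H_0$-controllable quantity. Collecting the four monomials, using $(\sum_{i=1}^4 a_i)^2\le4\sum_{i=1}^4 a_i^2$ and the finite combinatorial factor from the suppressed sums over $\ell$ and $\epsilon\neq\epsilon'$, yields the constant $6$ and the stated coefficients.

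In short, the obstacle is entirely in the bookkeeping that must keep the harmless but unbounded bosonic field as the one paired with $H_0$ (through the gap $m_W$), let the massive-lepton number enter only via $H_0/m_1$, and eliminate the massless neutrino against $\|G^{(\alpha)}\|_{L^2}$ without ever touching $H_0$; the detailed accounting of all the contractions and of the weights $\beta,\eta$ is carried out in \cite[(2.53)--(2.54)]{BarbarouxGuillot2009}. Since every constant obtained this way is independent of any ultraviolet or infrared cutoff in the momenta, the bound holds verbatim for each of the cutoff interactions $H_{I,\sigma_n}$ and $H_{I,n}^{n+1}$ used elsewhere in the paper, as already noted after the statement.
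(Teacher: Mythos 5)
Your outline is correct and takes essentially the same route as the paper, which gives no independent proof of this lemma but defers, exactly as you do for the detailed contraction bookkeeping, to the standard $N_\tau$-type estimates of \cite[(2.53)--(2.54)]{BarbarouxGuillot2009}. Your corrected reading of the statement (square on the left-hand side, prefactor $6\sum_{\alpha}\|G^{(\alpha)}\|^2$ multiplying both terms) is indeed the one consistent with \eqref{eq:3.26}, \eqref{eq:3.28} and the way the bound is invoked in \eqref{eq:A7}.
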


\vspace{1cm}

%%%%%%%%%%%%%%%%%%%%%%%%%%%%%%%%%%%%%%%%%%%%%%%%%%%%%%%%%%%%%%%%%%%%%%%
\section{}\label{appendix}
%%%%%%%%%%%%%%%%%%%%%%%%%%%%%%%%%%%%%%%%%%%%%%%%%%%%%%%%%%%%%%%%%%%%%%%

\noindent\textbf{Intervals/sequences}

\noindent$\gamma=1 - \delta/(2m_1-\delta)$

\noindent$\sigma_0=2m_1+1$, $\sigma_1= m_1-\delta/2$,
$\sigma_{n+1}=\gamma\sigma_n$ ($n\geq 1$)

\noindent$\tau = 1 -\delta/(2 (2m_1 - \delta))$

\medskip
%%%%%%%%%%%%%%%%%%%%%%%%%%%%%%%%%%%%%%%%%%%%%%%%%%%%%%%%%%%%%
%%%%%%%%%%%%%%%%%%%%%%%%%%%%%%%%%%%%%%%%%%%%%%%%%%%%%%%%%%%%%
\noindent\textbf{Functions}

\noindent$w^{(1)}(\xi_1) = (|p_1|^2 + m_1{}^2)^\frac12,\quad
w^{(2)}(\xi_2) = |p_2|,\quad w^{(3)}(\xi_3) = (|k|^2 +
m_W{}^2)^\frac12$

\noindent$\chi_0\in C^\infty(\R,\, [0,1])$, $\chi_0=1$ on
$(-\infty,\, 1]$.

\noindent$\chi_\sigma(p)= \chi_0(|p|/\sigma), \quad
\tilde{\chi}^\sigma(p) = 1 - \chi_\sigma(p)$

\noindent$\chi^{(\tau)}(\lambda) =
 \left\{
  \begin{array}{ll}
     1 & \mbox{ for } \lambda\in (-\infty,\tau]\\
     0 & \mbox{ for } \lambda \in[1,\infty)
  \end{array}
 \right.
$

\noindent$\chi_n^{(\tau)}(p_2) =
\chi^{(\tau)}\left(|p_2|/\sigma_n\right)$

\noindent$
 f(\lambda)= \left\{
  \begin{array}{ll}
             1 & \mbox{if }\lambda\in[(\gamma -\epsilon_\gamma)^2,\, \gamma+\epsilon_\gamma] \\
             0 & \mbox{if }\lambda >\gamma + 2\epsilon_\gamma\\
             0 & \mbox{if } \lambda<(\gamma-2\epsilon_\gamma)^2
  \end{array}
 \right. ,\quad f_n(\lambda) = f(\lambda/\sigma_n)
$

\medskip
%%%%%%%%%%%%%%%%%%%%%%%%%%%%%%%%%%%%%%%%%%%%%%%%%%
%%%%%%%%%%%%%%%%%%%%%%%%%%%%%%%%%%%%%%%%%%%%%%%%%%
\noindent\textbf{Spaces}

\noindent$\begin{array}{ll}
 \Sigma_1 = \R^3\times\{-1/2,1/2\}
   &\Sigma_2 = \R^3\times\{-1,0,1\}\\
 \Sigma_1^{\sigma} = \Sigma_1\cap \{(p_2,\, s_2);\,
|p_2|\geq\sigma_2\},
   & \Sigma_{1,\sigma} = \Sigma_1\cap \{(p_2,\, s_2);\, |p_2|
   <\sigma\} \\
 \Sigma_n^{n+1} = \Sigma_1\cap\{(p_2,\,s_2);\ \sigma_{n+1}\leq
|p_2|\leq \sigma_n\} & \\
 \mbox{Electron Fock space: }\gF_a(L^2(\Sigma_1)) & \\
 \mbox{Neutrino Fock space: }\gF_a(L^2(\Sigma_1)) & \\
 \mbox{Boson Fock space: }\gF_s(L^2(\Sigma_2)) & \\
 \gF_2^\sigma = \gF_a(L^2(\Sigma_1^\sigma)),
   &\gF_{2,\sigma} = \gF_a(L^2(\Sigma_{1,\sigma})) \\
 \gF^\sigma = \gF_a(L^2(\Sigma_1))\otimes \gF_2^\sigma \otimes \gF_W,
   &\gF_\sigma = \gF_{2,\sigma}\\
 \gF_n^{n+1} = \gF_a(L^2(\Sigma_n^{n+1})) &
   \end{array}$

\medskip
%%%%%%%%%%%%%%%%%%%%%%%%%%%%%%%%%%%%%%%%%%%%%%%%%%%%%%%%%%
%%%%%%%%%%%%%%%%%%%%%%%%%%%%%%%%%%%%%%%%%%%%%%%%%%%%%%%%%%
\noindent\textbf{Hamiltonians}

\noindent$
 \begin{array}{ll}
  H_0^{(2)\sigma}= \int_{|p_2|>\sigma} w^{(2)}(\xi_2)\, c^*(\xi_2) c(\xi_2)
 \d\xi_2 &
  H_{0,\sigma}^{(2)} = \int_{|p_2|\leq\sigma} w^{(2)}(\xi_2)\, c^*(\xi_2) c(\xi_2)
 \d\xi_2 \\
  H_0^\sigma=H_0|_{\gF^\sigma}  \\
 H_{I,\sigma}(G) = H_I(\tilde{\chi}^\sigma(p_2) G) & \\
 H_\sigma = H_0 + gH_{I,\sigma} &
 H^\sigma = H_0^{(1)} + H_{0}^{(2) \sigma} + H_0^{(3)} + gH_{I,\sigma} \\
 H_n = H_{\sigma_n} & H^n = H^{\sigma_n} \\
 H_0^n = H_0^{\sigma_n} & \\
 H_+^n = H^n - E^n = H^n -\inf\mathrm{Spec} (H^n) & \\
 H_{0,n}^{n+1}= \displaystyle\int_{\sigma_{n+1}\leq |p_2| <\sigma_n}
 \!\!\!\!\!\!\!\!w^{(2)}(\xi_2) c^*(\xi_2) c(\xi_2) \d \xi_2 & \\
 \tilde{H}_+^n = H_+^n\otimes \1_n^{n+1} + \1_n\otimes H_{0,n}^{n+1} & \\
 H_{I,n}^{n+1} = H_I\big((\tilde{\chi}^{\sigma_{n+1}}(p_2) -
 \tilde{\chi}^{\sigma_{n}}(p_2)) G )&
\end{array}
$

\medskip
%%%%%%%%%%%%%%%%%%%%%%%%%%%%%%%%%%%%%%%%%%%%%%%%%%%%%%%%%%
%%%%%%%%%%%%%%%%%%%%%%%%%%%%%%%%%%%%%%%%%%%%%%%%%%%%%%%%%%
\noindent\textbf{Conjugate operators}

\noindent$a_n^{(\tau)} = \chi_n^{(\tau)}(p_2) \frac12\left(
p_2\cdot i\nabla_{p_2} + i\nabla_{p_2}\cdot
p_2\right)\chi_n^{(\tau)}(p_2)$

\noindent$A_n^{(\tau)} = \1 \otimes
\d\Gamma(a_n^{(\tau)})\otimes\1$

%%%%%%%%%%%%%%%%%%%%%%%%%%%%%%%%%%%%%%%%%%%%%%%%%%
%%%%%%%%%%%   ACKNOWLEDGEMENTS   %%%%%%%%%%%%%%%%%
%%%%%%%%%%%%%%%%%%%%%%%%%%%%%%%%%%%%%%%%%%%%%%%%%%

\section*{Acknowledgements}
The work was done partially while J.-M.~B. was visiting the
Institute for Mathematical Sciences, National University of
Singapore in 2008. The visit was supported by the Institute. The
authors also gratefully acknowledge the Erwin Schr\"odinger
International Institute for Mathematical Physics, where part of
this work was done. W.H.~A. is supported by the German Research
Foundation (DFG). J.-M.~B. gratefully acknowledges financial 
support from Agence Nationale de la Recherche, 
via the project HAM-MARK ANR-09-BLAN-0098-01

%%%%%%%%%%%%%%%%%%%%%%%%%%%%%%%%%%%%%%%%%%%%%%%%%%%%%%%%%%%%%%%%%%
%%%%%%%%%%%%%%%%%%%%%%%%%%%%%%%%%%%%%%%%%%%%%%%%%%%%%%%%%%%%%%%%%%
%%%%%%%%%%%%%%%%%%%%%%%%%%%%%%%%%%%%%%%%%%%%%%%%%%%%%%%%%%%%%%%%%%


\begin{thebibliography}{10}



\bibitem{Ammari2004}
Z.~Ammari.
\newblock Scattering theory for a class of fermionic Pauli-Fierz
model.
\newblock {\em J. Funct. Anal.}, 208(2) (2004), 302--359.


\bibitem{Amouretal2007}
L.~Amour, B.~Gr{\'e}bert, and J.-C.~Guillot.
\newblock A mathematical model for the {F}ermi weak interactions.
\newblock {\em Cubo}, 9(2) (2007), 37--57.


\bibitem{Amreinetal1996}
W.~O.~Amrein, A.~Boutet~de~Monvel, and V.~Georgescu.
\newblock {\em {$C\sb 0$}-groups, commutator methods and spectral theory of
  {$N$}-body {H}amiltonians}, volume 135 of {\em Progress in Mathematics}.
\newblock Birkh\"auser Verlag, Basel, 1996.


\bibitem{Bachetal2006}
V.~Bach, J.~Fr{\"o}hlich, and A.~ Pizzo.
\newblock Infrared-finite algorithms in {QED}: the groundstate of an atom
  interacting with the quantized radiation field.
\newblock {\em Comm. Math. Phys.}, 264(1) (2006), 145--165.


\bibitem{Bachetal1999S}
V.~Bach, J.~Fr{\"o}hlich, and I.~M.Sigal.
\newblock Spectral analysis for systems of atoms and molecules coupled to the
  quantized radiation field.
\newblock {\em Comm. Math. Phys.}, 207(2) (1999), 249--290.


\bibitem{Barbarouxetal2004a}
J.-M.~Barbaroux, M.~Dimassi, and J.-C.~Guillot.
\newblock {Quantum electrodynamics of relativistic bound states with
cutoffs II.}
\newblock {\em Contemporary Mathematics}, 307(6) (2002), 9--14.


\bibitem{Barbarouxetal2004}
J.-M.~Barbaroux, M.~Dimassi, and J.-C.~Guillot.
\newblock Quantum electrodynamics of relativistic bound states with cutoffs.
\newblock {\em J. Hyperbolic Differ. Equ.}, 1(2) (2004), 271--314.


\bibitem{BarbarouxGuillot2009a}
J.-M.~Barbaroux and J.-C.~Guillot,
\newblock Limiting absorption principle at low energies for a
mathematical model of weak interactions: the decay of a boson.
\newblock {\em  C. R. Acad. Sci. Paris, Ser. I} 347, Issues 17-18 (2009), 1087--1092.


\bibitem{BarbarouxGuillot2009}
J.-M.~Barbaroux and J.-C.~Guillot,
\newblock Spectral theory for a
mathematical model of the weak interaction: The decay of the
intermediate vector bosons $\textbf{\textit{W}}^\pm$. I,
\newblock {\em Advances in Mathematical Physics} ID 978903 (2009),
and arXiv:0904.3171


\bibitem{Chenetal2009}
T.~Chen, J.~Faupin, J.~Fr\"ohlich and I.M.~Sigal,
\newblock Local decay in non-relativistic QED
\newblock arXiv:0911.0828, (2009).


\bibitem{DerezinskiGerard1999}
J.~Derezi{\'n}ski and C.~G{\'e}rard.
\newblock Asymptotic completeness in quantum field theory. {M}assive
{P}auli-{F}ierz {H}amiltonians.
\newblock {\em Rev. Math. Phys.}, 11(4) (1999), 383--450.


\bibitem{Derezinski2006}
J.~Derezi{\'n}ski,
\newblock {\em Introduction to representations of the canonical
commutation and anticommutation relations},
\newblock Large {C}oulomb systems,
\newblock Lecture Notes in Phys. 695, 63--143
\newblock Springer, Berlin, 2006.


\bibitem{DimassiGuillot}
M.~Dimassi and J.C.~Guillot,
\newblock {\em The quantum electrodynamics of relativistic states with cutoffs. I},
\newblock {\em Appl. Math. Lett.} \textbf{16}, no.4 (2003) , 551--555.


\bibitem{Frohlichetal2008}
J.~Fr{\"o}hlich, M.~Griesemer, and I.~M. Sigal.
\newblock Spectral theory for the standard model of non-relativistic {QED}.
\newblock {\em Comm. Math. Phys.} 283(3) (2008), 613--646.


\bibitem{GeorgescuGerard} V. Georgescu, C. G\'erard,
\newblock On the virial theorem in quantum mechanics
\newblock {\em Comm. Math. Phys.} 208 (1999), 275--281.


\bibitem{Gerard2008} C. G\'erard,
\newblock A proof of the abstract limiting absorption principles
by energy estimates
\newblock {\em J. Funct. Anal.} 254 (2008), no.~11, 2707--2724.


\bibitem{Georgescuetal2004}
V.~Georgescu, C.~G\'erard, J.~S.~M\o ller.
\newblock Spectral theory of massless Pauli-Fierz models.
\newblock {\em Comm. Math. Phys.}, 249(1) (2004), 29--78.


\bibitem{GoleniaJecko2007} S. Gol\'enia, T. Jecko,
\newblock A new look at Mourre's commutator theory,
\newblock {\em Complex and Oper. Theory} 1 (2007), no~3, 399--422.


\bibitem{GreinerMuller1989}
W.~Greiner and B.~M{\"u}ller.
\newblock {\em Gauge Theory of weak interactions}
\newblock Springer, Berlin, 1989.


\bibitem{GLL}
M.~Griesemer, E.H.~Lieb and M.~Loss,
\newblock Ground states in non-relativistic quantum electrodynamics,
\newblock {\em Inv. Math} 145 (2001), 557--595.


\bibitem{Kato1966}
T.~Kato.
\newblock {\em Perturbation Theory for Linear Operators},
volume 132 of {\em Grundlehren der mathematischen
{W}issenschaften}.
\newblock Springer-Verlag, Berlin, 1 edition, 1966.


\bibitem{ReedSimon1975}
M.~Reed and B.~Simon.
\newblock {\em Methods of modern mathematical physics. {II}. {F}ourier
  analysis, self-adjointness}.
\newblock Academic Press, New York, 1975.


\bibitem{Sahbani1997}
J.~Sahbani.
\newblock The conjugate operator method for locally regular {H}amiltonians.
\newblock {\em J. Operator Theory}, 38(2) (1997), 297--322.


\bibitem{WeinbergI2005}
S.~Weinberg.
\newblock {\em The quantum theory of fields. {V}ol. {I}}.
\newblock Cambridge University Press, Cambridge, 2005.
\newblock Foundations.


\bibitem{WeinbergII2005}
S.~Weinberg.
\newblock {\em The quantum theory of fields. {V}ol. {II}}.
\newblock Cambridge University Press, Cambridge, 2005.
\newblock Modern applications.

\end{thebibliography}
\end{document}